\setlist{nolistsep}
\DeclareFontFamily{OT1}{pzc}{}
\DeclareFontShape{OT1}{pzc}{m}{it}{<->s*[1.30]pzcmi7t}{}
\DeclareMathAlphabet{\mathpzc}{OT1}{pzc}{m}{it}
\def\ct#1{\emath{\mathpzc{#1}}}
\title{Monadic type-and-effect soundness} 
\author{Francesco Dagnino}{DIBRIS, Universit\`a di Genova, Italy}{francesco.dagnino@dibris.unige.it}{https://orcid.org/0000-0003-3599-3535}{}
\author{Paola Giannini}{DiSSTE, Universit\`a del Piemonte Orientale, Italy}{paola.giannini@uniupo.it}{https://orcid.org/0000-0003-2239-9529}{}
\author{Elena Zucca}{DIBRIS, Universit\`a di Genova, Italy}{elena.zucca@unige.it}{https://orcid.org/0000-0002-6833-6470}{}
\authorrunning{F. Dagnino, P. Giannini, and E. Zucca} 
\keywords{Effects, monads, type soundness}
\newcommand{\Space}{\hskip 0,7em}
\newcommand{\BigSpace}{\hskip 1,5em}
\newcommand{\HugeSpace}{\hskip 3,5em}
\newcommand{\refToRule}[1]{\textsc{\small (#1)}}
\newcommand{\refItem}[2]{\cref{#1}(\ref{#1:#2})} 
\newcommand{\proofcase}[1]{\textit{#1}.}
\newenvironment{proofOf}[1]{\begin{proof}[Proof of \cref{#1}]}{\end{proof}} 
\newcommand{\emath}[1]{\ensuremath{#1}\xspace}
\newcommand{\ple}[1]{\emath{{\langle #1 \rangle}}}
\newcommand{\Pair}[2]{\ple{#1,#2}}
\newcommand{\fun}[3]{\emath{#1{:}\,#2 \rightarrow #3}} 
\newcommand{\funtype}[2]{\emath{#1\rightarrow #2}} 
\newcommand{\pfun}[3]{\emath{#1{:} #2 \rightharpoonup   #3}}
\newcommand{\N}{\mathbb{N}} 
\newcommand{\Subst}[3]   {#1[#2/#3]}
\newcommand{\NamedRule}[4]{\scriptstyle{\textsc{(#1)}}\
\displaystyle                  
\frac{#2}{#3}         
\begin{array}{l}
#4     
\end{array}
}
\newenvironment{grammatica}{\begin{array}{lcll}}{\end{array}}
\newcommand{\produzione}[3]{#1&::=&#2&\mbox{#3}}
\newcommand{\kw}[1]{\texttt{#1}}
\newcommand{\aux}[1]{\mathsf{#1}}
\newcommand{\mvar}[1]{\mathit{#1}}
\newcommand{\produzioneinline}[2]{#1::=#2}
\newcommand{\seq}[1]{\overline{#1}} 
\newcommand{\nt}[3]{\emath{#1 : #2 \Rightarrow #3}} 
\def\op{^{\mbox{\normalfont\scriptsize op}}}
\newcommand{\Set}{\ct{Set}} 
\newcommand{\Pos}{\ct{Pos}} 
\newcommand{\Id}{\mathsf{Id}} 
\newcommand{\id}{\mathsf{id}} 
\newcommand{\mnd}{\Mnd\mfun}
\newcommand{\mfun}{M}
\newcommand{\mun}{\eta}
\newcommand{\mmul}{\mu}
\newcommand{\mkl}[1]{{#1}^\dagger}
\newcommand{\mklind}[2]{{#1}^\dagger_{#2}}
\newcommand{\mbind}{\mathbin{\gg=}} 
\newcommand{\Mmap}[3][]{\aux{map}\ifblank{#1}{}{_{#1}}\, #2\, #3}
\newcommand{\Mnd}[1]{\mathbb{#1}}
\newcommand{\Mun}[1]{\mun^{\Mnd{#1}}}
\newcommand{\Mmul}[1]{\mmul^{\Mnd{#1}}}
\newcommand{\ExSet}{\aux{Exc}}
\newcommand{\ExceptFun}[1][]{E\ifblank{#1}{}{_{#1}}}
\newcommand{\exc}{\aux{e}}
\newcommand{\excset}{\aux{E}}
\newcommand{\noEx}{\aux{none}} 
\newcommand{\PowerFun}{P}
\newcommand{\ListFun}{L} 
\newcommand{\elist}{\epsilon}
\newcommand{\cons}{\colon} 
\newcommand{\AllLift}[1][]{\mathbf{\forall}\ifblank{#1}{}{^{#1}}}
\newcommand{\ExLift}[1][]{\mathbf{\exists}\ifblank{#1}{}{^{#1}}} 
\newcommand{\DistFun}{D}
\newcommand{\Supp}{\aux{supp}} 
\newcommand{\StateFun}[1][]{S\ifblank{#1}{}{_{#1}}} 
\newcommand{\StateSet}{\aux{S}} 
\newcommand{\OutSet}{\aux{Out}}
\newcommand{\OutCons}{\cdot}
\newcommand{\OutEmpty}{\varepsilon}
\newcommand{\OutFun}{O}
\newcommand{\TreeFun}{T}
\newcommand{\tree}[2]{#1\triangleright #2} 
\newcommand{\etree}{\bot} 
\newcommand{\lang}{\mathcal{L}}
\newcommand{\Exp}{\aux{Exp}}
\newcommand{\expr}{e}
\newcommand{\Val}{\aux{Val}} 
\newcommand{\val}{v} 
\newcommand{\Res}{\aux{Res}}
\newcommand{\res}{r}
\newcommand{\Conf}{\aux{Conf}}
\newcommand{\conf}{c} 
\newcommand{\valtoexp}{\aux{ret}} 
\newcommand{\Wrng}{\aux{Wr}}
\newcommand{\wrng}{\aux{wrong}} 
\newcommand{\mexpr}{\textsc{e}} 
\newcommand{\mval}{\textsc{v}} 
\newcommand{\mres}{\textsc{r}} 
\newcommand{\mconf}{\textsc{c}} 
\newcommand{\metc}[1]{\hat{#1}}
\newcommand{\mrtc}[1]{\hat{#1}}
\newcommand{\mrtcind}[2]{\hat{#1}_#2}
\newcommand{\mvtr}[1]{\hat{#1}} 
\newcommand{\finsem}[1]{\infsem[\star]{#1}}  
\newcommand{\infsem}[2][\infty]{\llbracket #2 \rrbracket_{#1}} 
\newcommand{\ehole}{[\ ]} 
\newcommand{\red}{\to}
\newcommand{\purered}{\red_p} 
\newcommand{\mrun}[1]{\emath{\aux{run}_{#1}}} 
\newcommand{\redstar}{\red^\star} 
\newcommand{\tredfun}{\aux{step}}
\newcommand{\tredfunind}[1]{\aux{step}^{#1}}
\newcommand\xxrightarrow[2][]{\mathrel{%
  \setbox2=\hbox{\stackon{\scriptstyle#1}{\scriptstyle#2}}%
  \stackunder[0pt]{%
    \xrightarrow{\makebox[\dimexpr\wd2\relax]{$\scriptstyle#2$}}%
  }{%
   \scriptstyle#1\,%
  }%
}}
\newcommand{\tred}{\xxrightarrow[\tredfun]{}}
\newcommand{\tredstar}{\tred^\star}
\newcommand{\Red}{\Rightarrow}
\newcommand{\Redstar}{\Red^\star} 
\newcommand{\lambdaEff}{\Lambda_{\Sigma}}
\newcommand{\f}{\mvar{f}}
\newcommand{\x}{\mvar{x}}
\newcommand{\y}{\mvar{y}}
\newcommand{\ve}{\mvar{v}} 
\newcommand{\e}{\mvar{e}}
\newcommand{\opg}{\mvar{op}} 
\newcommand{\Fun}[2]{\lambda #1.#2} 
\newcommand{\RecFun}[3]{\kw{rec}\,#1.\Fun{#2}{#3}}
\newcommand{\App}[2]{#1\,#2} 
\newcommand{\Ret}[1]{\kw{return}\ #1}
\newcommand{\Rt}[1]{\kw{ret}\ #1}
\newcommand{\Do}[3]{\kw{do}\ #1 = #2\kw{;}\ #3} 
\newcommand{\T}{\mvar{T}}
\newcommand{\TVar}[2]{#2:#1}
\newcommand{\TVars}[2]{\seq#2{:}\seq#1}
\newcommand{\TEff}[2]{#1{!}#2}
\newcommand{\FunType}[3]{#1{\rightarrow}_{#2}#3}
\newcommand{\eff}{\mvar{E}} 
\newcommand{\EffSet}{\aux{Eff}}
\newcommand{\eZero}{\epsilon}
\newcommand{\EComp}[2]{#1{\cdot}#2}
\newcommand{\SetOf}[1]{\{#1\}}
\newcommand{\Cons}[2]{#1{:}#2}
\newcommand{\IsWFExp}[3]{#1\vdash#2:#3}
\newcommand{\IsWFGroundExp}[2]{\vdash#1:#2}
\newcommand{\IsEffExp}[4]{#1\vdash#2:\TEff{#3}{#4}}
\newcommand{\IsEffGroundExp}[3]{\vdash#1:\TEff{#2}{#3}}
\newcommand{\subt}{\leq}
\newcommand{\sube}{\subseteq}
\newcommand{\SubT}[2]{#1\subt#2}
\newcommand{\SubTE}[2]{#1\subt#2}
\newcommand{\SubE}[2]{#1\sube#2}
\newcommand{\SubH}[2]{#1\ll#2}
\newcommand{\TS}{\Theta} 
\newcommand{\Types}{\aux{Ty}}
\newcommand{\ty}{\tau}
\newcommand{\MEff}{\mathcal{E}}
\newcommand{\Eff}{\aux{Eff}}
\newcommand{\ef}{\varepsilon}
\newcommand{\eord}{\preceq}
\newcommand{\emul}{\cdot}
\newcommand{\eun}{1} 
\newcommand{\VWT}[1][]{\aux{WT}^{\aux{V}}\ifblank{#1}{}{_{#1}}}
\newcommand{\EWT}[1][]{\aux{WT}^{\aux{E}}\ifblank{#1}{}{_{#1}}}
\newcommand{\CWT}[1][]{\aux{WT}^{\aux{C}}\ifblank{#1}{}{_{#1}}}
\newcommand{\PW}{\mathcal{P}}
\newcommand{\img}[1]{\PW^{#1}} 
\newcommand{\mlift}[1][]{\lambda\ifblank{#1}{}{^{#1}}} 
\newcommand{\mord}{\sqsubseteq}
\newcommand{\msup}{\bigsqcup} 
\newcommand{\mbot}{\bot}
\newcommand{\mctr}{\aux{res}} 
\newcommand{\X}{\mvar{X}}
\newcommand{\Nat}{\texttt{Nat}}
\newcommand{\Bool}{\texttt{Bool}}
\newcommand{\Zero}{0}
\newcommand{\Succ}[1]{\texttt{succ}\, #1}
\newcommand{\Su}[1]{\texttt{s}\, #1}
\newcommand{\SuccSeq}[2]{\texttt{succ}^{#1}\, #2}
\newcommand{\nat}[1]{\hat{#1}}
\newcommand{\Pred}[1]{\texttt{pred}\, #1}
\newcommand{\IsZero}[1]{\texttt{iszero}\ #1}
\newcommand{\If}[3]{\kw{if}\ #1\ \kw{then}\ #2\ \kw{else}\ #3}
\newcommand{\True}{\texttt{true}}
\newcommand{\False}{\texttt{false}}
\newcommand{\Tru}{\texttt{t}}
\newcommand{\Fls}{\texttt{f}}
\newcommand{\Seq}[2]{#1\texttt{;}#2}
\newcommand{\Bot}{\texttt{Bot}}
\newcommand{\Unit}{\texttt{Unit}}
\newcommand{\unit}{\texttt{unit}}
\newcommand{\withindex}[2]{#1\langle#2\rangle}
 \newcommand{\raiseop}[1]{\withindex{\tt raise}{#1}}
\newcommand{\predfun}{\aux{predfun}}
\newcommand{\PredZeroExc}{\aux{PredZero}}
\newcommand{\chooseop}{{\tt choose}}
\newcommand{\List}[1]{[\,#1\,]}
\newcommand{\chooseincr}{\aux{chfun^{\uparrow}}}
\newcommand{\choosedecr}{\aux{chfun^{\downarrow}}}
\newcommand{\writeop}[1]{\withindex{\tt write}{#1}}
\newcommand{\loc}{\ell}
\newcommand{\LocSet}{\aux{Loc}}
\newcommand{\writedecr}{\aux{wfun^{\downarrow}}}
\newcommand{\writeincr}{\aux{wfun^{\uparrow}}}
\newcommand{\extract}[1]{\aux{extract}(#1)}
\newcommand{\Even}[1]{\aux{even}(#1)}
\newcommand{\effsem}[1]{\llbracket #1 \rrbracket}
\newcommand{\hc}{\mvar{c}}
\newcommand{\handler}{\mvar{h}}
\newcommand{\Handler}[3]{#1, #2\mapsto#3} 
\newcommand{\With}[2]{\kw{handle}\ #2\ \kw{with}\  #1}
\newcommand{\WithLong}[4]{\kw{handle}\ #4\ \kw{with}\  \Handler{#1}{#2}{#3}}
\newcommand{\HC}[4]{#1(#2)\mapsto_{#4}#3} 
\newcommand{\mode}{\mu}
\newcommand{\Continue}{\aux{c}}
\newcommand{\Stop}{\aux{s}}
\newcommand{\cfilter}{\mvar{C}}
\newcommand{\CFilter}[3]{#1\mapsto_{#2}#3}
\newcommand{\hfilter}{\mvar{H}}
\newcommand{\HFilter}[2]{#1, #2} 
\newcommand{\FilterF}[1]{\mathcal{F}_{#1}}
\newcommand{\FilterFun}[2]{\FilterF{#2}(#1)}
\newcommand{\FilterFEx}[1]{\widehat{\mathcal{F}}_{#1}}
\newcommand{\FilterFunEx}[2]{\FilterFEx{#2}(#1)}
\newcommand{\InfConc}[1]{\bullet^\infty#1}
\newcommand{\IsWFHandler}[5]{#1;#2\vdash#3:\TEff{#4}{#5}}
\newcommand{\IsWFClause}[4]{#1;#2\vdash#3:#4}
\newcommand{\infEff}{\gamma}
\newcommand{\str}{\mathit{s}}
\begin{document}

\maketitle

\begin{abstract}
We introduce the abstract notions of \emph{monadic operational semantics}, a small-step semantics where computational effects are modularly modeled by a monad, and \emph{type-and-effect system}, including \emph{effect types} whose interpretation lifts well-typedness to its monadic version. In this meta-theory, as usually done in the non-monadic case, we can express progress and subject reduction properties and provide a proof, given once and for all, that they imply soundness. 

The approach is illustrated on a lambda calculus with generic effects.  We equip the calculus with an expressive type-and-effect system, and provide proofs of progress and subject reduction which are parametric on the interpretation of effect types.   In this way, we obtain as instances many significant examples, such as  checking exceptions, preventing/limiting non-determinism, constraining order/fairness of outputs on different locations.    We also provide an extension with constructs to raise and handle computational effects, which can be instantiated  to model different policies.
\end{abstract}

\maketitle


\section{Introduction}
It would be hard to overstate the impact on foundations of programming languages of, on one hand, the idea that computational effects can be modeled by monads \cite{Moggi89,Moggi91},  and, on the other hand, the technique based on progress and subject reduction to prove the soundness of a type system with respect to a small-step  operational  semantics \cite{WrightF94}.

 Moggi's seminal work \cite{Moggi89,Moggi91}  recognized monads as the suitable structure to modularly describe the denotational semantics of effectful languages. 
The key idea was the distinction between \emph{pure} (effect-free) and \emph{monadic} (effectful)  expressions, also called \emph{computations}, the latter getting semantics in a monad. 
Haskell has firstly\footnote{Many other languages have then supported a monad pattern, e.g., Scheme, Python, Racket, Scala, F\#.} shown that such  an  approach can be fruitfully  adopted in  a mainstream language,  through a monad type constructor allowing to encapsulate effectful code. 
However, the structure of a monad\footnote{In Haskell, methods of the \lstinline{Monad} typeclass.} does not include operations for \emph{raising} effects, which need to be defined ad-hoc in instances.  
\emph{Algebraic and generic effects} \cite{PlotkinP01,PlotkinP02,PlotkinP03}, instead, explicitly consider operations to raise effects, 
 interpreted by additional structure on the monad.  
Such an approach, combined with handlers \cite{PlotkinPretnar09,PlotkinPretnar13,BauerP15,Pretnar15}, has been exploited in  fully-fledged programming languages, e.g., in Scala and \mbox{OCaml 5.} 

 To provide guarantees on, besides the result, the computational effects possibly raised by a computation, 
 type systems are generalized to \emph{type-and-effect systems}.   A great many of  these  have been designed for specific calculi, modelling effects by relying on 
auxiliary structures, e.g., memory in imperative calculi, and providing ad-hoc soundness proofs; 
Katsumata \cite{Katsumata14} has provided a unified view of such systems, however based on denotational semantics.

In this paper, instead, we provide an \emph{operational meta-theory of monadic type-and-effect soundness}, 
analogous to the  one mentioned above for usual   type soundness based on small-step semantics, progress and subject reduction \cite{WrightF94}.  
To this end,  we provide abstract notions of small-step monadic semantics, type-and-effect system, and soundness, \mbox{as detailed below.}\footnote{The term ``effect'' is used in literature both as synonym of computational effect, and in the context of type-and-effect systems, as a static approximation of the former.   We will use ``effect'' when there is no ambiguity, otherwise  ``computational effect'' and ``effect type'', respectively.}

\smallskip\noindent
\textbf{Operational semantics}
We design a language semantics which is \emph{monadic}, since effects are, as customary, expressed by a monad, and simultaneously \emph{small-step}, since we define sequences of reduction steps.
To this end, we start from a reduction from language expressions to monadic expressions (in a given monad)  required to be deterministic, and extend such a relation to a \emph{total} function, so to be able to combine steps by Kleisli composition, similarly to the approach in \cite{GavazzoTV24}. 
In this way, reduction sequences are always infinite, so termination is conventionally represented by monadic elements called \emph{results},
 which always reduce to themselves without raising any effect.  
On top of the reduction, we define the \emph{finitary} semantics of an expression, which is either the monadic result reached in many steps, if any, or divergence.  This semantics does not describe the computational effects raised by infinite computations. Hence, we define an \emph{infinitary} semantics, obtained, as customary, as the supremum of a chain of approximants, provided that the monad has the necessary structure.  

\smallskip\noindent
\textbf{Type-and-effect system}
As done in \cite{DagninoBZD20,Dagnino22} for standard type systems, 
we abstractly model a \emph{type-and-effect system} as a family of predicates over expressions, indexed by types and \emph{effect types}, statically approximating the computational effects that  may  be  raised during evaluation. 
Effect types are required to form an ordered monoid, as typically assumed in effect systems \cite{NielsonN99,MarinoM09}  and proposed as algebraic structure by \cite{Katsumata14}. 
The relation between an effect type and the allowed computational effects is  specified by a family of  \emph{predicate liftings} \cite{Jacobs16}. In this way the transformation from a predicate to a monadic one associated to a given effect type is independent from the predicate and its universe. In other words, the transformation can be seen as the semantics of the effect type.

\smallskip\noindent
\textbf{Soundness}
We provide abstract definitions of monadic progress and monadic subject reduction, and a proof, given once and for all, that they imply soundness. 
The latter means that, if a monadic element is the result of a well-typed expression,  then it should be well-typed,  that is, satisfy the lifting through the effect type of \mbox{well-typedness of values.}
 
We illustrate the approach on  $\lambdaEff$,   a lambda calculus with generic effects, equipped with an expressive type-and-effect system. We provide proofs of progress and subject reduction parametrically on the interpretation of effect types.  In this way, we obtain as instances many significant examples, such as  checking exceptions, preventing/limiting non-determinism, constraining order/fairness of outputs on different locations.   We also provide an extension with constructs to handle effects, which can be instantiated as well to model different policies. 

\smallskip\noindent
\textbf{Outline} 
\cref{sect:monads} reports the background on monads. \cref{sect:monadic-sem} introduces monadic operational semantics, exemplified through  $\lambdaEff$ in \cref{sect:lang}, where we also design  a type-and-effect system, discussing its soundness.
The approach is formalized by the abstract framework in \cref{sect:monadic-ty}; the proof technique introduced there is applied in \cref{sect:results} to $\lambdaEff$. Finally, in \cref{sect:handlers} we enhance the example by handlers, and in \cref{sect:related} we discuss related and future work, and summarize the contributions. Proofs omitted from \cref{sect:results,sect:handlers} can be found in \cref{app:lambda-results,app:handlers-results}.


\section{Preliminaries on monads}\label{sect:monads}
Monads \cite{EilenbergM65,Street72} are a fundamental notion in category theory, enabling an abstract and unified study of algebraic structures. 
Since Moggi's seminal papers \cite{Moggi89,Moggi91}, they have also  become a major tool in computer science, especially for describing the semantics of computational effects, and integrating them in programming languages in a structured and principled way. 
In this section, we recall basic notions about monads, and provide some examples. 
We will focus on monads on the category of sets and functions, denoted by \Set, 
referring the reader to standard textbooks \cite{Riehl17} for a detailed introduction in full generality. 

A \emph{monad} $\mnd = \ple{\mfun,\mun,\mmul}$ (on \Set) consists of 
a functor \fun{\mfun}{\Set}{\Set} and two natural transformations \nt{\mun}{\Id}{\mfun} and \nt{\mmul}{\mfun^2}{\mfun} 
such that, for every set $X$, the following diagrams commute: 
\begin{quoting}
\begin{math}\vcenter{\xymatrix{
  \mfun X \ar[r]^-{\mun_{\mfun X}} \ar[rd]_-{\id_{\mfun X}} 
& \mfun^2 X \ar[d]_-{\mmul_X} 
& \mfun X \ar[l]_-{\mfun\mun_X} \ar[ld]^-{\id_{\mfun X}} 
\\ 
& \mfun X 
}} \qquad 
\vcenter{\xymatrix{
  \mfun^3 X \ar[r]^-{\mfun\mmul_X} \ar[d]_-{\mmul_{\mfun X}} 
& \mfun^2 X \ar[d]^-{\mmul_X} 
\\
  \mfun^2 X \ar[r]^-{\mmul_X}
& \mfun X 
}}
\end{math}
\end{quoting}
The functor $\mfun$ specifies, for every set $X$, a set $\mfun X$ of monadic elements built over $X$, in a way  that is  compatible with functions. 
The map $\mun_X$, named \emph{unit}, embeds elements of $X$ into monadic elements in $\mfun X$, and 
the map $\mmul_X$, named \emph{multiplication}, flattens monadic elements built on top of other monadic elements into plain monadic elements.

From these data, one can derive an operation on functions of type $X\to\mfun Y$, dubbed \emph{Kleisli extension}, which is crucial for modelling computational 
effects using monads. 
For all sets $X,Y$, we have a function 
\fun{\mkl{(-)}}{(X\to\mfun Y)}{(\mfun X \to \mfun Y)}, defined by 
$\mkl{f} = \mmul_Y \circ \mfun f$, that is, 
first we lift $f$ through $\mfun$ to apply it to monadic elements and then we flatten the result using $\mmul_Y$. 
It is easy to see that the operation $\mkl{(-)}$ satisfies the following equations for all \fun{f}{X}{\mfun Y} and \fun{g}{Y}{\mfun Z}:
\begin{quoting}
\begin{math}
\mklind{\mun}{X} = \id_{\mfun X} \qquad 
\mkl{f}\circ \mun_X = f \qquad 
\mkl{g}\circ \mkl{f} = \mkl{ (\mkl{g}\circ {f}) } 
\end{math}
\end{quoting}
Actually, a monad can be equivalently specified in the form of a \emph{Kleisli triple} \ple{\mfun,\mun,\mkl{(-)}} \cite{Manes76}, where 
$\mfun$ is a mapping on sets, 
$\mun$ is a family of functions \fun{\mun_X}{X}{\mfun X}, for every set $X$, 
and $\mkl{(-)}$ is a family of functions \fun{\mkl{(-)}}{(X\to\mfun Y)}{(\mfun X \to \mfun Y)}, for all sets $X,Y$, satisfying the three equations above. 
In particular we have $\mmul_X = \mklind{\id}{\mfun X}$. 

Functions of type $X \to \mfun Y$ are called \emph{Kleisli functions} and play a special role: 
they can be regarded as ``effectful functions'' from $X$ to $Y$, raising effects described by the monad $\mnd$. 
Indeed, from the Kleisli extension, we can define a composition on Kleisli functions, known as Kleisli composition: 
given \fun{f}{X}{\mfun Y} and \fun{g}{Y}{\mfun Z} we set 
\begin{quoting}
$g \ast f = \mkl{g}\circ f = \mmul_Z \circ \mfun g \circ f$
\end{quoting} 
Intuitively, $g\ast f$ applies $f$ followed by $g$, sequentially composing the effects they may raise. 
It is immediate to see that Kleisli composition is associative and  $\mun_X$ is the identity Kleisli function on the set $X$, that is, $\mun_X$ is the function raising no effects.

We introduce some useful notation, corresponding to standard operations of monadic types in languages, where such types are assigned to expressions with effects.
Given $\alpha\in \mfun X$, \fun{f}{X}{\mfun Y} and \fun{g}{X}{Y}, we set 

\begin{quoting}
\begin{math}
\begin{array}{ll}
\fun{-\mbind-}{\mfun X}{(X\to\mfun Y)\to\mfun Y}&\alpha \mbind f = \mkl{f}(\alpha)\\
\fun{\aux{map}}{(X\to Y)}{\mfun X \to \mfun Y}&\Mmap{g}{\alpha} = \mfun g(\alpha)  
\end{array}
\end{math}
\end{quoting}
The operator $\mbind$ is also called $\aux{bind}$. As its definition shows, it can be seen as an alternative description of the Kleisli extension, where the parameters are taken in inverse order.  
This view corresponds, intuitively, to the sequential composition of two expressions with effects, where the latter  depends on a parameter \emph{bound} to the result of the former. 
The operator $\aux{map}$ describes the effect of the functor $\mfun$ on functions. That is, the lifting of function $g$ through $\mfun$ is applied to a monadic value $\alpha$. 
Note that $\aux{bind}$ and $\aux{map}$ are interdefinable: 
\begin{quoting}
${\alpha \mbind f = \mmul_Y(\Mmap{f}{\alpha})}$\HugeSpace
${\Mmap{g}{\alpha} = \alpha \mbind (\mun_Y \circ g)}$
\end{quoting}
Furthermore, we can express Kleisli composition using $\aux{bind}$: 
$(g\ast f)(x) = f(x) \mbind g$. 
%

In the following examples we characterize the monads by defining $\aux{bind}$ rather than multiplication $\mmul$ since this is often more insightful, and customary in programming languages.   

\begin{example}[Exceptions] \label{ex:exception-mnd} 
Let us fix a set $\ExSet$. 
The monad $\Mnd{\ExceptFun[\ExSet]} = \ple{\ExceptFun[\ExSet],\Mun{\ExceptFun[\ExSet]},\Mmul{\ExceptFun[\ExSet]}}$ is given by 
$\ExceptFun[\ExSet] X = \ExSet + X$, and 
\begin{quoting}
\begin{math}
\Mun{\ExceptFun[\ExSet]}(x) = \iota_2(x) 
\HugeSpace
\alpha \mbind f = \begin{cases}
f(x)\ \mbox{if}\ \alpha = \iota_1(x) \\
\alpha\ \text{otherwise ($\alpha = \iota_2(\exc)$ for some $\exc\in \ExSet$)} 
\end{cases}  
\end{math} 
\end{quoting}
where $+$ denotes disjoint union (coproduct) and $\iota_1,\iota_2$ the left and right injections, respectively. 
We will omit the reference to the set $\ExSet$ when it is clear from the context. 
\end{example}

\begin{example}[Classical Non-Determinism]\label{ex:pow-mnd}
The monad $\Mnd\PowerFun = \ple{\PowerFun,\Mun\PowerFun,\Mmul\PowerFun}$ is given by 
$\PowerFun X = \wp(X)$, that is, $\PowerFun X$ is the set of all subsets of $X$, and 
\begin{quoting}
\begin{math}
\Mun\PowerFun(x) =\{x\} \HugeSpace
\alpha \mbind f  = \bigcup_{x\in\alpha} f(x) 
\end{math} 
\end{quoting}
A variant of this monad is 
the list monad $\Mnd\ListFun = \ple{\ListFun,\Mun\ListFun, \Mmul\ListFun}$, where 
the set $\ListFun X$ of (possibly infinite) lists over $X$ is coinductively defined by the following rules: 
$\elist\in\ListFun(X)$ and, 
if $x\in X$ and $l\in\ListFun(X)$, then $x\cons l \in\ListFun(X)$. 
We use the notation $[x_1,\ldots,x_n]$ to denote the finite list $x_1\cons \ldots\cons x_n\cons\elist$. 
Then, the unit is given by 
$\Mun\ListFun_X(x) = [x]$ and the monadic bind is corecursively defined by the following clauses: 
$\elist\mbind f = \elist$ and 
$(x\cons l)\mbind f = f(x)(l\mbind f)$, 
where juxtaposition denotes the concatenation of possibly infinite lists. 
\end{example} 

\begin{example}[Probabilistic Non-Determinism]\label{ex:prob-mnd}
Denote by 
$\DistFun X$ the set of probability subdistributions $\alpha$ over $X$ with countable support, i.e.,
\fun{\alpha}{X}{[0..1]} with $\sum_{x\in X}\alpha(x) \leq 1$ and $\Supp(\alpha) = \{ x \in X \mid  \alpha(x)\ne 0 \}$ countable set. 
We write $r\cdot\alpha$ for the pointwise multiplication of a subdistribution $\alpha$ with a number $r\in[0,1]$. 
The monad $\Mnd\DistFun = \ple{\DistFun,\Mun\DistFun,\Mmul\DistFun}$ is given by 
\begin{quoting}
\begin{math}
\Mun\DistFun(x) = y \mapsto \begin{cases}
  1 & y = x \\
  0 & \text{otherwise}
\end{cases} \HugeSpace
\alpha \mbind f = \sum_{x\in X} \alpha(x)\cdot f(x) 
\end{math}
\end{quoting}
\end{example}

\begin{example}[Output/Writer]
\label{ex:output-mnd} 
Let \ple{\OutSet,\OutCons,\OutEmpty} be a monoid, e.g., the monoid of strings over a fixed alphabet. 
The monad $\Mnd\OutFun = \ple{\OutFun,\Mun\OutFun,\Mmul\OutFun}$ is given by 
$\OutFun X = \OutSet \times X$ and 
\begin{quoting}
\begin{math}
\Mun\OutFun(x) = \ple{\OutEmpty,x} \HugeSpace
\ple{o,x} \mbind f = \ple{o\OutCons\pi_1(f(x)), \pi_2(f(x))} 
\end{math}
\end{quoting}
Combining this monad with the exception monad of \cref{ex:exception-mnd}, we obtain  the pointed output monad, whose underlying functor is given by  $\OutFun' X = \OutSet\times(X+\{\bot\})$. 
\end{example}

\begin{example}[Global State]\label{ex:state-mnd}
Let $\StateSet$ be a set of states. 
The monad $\Mnd\StateFun= \ple{\StateFun,\Mun\StateFun,\Mmul\StateFun}$ is given by 
$\StateFun X = \StateSet \to \StateSet\times X$ and 
\begin{quoting}
\begin{math}
\Mun\StateFun(x) = s \mapsto \ple{s,x} \HugeSpace
\alpha\mbind f = s \mapsto  f(\pi_2(\alpha(s))) (\pi_1(\alpha(s)))
\end{math}
\end{quoting}
We can combine this monad with the exception monad of \cref{ex:exception-mnd} obtaining 
$\Mnd{\StateFun[\ExSet]} = \ple{\StateFun[\ExSet], \Mun{\StateFun[\ExSet]}, \Mmul{\StateFun[\ExSet]}}$ where 
$\StateFun[\ExSet] = \StateSet \to (\StateSet\times X) + \ExSet$ and 
$\Mun{\StateFun[\ExSet]}_X(x) = s \mapsto \Mun{\ExceptFun[\ExSet]}_{\StateSet\times X}(\ple{s,x})$ and 
$\alpha\mbind f = s \mapsto (\alpha(s) \mbind_{\Mnd{\ExceptFun[\ExSet]}} (x \mapsto f(x)(s))$. 
This combination yields a monad thanks to the fact that $\Mnd\StateFun$ determines a monad transformer \cite{LiangHJ95,JaskelioffM10}. 
\end{example}

\newcommand{\xtl}{\mvar{xl}}
\newcommand{\ytl}{\mvar{yl}}

\begin{example}[Ordered Trees]
\label{ex:tree-mnd} 
The set $\TreeFun X$ of (possibly infinite) trees over a set $X$ is coinductively defined by the following rules: 
$\etree\in\TreeFun(X)$ and, 
if $\x\in X$  and $\xtl\in\ListFun(\TreeFun X)$, then $\tree{\x}{\xtl}\in\TreeFun(X)$. 
In other words, we have $\TreeFun X \cong \nu Y.1+X\times\ListFun Y$  where $\nu$ is the greatest fixed point operator. 
These sets are part of the tree monad $\Mnd\TreeFun = \ple{\TreeFun,\Mun\TreeFun,\Mmul\TreeFun}$, where 
the unit is given by $\Mun\TreeFun_X(\x) = \tree{\x}{\elist}$ and 
the Kleisli extension of a function \fun{f}{X}{\TreeFun Y} is corecursively defined  as follows
\begin{align*}
\mkl{f}(\etree)      &= \etree  \\ 
\mkl{f}(\tree{\x}{\xtl}) &= \tree{\y}{(\ytl\cdot \ytl')} \qquad \text{if $\Mmap[\ListFun]{\mkl{f}}{\xtl}=\ytl$ and $f(\x) = \tree{\y}{\ytl'}$} 
\end{align*} 
That is, given a tree over $X$ with root $\x$ and children $\xtl$, recursively mapping (through the $\aux{map}$ of lists) $\mkl{f}$ to $\xtl$ gives a list $\ytl$ of trees over $Y$, and applying $f$ to $\x$ gives a tree over $Y$ with root $\y$ and children $\ytl'$;   the final result is the tree with root $\y$ and children obtained appending $\ytl'$ to $\ytl$.
\end{example}


\section{Monadic operational semantics}
\label{sect:monadic-sem}

In this section we abstractly describe a framework for (deterministic) monadic operational semantics, adapting from \cite{GavazzoF21,GavazzoTV24}. 

\begin{definition}\label{def:mnd-sem} 
Let $\lang$ be a triple $\ple{\Exp,\Val,\valtoexp}$, called a \emph{language}, with $\Exp$ the set of \emph{expressions}, $\Val$ the set of \emph{values}, and $\fun{\valtoexp}{\Val}{\Exp}$ an injective function. 
A \emph{monadic operational semantics} for $\lang$ consists of:
\begin{itemize}
\item a monad $\mnd = \ple{\mfun,\mun,\mmul}$
\item  a  relation $\red \subseteq \Exp\times\mfun\Exp$, called \emph{monadic (one-step) reduction},  such that 
\begin{itemize}
\item $\red$ is a partial function and 
\item for all $\val\in\Val$, $\valtoexp(\val)\not\red$. 
\end{itemize}
\end{itemize}
\end{definition} 
The set $\Exp$  contains  expressions that can be executed, while $\Val$ contains values produced by the computation. 
The inclusion $\valtoexp$ identifies the expressions representing successful termination with a given value. 
The elements of $\mfun\Exp$, called \emph{monadic expressions}, are the counterpart of expressions in the monad $\mnd$. 
The relation $\red$ models single computation steps, which transform expressions into monadic ones, thus possibly raising computational effects. 
Finally, the first requirement on $\red$ ensures that it is deterministic, while the latter one that expressions representing values \mbox{cannot be reduced. }

 Assume a monadic operational semantics $\Pair{\mnd}{\red}$ for a language $\ple{\Exp,\Val,\valtoexp}$. In standard (small-step) operational semantics, starting from the one-step reduction we can model computations as (either finite of infinite) sequences of reduction steps. In particular, finite computations are obtained by the reflexive and transitive closure $\redstar$ of the one-step reduction. Starting from the \emph{monadic} one-step reduction, which is a relation from a set to a different one, there is no transitive closure in the usual sense.

In the solution proposed in \cite{GavazzoF21},  the monadic reduction can be an arbitrary relation; however, this requires a relational extension of the monad \cite{Barr70}. 
On the other hand, given a relation $\red \subseteq \Exp\times\mfun\Exp$  which is a  \emph{total} function, we can define, by  iterating Kleisli composition, a relation  $\redstar \subseteq \Exp\times\mfun\Exp$ which plays the role of transitive closure, \mbox{as in \cite{GavazzoTV24}. }

Our aim here is to define $\redstar$ taking the second approach, which does not require relational extensions. 
 Unfortunately, the monadic reduction, exactly as the standard one,  is by its own nature  a partial function, where some expressions, representing terminated computations, cannot be reduced. Notably,  those representing successful termination with a value,  and others, intuitively corresponding to stuck computations.  
 To obtain a total function, we extend the monadic reduction to \emph{configurations} (expressions, values, or  a special result $\wrng$). In particular, expressions representing  terminating  computations reduce to (the monadic embedding of) a value, and $\wrng$, respectively. In this way, we can define the transitive closure by Kleisli composition, \mbox{as formally detailed below. }

Set $\Res = \Val + \Wrng$, where $\Wrng = \{\wrng\}$, 
that is, a result $\res$ is either a value, modelling successful termination, or $\wrng$, modelling  a stuck computation. 
Then, we consider the set $\Conf = \Exp + \Res$ of \emph{configurations}, ranged over by $\conf$. 
We have the following commutative diagram of  coproduct injections:
\begin{quoting}
\begin{math}
\xymatrix{
  \Val 
  \ar[r]^-{\iota^\Res_\Val}
  \ar[rd]_-{\iota^\Conf_\Val}
& \Res 
  \ar[d]_-{\iota^\Conf_\Res}
& \Wrng 
  \ar[l]_-{\iota^\Res_\Wrng} 
  \ar[ld]^-{\iota^\Conf_\Wrng}
\\
& \Conf 
}
\end{math}
\end{quoting}
As customary, with a slight abuse of notation, we identify  elements with their images along such injections.
We use $\mexpr$, $\mval$, $\mres$ and $\mconf$ to range over monadic expressions, monadic values, monadic results and monadic configurations, respectively, 
that is, elements of $\mfun\Exp$, $\mfun\Val$, $\mfun\Res$ and $\mfun\Conf$. 
Since monads on \Set preserve injections \cite{AdamekBLM14}, 
by applying $\mfun$ to the diagram above we get  another diagram of injections:
\begin{quoting}
\begin{math}
\xymatrix{
  \mfun\Val 
  \ar[r]^-{\mfun\iota^\Res_\Val}
  \ar[rd]_-{\mfun\iota^\Conf_\Val}
& \mfun\Res 
  \ar[d]_-{\mfun\iota^\Conf_\Res}
& \mfun\Wrng 
  \ar[l]_-{\mfun\iota^\Res_\Wrng} 
  \ar[ld]^-{\mfun\iota^\Conf_\Wrng}
\\
& \mfun\Conf 
}
\end{math}
\end{quoting}
In the following, we use some shortcuts for the application of such injections: notably, 
we write 
$\metc\mexpr$ for $\mfun\iota^\Conf_\Exp(\mexpr)$, 
$\mrtc\mres$ for $\mfun\iota^\Conf_\Res(\mres)$ and 
$\mvtr\mval$ for $\mfun\iota^\Res_\Val(\mval)$. 

We can now extend the  monadic  reduction $\red$ to configurations, getting the relation 
${\tred \subseteq \Conf \times \mfun\Conf}$ shown in \cref{fig:mred-conf}.
\begin{figure}[th]
\begin{small}
\begin{math}
\begin{array}{c}
\NamedRule{exp}{
  \expr\red\mexpr 
}{ \expr \tred \metc\mexpr }
{ } 
\BigSpace
\NamedRule{ret}{
}{\valtoexp(\val)\tred \mun_\Conf(\val) }
{ } 
\\[3ex]
\NamedRule{wrong}{
}{\expr\tred \mun_\Conf(\wrng) }
{ \expr\not\red\\
\expr\ne\valtoexp(\val)\ \mbox{for all}\ \val\in\Val} 
\BigSpace
\NamedRule{res}{
}{\res \tred \mun_\Conf(\res) }
{ } 
\end{array}
\end{math}
\end{small}
\caption{Monadic (one-step) reduction on configurations}\label{fig:mred-conf} 
\end{figure}
As said above, reduction is extended to  expressions which represent terminated computations,  which reduce to the monadic embedding of the corresponding value or $\wrng$, respectively; moreover, it is extended to results (either values or $\wrng$) as well, which conventionally reduce \mbox{to their monadic embedding.}

It is immediate to see that $\tred$ is (the graph of) a total function from $\Conf$ to $\mfun\Conf$, which we simply write $\tredfun$. 
Clearly $\tredfun$ is a Kleisli function for $\mnd$, 
hence we can define the ``monadic reflexive and transitive closure'' $\tredstar\subseteq\Conf\times\mfun\Conf$ of $\tred$   as follows:
\begin{quoting}
\begin{math}
\NamedRule{refl}{ }{ \conf\tredstar \mun_\Conf(\conf)}{} 
\qquad
\NamedRule{step}{
  \conf\tredstar \mconf 
}{ \conf \tredstar \mconf \mbind  \tredfun }{} 
\end{math}
\end{quoting}
 These rules are analogous to those defining the reflexive and transitive closure of a standard one-step relation. In  \refToRule{refl}  a configuration reduces, rather than to itself, to its monadic counterpart. In rule \refToRule{step}, $\tredstar$ is combined with $\tred$, rather than by standard composition, through the $\mbind$ operator.  That is, a computation is extended by one step through a monadic binding of the previously computed monadic configuration $\mconf$ to the $\tredfun$ function. 

Equivalently, we can define the Kleisli $n$-th iteration $\tredfunind{n}$ of the $\tredfun$ function by setting 
$\tredfunind{0} = \mun_\Conf$ and  $\tredfunind{n+1} = \tredfun \ast \tredfunind{n}$. 
Then, the following holds:

\begin{proposition}\label{prop:mnd-n-step}
$\conf\tredstar \mconf$ if and only if $\tredfunind{n}(\conf)=\mconf$ for some $n\in\N$. 
\end{proposition}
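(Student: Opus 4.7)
The plan is a straightforward double induction, one direction on the derivation of $\tredstar$ and the other on $n$, hinging on unfolding the definition of Kleisli composition so that the step rule \refToRule{step} matches the recurrence defining $\tredfunind{n+1}$.

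For the forward direction, I would proceed by induction on the derivation of $\conf\tredstar\mconf$. In the base case \refToRule{refl} we have $\mconf = \mun_\Conf(\conf) = \tredfunind{0}(\conf)$, so choose $n=0$. In the inductive case \refToRule{step} we have a shorter derivation $\conf\tredstar\mconf'$ with $\mconf = \mconf'\mbind\tredfun$; by the induction hypothesis $\mconf' = \tredfunind{n}(\conf)$ for some $n$. Then, using the definition $g\ast f = x\mapsto f(x)\mbind g$ of Kleisli composition and the recurrence $\tredfunind{n+1} = \tredfun \ast \tredfunind{n}$, we obtain
\begin{quoting}
$\mconf \;=\; \tredfunind{n}(\conf)\mbind\tredfun \;=\; (\tredfun\ast\tredfunind{n})(\conf) \;=\; \tredfunind{n+1}(\conf).$
\end{quoting}

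For the backward direction, I would do induction on $n\in\N$. For $n=0$, $\tredfunind{0}(\conf) = \mun_\Conf(\conf)$ and \refToRule{refl} gives $\conf\tredstar\mun_\Conf(\conf)$ immediately. For the step, by the induction hypothesis $\conf\tredstar\tredfunind{n}(\conf)$ holds, so applying \refToRule{step} yields $\conf\tredstar\tredfunind{n}(\conf)\mbind\tredfun$, which by the same unfolding of Kleisli composition is precisely $\tredfunind{n+1}(\conf)$.

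There is no real obstacle: the whole argument is essentially bookkeeping around the equation $(\tredfun\ast\tredfunind{n})(\conf) = \tredfunind{n}(\conf)\mbind\tredfun$. The only thing worth pointing out in writing is that this equation relies on the already recorded identity $(g\ast f)(x) = f(x)\mbind g$ connecting Kleisli composition and $\aux{bind}$, which makes the step rule a direct syntactic image of the recurrence defining the iterated Kleisli powers of $\tredfun$.
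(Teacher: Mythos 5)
Your proof is correct and follows exactly the route the paper takes: induction on the derivation of $\tredstar$ for the forward direction and induction on $n$ for the backward direction, with the identity $(g\ast f)(x) = f(x)\mbind g$ doing the bookkeeping. The paper merely states that both inductions are straightforward; your write-up supplies the details faithfully.
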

\begin{proof}
The left-to-right implication follows by a straightforward induction on rules defining $\tredstar$, while the right-to-left one by  another \mbox{straightforward induction on $n$. }
\end{proof}

In a similar way, we can define a small-step reduction  on monadic configurations. Recall that the Kleisli extension of $\tredfun$ gives the function $\fun{\mkl\tredfun}{\mfun\Conf}{\mfun\Conf}$. 
\begin{definition}
The \emph{small-step reduction induced by $\tred$} is the relation $\Red$ on $\mfun\Conf$ defined by:
$\mconf\Red\mconf'$ iff $\mkl\tredfun(\mconf) = \mconf'$. 
\end{definition}
Then, since $\Red$ is  a relation on $\mfun\Conf$, we can consider its (standard) reflexive and transitive closure $\Redstar \subseteq\mfun\Conf\times\mfun\Conf$, which describes computations on monadic configurations. 

\begin{proposition}\label{prop:mnd-n-step2}
$\mconf\Redstar\mconf'$ if and only if $\mconf \mbind \tredfunind{n} = \mconf'$ for some $n \in \N$. 
\end{proposition}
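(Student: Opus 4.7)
The plan is to prove both directions by a straightforward induction, exploiting the monad laws to convert between Kleisli composition, bind, and Kleisli extension. First, I would recall that by definition $\mconf\Red\mconf'$ means $\mconf\mbind\tredfun=\mconf'$, since $\alpha\mbind f=\mkl f(\alpha)$. The single algebraic fact I would use throughout is the associativity of bind, namely $(\alpha\mbind f)\mbind g=\alpha\mbind (g\ast f)$, which is immediate from the Kleisli-extension law $\mkl g\circ\mkl f=\mkl{(\mkl g\circ f)}$ recalled in \cref{sect:monads}, together with the unit law $\alpha\mbind\mun=\alpha$.

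For the forward direction I would proceed by induction on the derivation of $\mconf\Redstar\mconf'$. In the reflexive base case $\mconf=\mconf'$, I would pick $n=0$ and observe that $\mconf\mbind\tredfunind{0}=\mconf\mbind\mun_\Conf=\mconf$. In the transitive step, assuming $\mconf\Redstar\mconf''$ and $\mconf''\Red\mconf'$, the inductive hypothesis gives $n$ with $\mconf\mbind\tredfunind{n}=\mconf''$, and then
\begin{quoting}
$\mconf'=\mconf''\mbind\tredfun=(\mconf\mbind\tredfunind{n})\mbind\tredfun=\mconf\mbind(\tredfun\ast\tredfunind{n})=\mconf\mbind\tredfunind{n+1}$,
\end{quoting}
using the associativity of bind and the defining recursion $\tredfunind{n+1}=\tredfun\ast\tredfunind{n}$.

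For the backward direction I would do induction on $n$. The case $n=0$ is again the unit law: $\mconf\mbind\tredfunind{0}=\mconf$, so $\mconf=\mconf'$ and reflexivity of $\Redstar$ closes the case. For $n+1$, setting $\mconf''\mathrel{\mathop:}=\mconf\mbind\tredfunind{n}$, the inductive hypothesis gives $\mconf\Redstar\mconf''$; then by the same algebraic manipulation as above, $\mconf'=\mconf\mbind\tredfunind{n+1}=\mconf''\mbind\tredfun$, whence $\mconf''\Red\mconf'$, and transitivity of $\Redstar$ yields $\mconf\Redstar\mconf'$.

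There is no real obstacle here: the argument parallels the proof of \cref{prop:mnd-n-step}, the only substantive ingredient being the associativity of bind, which is a standard monad law. The mild care needed is just to keep straight which side of $\ast$ is which and to use the unit law in the right direction for the base cases.
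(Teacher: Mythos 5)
Your proof is correct and follows essentially the same route as the paper: the paper compresses your two inductions into the single observation that $\mconf \mbind \tredfunind{n} = \mkl{(\tredfunind{n})}(\mconf) = (\mkl{\tredfun})^{n}(\mconf)$ together with the characterization of $\Redstar$ as $n$-fold application of $\mkl\tredfun$, but the underlying ingredient is the same Kleisli-extension law (associativity of bind) plus the unit law that you use explicitly.
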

\begin{proof}
It is immediate observing that 
$\mconf \mbind \tredfunind{n} = \mkl{(\tredfunind{n})}(\mconf) = (\mkl{\tredfun})^{n}(\mconf)$ and that 
$\mconf \Redstar \mconf'$ if and only if 
$\mconf = \mconf_1 \Red \ldots \Red \mconf_n = \mconf'$ if and only if 
$(\mkl{\tredfun})^{n}(\mconf) = \mconf'$. 
\end{proof}

Combining \cref{prop:mnd-n-step,prop:mnd-n-step2}, we get the following corollary which relates \mbox{$\tredstar$ and $\Redstar$. }

\begin{corollary} \label{cor:mnd-reds}
The following are equivalent:
\begin{enumerate}
\item $\conf\tredstar\mconf$
\item $\mun_\Conf(\conf)\Redstar\mconf$
\item $\conf\tredstar\mconf'\Redstar\mconf$, for some $\mconf'\in\mfun\Exp$
\end{enumerate}
\end{corollary}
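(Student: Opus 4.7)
The plan is to reduce each of (1), (2), and (3) to a numerical statement about the Kleisli iterates $\tredfunind{n}$ via \cref{prop:mnd-n-step} and \cref{prop:mnd-n-step2}, and then match these statements using two elementary monad laws: the unit law $\mkl{f}\circ\mun_\Conf = f$, and the associativity of Kleisli composition, which delivers the identity $\tredfunind{m}\ast\tredfunind{n} = \tredfunind{n+m}$ for all $n,m\in\N$ (a short induction on $m$ from the definition $\tredfunind{m+1} = \tredfun\ast\tredfunind{m}$, using $\mun_\Conf = \tredfunind{0}$ as identity).

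For $(1)\Leftrightarrow(2)$, I would apply \cref{prop:mnd-n-step} to rewrite (1) as the existence of $n$ with $\tredfunind{n}(\conf)=\mconf$, and \cref{prop:mnd-n-step2} to rewrite (2) as the existence of $n$ with $\mun_\Conf(\conf)\mbind\tredfunind{n}=\mconf$. These conditions coincide for each $n$, since by the unit law
\[
\mun_\Conf(\conf)\mbind\tredfunind{n} \;=\; \mkl{\tredfunind{n}}(\mun_\Conf(\conf)) \;=\; \tredfunind{n}(\conf).
\]

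For $(1)\Rightarrow(3)$, the simplest route is to take the trivial reduction $\conf\tredstar\mun_\Conf(\conf)$ furnished by rule \refToRule{refl}, set $\mconf' = \mun_\Conf(\conf)$, and invoke the already-proved $(1)\Leftrightarrow(2)$ to supply $\mconf'\Redstar\mconf$. For $(3)\Rightarrow(1)$, given $\conf\tredstar\mconf'$ and $\mconf'\Redstar\mconf$, \cref{prop:mnd-n-step} yields some $n$ with $\tredfunind{n}(\conf)=\mconf'$, and \cref{prop:mnd-n-step2} yields some $m$ with $\mconf'\mbind\tredfunind{m}=\mconf$. Combining them and using associativity,
\[
\mconf \;=\; \tredfunind{n}(\conf)\mbind\tredfunind{m} \;=\; (\tredfunind{m}\ast\tredfunind{n})(\conf) \;=\; \tredfunind{n+m}(\conf),
\]
so (1) follows from \cref{prop:mnd-n-step}.

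No serious obstacle is anticipated: the argument is essentially a bookkeeping combination of the two preceding propositions with the standard monad laws. The only mild subtlety is making sure that the equation $\tredfunind{m}\ast\tredfunind{n} = \tredfunind{n+m}$ is stated and oriented correctly, so that the composition of one $\tredstar$ segment followed by a $\Redstar$ segment is glued as a \emph{single} Kleisli iterate rather than two separate ones.
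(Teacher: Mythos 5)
Your proof is correct and is exactly the argument the paper intends: the corollary is stated as an immediate combination of \cref{prop:mnd-n-step} and \cref{prop:mnd-n-step2}, and your reduction of all three statements to equations on the Kleisli iterates $\tredfunind{n}$, glued by the unit law and the identity $\tredfunind{m}\ast\tredfunind{n}=\tredfunind{n+m}$, is precisely that combination spelled out in full. No gaps.
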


 To use the above machinery for describing the  semantics of expressions, we essentially follow  the approach in \cite{GavazzoTV24}, with minor adjustments to fit our context. 

First of all note that, being defined on top of  a  total function, $\Redstar$ has no normal forms. However, monadic results should intuitively correspond to termination.
Formally, this is a consequence of the proposition below,  stating that a monadic configuration which is a result only reduces to itself; hence, when a monadic result is reached, its reduction continues with an infinite sequence of trivial reduction steps, which can be seen as a representation of termination. Hence, the outcome of a (terminating) computation is a monadic result.

\begin{proposition}\label{prop:mres}
$\mrtc\mres\Red\mconf$ if and only if $\mconf = \mrtc\mres$. 
\end{proposition}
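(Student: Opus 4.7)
The plan is to reduce the biconditional to a single functional identity and then verify it via the monad laws. Since $\tredfun$ is a total function, its Kleisli extension $\mkl\tredfun$ is also a function, and hence $\Red$ is itself functional: $\mrtc\mres \Red \mconf$ holds iff $\mconf$ equals the uniquely determined value $\mkl\tredfun(\mrtc\mres)$. Both directions of the biconditional therefore reduce to the single identity $\mkl\tredfun(\mrtc\mres) = \mrtc\mres$.

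To prove this identity, I would first extract from rule \refToRule{res} of \cref{fig:mred-conf} the fact that $\tredfun(\iota^\Conf_\Res(\res)) = \mun_\Conf(\iota^\Conf_\Res(\res))$ for every $\res \in \Res$; equivalently, as maps $\Res \to \mfun\Conf$,
\[
\tredfun \circ \iota^\Conf_\Res \;=\; \mun_\Conf \circ \iota^\Conf_\Res.
\]
Unfolding $\mrtc\mres = \mfun\iota^\Conf_\Res(\mres)$ and the definition $\mkl\tredfun = \mmul_\Conf \circ \mfun\tredfun$, and using functoriality of $\mfun$, the computation proceeds as
\[
\mkl\tredfun(\mrtc\mres)
= \mmul_\Conf\bigl(\mfun(\tredfun \circ \iota^\Conf_\Res)(\mres)\bigr)
= \mmul_\Conf\bigl(\mfun\mun_\Conf(\mfun\iota^\Conf_\Res(\mres))\bigr)
= \mfun\iota^\Conf_\Res(\mres)
= \mrtc\mres,
\]
where the penultimate equality invokes the monad unit law $\mmul_\Conf \circ \mfun\mun_\Conf = \id_{\mfun\Conf}$ at the element $\mfun\iota^\Conf_\Res(\mres)$.

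The main obstacle I anticipate is purely notational: one must carefully distinguish $\res$ as an element of $\Res$ from its image in $\Conf$ under $\iota^\Conf_\Res$, and keep the applications of $\mfun$ in the right order so that $\mfun\mun_\Conf$ ends up adjacent to $\mmul_\Conf$ for the unit law to fire. No induction or case analysis on the shape of $\mres$ is required; once the injections are pushed inside $\mfun$ the proof is a one-line invocation of the monad axioms, and it yields both directions of the biconditional simultaneously.
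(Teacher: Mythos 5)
Your proof is correct and follows essentially the same route as the paper: both hinge on the identity $\tredfun\circ\iota^\Conf_\Res = \mun_\Conf\circ\iota^\Conf_\Res$ extracted from rule \refToRule{res}, followed by functoriality of $\mfun$ and the unit law $\mmul_\Conf\circ\mfun\mun_\Conf = \id_{\mfun\Conf}$. The paper merely packages the same computation as a commutative diagram, whereas you write it out equationally.
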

\begin{proof}
By definition of $\tredfun$, we have 
$\tredfun\circ\iota^\Conf_\Res = \mun_\Conf \circ \iota^\Conf_\Res$. 
Applying the functor $\mfun$ and using the monad laws, we obtain the following commutative diagram\mbox{, which proves the thesis.}
\[\xymatrix{
  \mfun\Res 
  \ar[d]_-{\mfun\iota^\Conf_\Res}
  \ar[r]^-{\mfun\iota^\Conf_\Res}
& \mfun\Conf 
  \ar[d]_-{\mfun\mun_\Conf}
  \ar[rd]^-{\id_{\mfun\Conf}}
\\
  \mfun\Conf
  \ar[r]^-{\mfun\tredfun}
  \ar@/_10pt/[rr]_-{\mkl{\tredfun}}
& \mfun^2\Conf 
  \ar[r]^-{\mmul_\Conf}
& \mfun\Conf 
}\]
\end{proof}

Thanks to  the above proposition,  we can prove the following, stating that the  monadic result  of a computation, if any,  is unique. 
\begin{proposition}\label{prop:mnd-res}
If $\conf\tredstar\mrtcind{\mres}{1}$ and $\conf\tredstar\mrtcind{\mres}{2}$, then $\mres_1 = \mres_2$. 
\end{proposition}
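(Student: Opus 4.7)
The plan is to convert the two ``monadic reflexive-transitive'' reductions into concrete Kleisli iterations, and then exploit the absorptive behavior of monadic results established by \cref{prop:mres}.

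First, by \cref{prop:mnd-n-step}, from $\conf\tredstar\mrtcind{\mres}{1}$ and $\conf\tredstar\mrtcind{\mres}{2}$ I obtain $n_1,n_2\in\N$ with $\tredfunind{n_1}(\conf)=\mrtcind{\mres}{1}$ and $\tredfunind{n_2}(\conf)=\mrtcind{\mres}{2}$. Without loss of generality, assume $n_1\leq n_2$ and write $n_2=n_1+k$. Using associativity of Kleisli composition, $\tredfunind{n_2}=\tredfunind{k}\ast\tredfunind{n_1}$, hence
\begin{quoting}
$\mrtcind{\mres}{2}=\tredfunind{n_2}(\conf)=\tredfunind{n_1}(\conf)\mbind\tredfunind{k}=\mrtcind{\mres}{1}\mbind\tredfunind{k}.$
\end{quoting}
By \cref{prop:mnd-n-step2}, this equation is precisely $\mrtcind{\mres}{1}\Redstar\mrtcind{\mres}{2}$.

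Next, I lift \cref{prop:mres} from a single step to arbitrarily many steps by a trivial induction on the length of the $\Redstar$-sequence: the reflexive case is immediate, and the inductive step follows because $\mrtcind{\mres}{1}\Red\mconf$ forces $\mconf=\mrtcind{\mres}{1}$. Applied to the reduction above, this yields $\mrtcind{\mres}{1}=\mrtcind{\mres}{2}$, i.e., $\mfun\iota^\Conf_\Res(\mres_1)=\mfun\iota^\Conf_\Res(\mres_2)$.

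Finally, to descend from the equality of the liftings in $\mfun\Conf$ to the equality $\mres_1=\mres_2$ in $\mfun\Res$, I invoke the fact (already used before \cref{fig:mred-conf}) that monads on $\Set$ preserve injections, so $\mfun\iota^\Conf_\Res$ is itself injective. No step here is really hard; the only delicate point is making sure that the step from \cref{prop:mres} to its $\Redstar$-version is stated explicitly, since the proposition is phrased for $\Red$ only, and that the final injectivity argument is not skipped (otherwise one would only have proved uniqueness of $\mrtc\mres$, not of $\mres$).
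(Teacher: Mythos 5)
Your proof is correct and follows essentially the same route as the paper's: both arguments reduce the claim to showing that the two monadic results are related by $\Redstar$ (the paper via the diamond property of the reflexive--transitive closure of a functional relation, you via an explicit computation with the Kleisli iterates $\tredfunind{n}$ and a WLOG ordering of the step counts), then collapse using \cref{prop:mres} and conclude by injectivity. Your final step is in fact slightly more precise than the paper's, which appeals to ``injectivity of $\iota^\Conf_\Res$'' where what is really needed, as you note, is that $\mfun\iota^\Conf_\Res$ is injective because monads on $\Set$ preserve injections.
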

\begin{proof}
By \cref{cor:mnd-reds}, we have $\mun_\Conf(\conf) \Redstar\mrtcind{\mres}{1}$ and $\mun_\Conf(\conf)\Redstar\mrtcind{\mres}{2}$. 
Since $\Redstar$ is the reflexive and transitive closure of a functional relation, it has the diamond property, 
hence there is $\mconf'$ such that 
$\mrtcind{\mres}{1}\Redstar\mconf'$ and $\mrtcind{\mres}{2}\Redstar\mconf'$. 
Then, by \cref{prop:mres}, we conclude $\mrtcind{\mres}{1} = \mconf' = \mrtcind{\mres}{2}$ and the thesis follows from \mbox{the injectivity of $\iota^\Conf_\Res$. }
\end{proof}

Hence, we can define a function \fun{\finsem{-}}{\Exp}{\mfun\Res + \{\infty\}} describing the semantics of expressions as follows: 
\begin{quoting}
\begin{math}
\finsem\expr = \begin{cases}
\mres & \text{if $\expr\tredstar\mrtc\mres$} \\ 
\infty & \text{otherwise} 
\end{cases}
\end{math}
\end{quoting}
This is called \emph{finitary semantics}, as it describes only monadic results that can be reached in finitely many steps. 
In other words, all diverging computations are identified and no information on computational effects they may produce is available. 
 Even worse, when the monad supports some form of non-determinism, we may have computations that terminate in some cases and diverge in  others, 
but the finitary semantics considers them as diverging, as they never reach a result after finitely many steps. 

To overcome this limitation, again following \cite{GavazzoTV24}, we introduce an \emph{infinitary semantics}, which is able to provide more information on diverging computations. 
To achieve this, we need to assume more structure on the monad $\mnd$. 
Recall that, given a partially ordered set \ple{P,\mord}, an $\omega$-chain is an increasing sequence  $(x_n)_{n\in\N}$  of points in $P$. 
We say that \ple{P,\mord} is an $\omega$-CPO if it has a least element $\mbot$ and every $\omega$-chain in \ple{P,\mord} has a supremum $\msup_{n\in\N} x_n$. 
A function \fun{f}{\ple{P,\mord}}{\ple{P',\mord'}} between $\omega$-CPO is said to be $\omega$-continuous if it preserves the least element, and suprema of $\omega$-chains. 
Note that an $\omega$-continuous function is necessarily monotone. 
Then, we have the following definition:

\begin{definition}\label{def:mnd-ord}
An $\omega$-CPO-ordered monad $\mnd = \ple{\mfun,\mord,\mun,\mmul}$ is a monad \ple{\mfun,\mun,\mmul} together with  a partial order $\mord_X$ on $\mfun X$, for every set $X$, such that 
\begin{enumerate}
\item\label{def:mnd-ord:cpo} for every set $X$, the poset \ple{\mfun X,\mord_X} is an $\omega$-CPO and 
\item\label{def:mnd-ord:cont} for all sets $X,Y$, the Kleisli extension \fun{\mkl{(-)}}{(X\to\mfun Y)}{(\mfun X\to\mfun Y)} is \mbox{$\omega$-continuous} with respect to the pointwise extension of $\mord_Y$ to function spaces $X{\to}\mfun Y$ and ${\mfun X{\to}\mfun Y}$. 
\end{enumerate}
\end{definition}

\begin{example}
\label{ex:mnd-cpo}
The powerset and list monads of \cref{ex:pow-mnd} are $\omega$-CPO-ordered with the subset and prefix ordering, respectively.
The subdistribution monad of \cref{ex:prob-mnd} is $\omega$-CPO-ordered with the pointwise ordering on subdistributions.
The other monads of \cref{sect:monads} can also be turned into $\omega$-CPO-ordered monads, but require adjustements, typically a combination with the exception monad. 
For instance, in \cref{ex:output-mnd}, the output monad is not $\omega$-CPO-ordered in general, 
but its pointed version is $\omega$-CPO-ordered when the underlying monoid is an $\omega$-CPO  and the multiplication is $\omega$-continuous \mbox{in the second argument. }
\end{example}

From now on, we assume the monad $\mnd$ to have an $\omega$-CPO-ordered structure. 
Our goal is  to define a function \fun{\infsem{-}}{\Exp}{\mfun\Res} modelling the infinitary semantics of expressions. 
To this end, we first define a function \fun{\mctr}{\mfun\Conf}{\mfun\Res} extracting monadic results from monadic configurations. \label{res}
Let \fun{\mctr_0}{\Conf}{\mfun\Res} be the function given by 
\begin{quoting}
\begin{math}
\mctr_0(\conf) = \begin{cases}
\mbot_\Res & \conf = \expr \\ 
\mun_\Res(\res) & \conf = \res 
\end{cases}
\end{math}
\end{quoting}
and set $\mctr = \mklind{\mctr}{0}$. 
The key point is that the total relation $\Red$ on monadic configurations  is compatible with the order $\mord_\Res$ under the application of $\mctr$, as the following \mbox{proposition shows. }

\begin{proposition}\label{prop:mnd-Red-ord}
If $\mconf\Red\mconf'$ then $\mctr(\mconf)\mord_\Res\mctr(\mconf')$. 
\end{proposition}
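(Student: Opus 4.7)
The plan is to reduce the statement to a pointwise comparison on $\Conf$ and then lift it through monotonicity of the Kleisli extension.

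First, I would unfold both sides using the definitions. By definition of $\Red$, the assumption $\mconf \Red \mconf'$ gives $\mconf' = \mkl{\tredfun}(\mconf)$. Since $\mctr = \mkl{\mctr_0}$, applying the Kleisli associativity law $\mkl{g} \circ \mkl{f} = \mkl{(\mkl{g} \circ f)}$ recalled in \cref{sect:monads}, the goal $\mctr(\mconf) \mord_\Res \mctr(\mconf')$ rewrites to
\[ \mkl{\mctr_0}(\mconf) \mord_\Res \mkl{(\mkl{\mctr_0} \circ \tredfun)}(\mconf). \]

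Second, I would invoke the $\omega$-continuity of Kleisli extension from \refItem{def:mnd-ord}{cont}, which entails monotonicity with respect to the pointwise order on Kleisli functions $\Conf \to \mfun\Res$. Hence it suffices to verify the pointwise inequality $\mctr_0(\conf) \mord_\Res (\mkl{\mctr_0} \circ \tredfun)(\conf)$ for every $\conf \in \Conf$.

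Third, I would conclude by case analysis on $\conf$. If $\conf$ is an expression, then $\mctr_0(\conf) = \mbot_\Res$ by definition of $\mctr_0$, so the inequality is immediate. If $\conf$ is a result $\res$, then rule \refToRule{res} yields $\tredfun(\res) = \mun_\Conf(\res)$, and by the monad law $\mkl{\mctr_0} \circ \mun_\Conf = \mctr_0$ we obtain $(\mkl{\mctr_0} \circ \tredfun)(\res) = \mctr_0(\res)$, so in fact equality holds in this case.

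I do not expect any serious obstacle: the only delicate point is keeping track of the monad laws when manipulating iterated Kleisli extensions. The use of $\omega$-continuity to pass from a pointwise order on Kleisli functions to the induced order on their extensions is the crucial structural ingredient furnished by the $\omega$-CPO-ordered monad assumption, and the $\mbot$ at expressions together with the idempotent behaviour of $\tredfun$ at results are precisely what make the pointwise comparison \mbox{go through without effort.}
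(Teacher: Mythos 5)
Your proof is correct and follows essentially the same route as the paper's: both reduce the claim, via $\mctr\circ\mkl{\tredfun}=\mkl{(\mctr\circ\tredfun)}$ and the monotonicity of Kleisli extension guaranteed by \refItem{def:mnd-ord}{cont}, to the pointwise inequality $\mctr_0(\conf)\mord_\Res\mctr(\tredfun(\conf))$, and then argue by cases on $\conf$, with $\mbot_\Res$ handling expressions and the monad law $\mkl{\mctr_0}\circ\mun_\Conf=\mctr_0$ giving equality on results. The paper merely presents the result case as a commutative diagram rather than an equational chain.
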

\begin{proof}
Since $\Red$ is the graph of $\mkl\tredfun$, we have to show that 
$\mctr(\mconf)\mord_\Res\mctr(\mkl\tredfun(\mconf))$. 
Since $\mctr =  \mklind{\mctr}{0}$, we have 
$\mctr\circ\mkl\tredfun = \mkl{(\mctr\circ\tredfun)}$. 
Hence, it suffices to prove that ${\mctr_0(\conf) \mord_\Res \mctr(\tredfun(\conf))}$, for every $\conf\in\Conf$, because the Kleisli extension, being $\omega$-continuous, is monotone. 
We reason by cases on $\conf$. 
If $\conf = \expr$, then $\mctr_0(\conf) = \mbot_\Res$ and so the thesis is trivial. 
If $\conf = \res$,  then $\mctr_0(\conf) = \mun_\Conf(\res) = \mun_\Res(\res)$ and $\tredfun(\res) = \mun_\Conf(\res)$ and the following diagram commutes: 
\[\xymatrix{
  \Res 
  \ar@(l,l)[dd]_-{\iota^\Conf_\Res} 
  \ar[d]_-{\iota^\Conf_\Res}
  \ar[r]^-{\mun_\Res} 
& \mfun\Res 
  \ar[d]_-{\mfun\iota^\Conf_\Res}
  \ar[rd]^-{\mfun\mun_\Res}
  \ar@(r,u)[rrd]^-{\id_{\mfun\Res}} 
\\
  \Conf 
  \ar[r]_-{\mun_\Conf}
& \mfun\Conf
  \ar[r]^-{\mfun\mctr_0}
  \ar@(rd,ld)[rr]_-{\mctr}
& \mfun^2\Res 
  \ar[r]^-{\mmul_\Res}
& \mfun\Res
\\
  \Conf
  \ar@(r,d)[ru]_-{\tredfun} 
}\]
showing that $\mctr(\tredfun(\conf)) = \mun_\Res(\res) = \mctr_0(\conf)$, which proves the thesis. 
\end{proof}

For every $\expr\in\Exp$ and $n\in\N$, we define 
$\infsem[n]{\expr} = \mctr(\tredfunind{n}(\expr))$. 
From \cref{prop:mnd-n-step,prop:mnd-n-step2}, we easily derive 
$\tredfunind{n}(\expr)\Red\tredfunind{n+1}(\expr)$, and, by \cref{prop:mnd-Red-ord}, 
$\infsem[n]{\expr} \mord_\Res \infsem[n+1]{\expr}$. 
Hence, the sequence $(\infsem[n]{\expr})_{n\in\N}$ is an $\omega$-chain in \ple{\mfun\Res,\mord_\Res} and so we define the \mbox{infinitary semantics as}
\begin{quoting}
\begin{math}
\infsem{\expr} = \msup_{n\in\N} \infsem[n]{\expr} 
\end{math}
\end{quoting}
Intuitively, $\infsem[n]{\expr}$ is the portion of the  result that is reached after $n$ reduction steps. 
Hence, the actual result is obtained as the supremum of all such approximations and it may be never reached, thus describing also the observable behaviour of  possibly diverging computations. 

We conclude this section by stating that infinitary and finitary semantics agree on terminating computations.

\begin{proposition}\label{prop:mnd-fin-inf-sem}
If $\finsem{\expr} = \mres$, then $\infsem{\expr} = \mres$. 
\end{proposition}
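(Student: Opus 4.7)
The plan is to exploit \cref{prop:mres}, which guarantees that monadic results are fixed points of $\mkl{\tredfun}$: once iterated reduction reaches a monadic result, it remains there, so the $\omega$-chain whose supremum defines $\infsem{\expr}$ is eventually constant at the expected value.

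First I would unfold the hypothesis. From $\finsem{\expr} = \mres$ and the definition of the finitary semantics, we have $\expr \tredstar \mrtc\mres$, and \cref{prop:mnd-n-step} then yields some $n \in \N$ with $\tredfunind{n}(\expr) = \mrtc\mres$. A short induction on $k \geq n$ shows $\tredfunind{k}(\expr) = \mrtc\mres$ for every such $k$: the step uses $\tredfunind{k+1}(\expr) = \mkl{\tredfun}(\tredfunind{k}(\expr)) = \mkl{\tredfun}(\mrtc\mres)$, and \cref{prop:mres} (recalling that $\Red$ is the graph of $\mkl{\tredfun}$) yields $\mkl{\tredfun}(\mrtc\mres) = \mrtc\mres$.

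Next I would verify the identity $\mctr(\mrtc\mres) = \mres$. By definition of $\mctr_0$, the composition $\mctr_0 \circ \iota^\Conf_\Res$ equals $\mun_\Res$, so $\mctr(\mrtc\mres) = \mkl{\mctr_0}(\mfun\iota^\Conf_\Res(\mres)) = \mkl{(\mctr_0 \circ \iota^\Conf_\Res)}(\mres) = \mklind{\mun}{\Res}(\mres) = \mres$, using that Kleisli extension is compatible with function composition and that $\mklind{\mun}{\Res} = \id_{\mfun\Res}$. Combining this with the previous step, $\infsem[k]{\expr} = \mctr(\mrtc\mres) = \mres$ for every $k \geq n$, so the $\omega$-chain $(\infsem[k]{\expr})_{k \in \N}$ is eventually constant at $\mres$, and therefore $\infsem{\expr} = \msup_{k\in\N}\infsem[k]{\expr} = \mres$.

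The only slightly delicate point is the identity in the middle step: keeping track of the injection $\iota^\Conf_\Res$ and applying the Kleisli identities in the right order. Everything else follows directly from \cref{prop:mnd-n-step} and \cref{prop:mres}.
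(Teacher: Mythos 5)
Your proposal is correct and follows essentially the same route as the paper's own proof: locate the index $n$ with $\tredfunind{n}(\expr)=\mrtc\mres$ via \cref{prop:mnd-n-step}, use \cref{prop:mres} to show the iterates are constant from $n$ onwards, and compute $\mctr(\mrtc\mres)=\mres$ from $\mctr_0\circ\iota^\Conf_\Res=\mun_\Res$ together with the Kleisli identities. The only cosmetic difference is that you phrase the stabilization as an explicit induction on $k\geq n$ where the paper invokes $\tredfunind{k}(\expr)\Redstar\tredfunind{n}(\expr)$; both amount to the same argument.
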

\begin{proof}
We know that $\expr\tredstar\mrtc\mres$, hence, by \cref{prop:mnd-n-step}, we have $\tredfunind{k}(\expr) = \mrtc\mres$  for some $k$. 
For all $n\ge k$, we have 
$\tredfunind{k}(\expr)\Redstar\tredfunind{n}(\expr)$, hence, by \cref{prop:mres}, we deduce
$\tredfunind{n}(\expr) = \mrtc\mres$. 
Therefore, for all $n\ge k$, we have 
$\infsem[n]{\expr} = \mctr(\mrtc\mres)$ and so 
$\infsem{\expr} = \mctr(\mrtc\mres)$. 
Finally, since $\mctr_0\circ\iota^\Conf_\Res = \mun_\Res$, we deduce 
$\mctr\circ\mfun\iota^\Conf_\Res = \mkl{(\mctr_0\circ\iota^\Conf_\Res)} = \mkl{\mun_\Res} = \id_{\mfun\Res}$, 
thus proving 
$\mctr(\mrtc\mres) = \mres$, as needed. 
\end{proof}


\section{Example: a lambda calculus with generic effects}\label{sect:lang}
The aim of this section is  twofold:
\begin{itemize}
\item to ilustrate the monadic operational semantics in \cref{sect:monadic-sem} through a simple example
\item to equip such example with a type-and-effect system, and to discuss how to express and prove type soundness with respect to finitary/infinitary semantics
\end{itemize}
To this end, we  introduce  $\lambdaEff$,  a call-by-value $\lambda$-calculus with \emph{generic effects}. Here $\Sigma$ is a family of sets $\{\Sigma_k\}_{k\in\N}$ of $k$-ary operations  raising effects. 
We choose generic rather than algebraic effects, thus avoiding explicit continuations, to have a style more convenient for a programmer \cite{Pretnar15}, and a more significant monadic reduction.\footnote{In the case of algebraic effects there would be no monadic reduction inside a context, as in rule \refToRule{do}.} 

The syntax is shown in \cref{fig:syntax}.  We use $\seq{\ve}$ as metavariable for sequences $\ve_1,\ldots, \ve_n$, and analogously for other sequences.
\begin{figure}[th]
\begin{math}
\begin{grammatica}
\produzione{\ve}{\x \mid \RecFun{\f}{\x}{\e}\mid\ldots}{value} \\ 
\produzione{\e}{ \App{\ve}{\ve'}\mid \opg(\seq\ve)\mid\Ret\ve \mid \Do\x{\e_1}{\e_2}
\mid\ldots}{expression} \\ 
\end{grammatica}
\end{math}
\caption{$\lambdaEff$: fine-grain syntax}\label{fig:syntax} 
\end{figure}
We assume variables $\x$, $\y$, $\f$, \ldots, using the last for variables denoting functions. 
We adopt, as customary, the fine-grain approach \cite{LevyPT03}, where \emph{values} are effect-free, whereas \emph{expressions}, also called \emph{computations}, may raise effects. Dots stand for additional, unspecified, constructs, such as operators of primitive types, \mbox{conditional, etc.}

To illustrate type soundness with respect to the infinitary semantics as well, the calculus includes recursive functions; notably, $\RecFun{\f}{\x}{\e}$ is a function with parameter $\x$ and body $\e$ which can recursively call itself through the variable $\f$.
Standard lambda expressions can be recovered as those where $\f$ does not occur free in $\e$, that is, when the function is non-recursive, and we will use the abbreviation $\Fun{\x}{\e}$ for such expressions. 

 In this section, $\Exp$ and $\Val$ denote the sets of closed expressions and values of $\lambdaEff$, respectively. 
 In the following, we define a monadic (one-step) reduction  for the language, parametric on a monad  $\mnd = \ple{\mfun,\mun,\mmul}$,  being a relation $\red$  on  $\Exp \times \mfun\Exp$.  As in \cref{sect:monadic-sem}, we  use $\mval$ and $\mexpr$ to range over $\mfun\Val$ and $\mfun\Exp$, respectively. 

This relation is modularly defined on top of a ``pure''  reduction $\purered$ on $\Exp\times\Exp$.   In our example, such relation only reduces function calls into  the corresponding bodies, as shown in \cref{fig:pure-red}; other rules should be added to deal with additional language constructs, as we will do for handlers in \cref{sect:handlers}. \lstinline{Do} expressions are, then, normal forms for the pure reduction, and will be handled by  the rules \mbox{of  the monadic reduction. }
 
\begin{figure}[th]
\begin{math}
\begin{array}{c}
\NamedRule{app}{ }
{ \App{\ve}{\ve'} \purered \Subst{\Subst{\e}{\ve}{\f}}{\ve'}{\x}}
{\ve =\RecFun{\f}{\x}{\e}}
\end{array}
\end{math}
\caption{Pure reduction}\label{fig:pure-red} 
\end{figure}

Rules defining the monadic reduction are given in \cref{fig:monadic-red}.   As mentioned, they are parametric on the underlying monad; more in detail, they depend on the following ingredients:
\begin{itemize}
\item The function $\fun{\mun_\Exp}{\Exp}{\mfun\Exp}$  embedding  language expressions  into their counterpart in the monad, written simply $\mun$ in this section.
\item The function $\fun{\aux{map}}{(\funtype{\Exp}{\Exp})}{\funtype{\mfun\Exp}{\mfun\Exp}}$  lifting functions from expressions to expressions  to their counterpart in the monad. 
\end{itemize}
Moreover we assume, for each operation $\opg$ with arity $k$, a partial function ${\pfun{\mrun{\opg}}{\Val^k}{\mfun\Val}}$,  returning a monadic value expressing the effects raised by a call of the operation. The function could be undefined, for instance when arguments do not have the expected types. 

\begin{figure}[th]
\begin{math}
\begin{array}{c}
\NamedRule{pure}{
  \e \purered \e' 
}{ \e \red \mun(\e') }
{ } 
\qquad
\NamedRule{effect}{ }
{ \opg(\seq\ve) \red \Mmap{(\Ret{\ehole})}{\mrun{\opg}(\seq\ve)}  }
{ 
}
\\[3ex]
\NamedRule{ret}{ }
{ \Do{\x}{\Ret\ve}{\e} \red \mun(\Subst\e\ve\x) }
{ } 
\qquad 
\NamedRule{do}{
  \e_1 \red \mexpr 
}{ \Do\x{\e_1}{\e_2} \red \Mmap{(\Do\x{\ehole}{\e_2})}{\mexpr} } 
{ }
\end{array}
\end{math}
\caption{Monadic (one-step) reduction}\label{fig:monadic-red} 
\end{figure}

Rule \refToRule{pure} propagates a pure step, embedding its result in the monad.
In rule \refToRule{effect}, the effect is actually raised. 
To this end, we apply the function of type $\funtype{\mfun\Val}{\mfun\Exp}$ obtained by lifting, through $\aux{map}$, the context $\Ret{\ehole}$ to the monadic value obtained from the call.
 Here we identify the context $\Ret{\ehole}$,  which is an expression with a hole, with the function $\ve \mapsto \Ret[\ve]$ of type $\Val\to\Exp$. 
  In rule \refToRule{ret}, when the first subterm of a \lstinline{do} expression returns a value, the expression is reduced to the monadic  embedding  of the second subterm, after replacing the variable with the  value.
Rule \refToRule{do}, instead, propagates the reduction of the first subterm. To take into account raised effects, we apply the function of type $\funtype{\mfun\Exp}{\mfun\Exp}$ obtained by lifting, through $\aux{map}$, the context $\Do\x{\ehole}{\e_2}$ to the monadic expression obtained from $\e_1$. 
 Analogously to above,  we identify the context $\Do\x{\ehole}{\e_2}$ with the function $\e \mapsto \Do{\x}{[\e]}{\e_2}$ of type $\Exp\to\Exp$. 
 
 The following property is needed to have an instance of the framework in \cref{sect:monadic-sem}.
\begin{proposition}[Determinism] \label{prop:det}
If $\e\red \mexpr_1$ and $\e\red\mexpr_2$ then  $\mexpr_1 = \mexpr_2$. 
\end{proposition}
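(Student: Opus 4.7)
The plan is to prove determinism by structural induction on $\e$, equivalently by induction on the derivation of $\e \red \mexpr_1$, with a case analysis on the last rule applied. The key point is that the four monadic rules in \cref{fig:monadic-red} are syntax-directed and almost pairwise disjoint: \refToRule{pure} fires only when $\e$ matches the shape of the unique pure redex $\App\ve{\ve'}$ with $\ve$ a recursive function; \refToRule{effect} fires only on $\opg(\seq\ve)$; and \refToRule{ret} and \refToRule{do} both target $\Do$-expressions but are mutually exclusive once one checks that $\Ret\ve$ is a $\red$-normal form, so \refToRule{do} cannot fire when $\e_1 = \Ret\ve$.

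Given uniqueness of the applicable rule, uniqueness of the monadic reduct follows because each right-hand side is the image of a function. In \refToRule{pure}, the pure reduction $\purered$ is itself deterministic---the \refToRule{app} rule, applied to a value of the form $\RecFun{\f}{\x}{\e_0}$, yields a unique substitution---and $\mun$ is a function. In \refToRule{effect}, $\mrun{\opg}$ is a (partial) function and the lifting of the fixed context $\Ret{\ehole}$ through $\aux{map}$ is a function. In \refToRule{ret}, the reduct $\mun(\Subst\e\ve\x)$ is manifestly fixed by $\e$. In \refToRule{do}, the inductive hypothesis on $\e_1$ gives uniqueness of $\mexpr$, which the lifted context $\Mmap{(\Do\x{\ehole}{\e_2})}{-}$ then preserves. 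Values such as $\x$ and $\RecFun{\f}{\x}{\e_0}$ are not in $\Exp$, so they need not be considered.

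The main, though mild, obstacle is the overlap analysis on $\Do$-expressions. To discharge it I check, against each of the four rules, that an expression of the form $\Ret\ve$ is irreducible: it matches neither the LHS of \refToRule{pure} (pure reduction only fires on applications), nor that of \refToRule{effect}, \refToRule{ret}, or \refToRule{do}. The grammar in \cref{fig:syntax} contains a ``$\ldots$'' standing for unspecified additional constructs; for each such construct added later (as for handlers in \cref{sect:handlers}) one would need to re-verify the syntax-directedness and the non-overlap conditions, but the structure of the induction remains unchanged.
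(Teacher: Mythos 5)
Your proof is correct. The paper actually states \cref{prop:det} without any proof (it does not appear in \cref{app:lambda-results} either), so there is nothing to compare against; the argument you give---induction on the derivation of $\e\red\mexpr_1$, observing that the four rules of \cref{fig:monadic-red} are syntax-directed, that $\purered$ has a single rule whose right-hand side is a function of the redex, and that the only potential overlap, between \refToRule{ret} and \refToRule{do} on $\Do$-expressions, is resolved because $\Ret\ve$ matches no rule and is therefore a $\red$-normal form---is exactly the standard argument the authors are implicitly relying on. Your caveat about the ``$\ldots$'' in the grammar is also apt: the claim only holds for the core constructs, and each extension (as with handlers, where $\purered$ gains contextual rules) requires re-checking non-overlap.
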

We show now some examples of expressions and their monadic operational semantics.  We assume the calculus to be extended with standard constructs, such $\unit$, $\Zero$, $\aux{succ}$, $\True$ and $\False$ constructors, $\aux{pred}$ selector, $\aux{iszero}$ test, and conditional.
We write $\Seq{\e}{\e'}$ for $\Do{\x}{\e}{\e'}$ when $\x$ does not occur free in $\e'$, and sometimes, to save space, $\nat{n}$ for $\SuccSeq{n}{\Zero}$. 

\begin{example}\label{ex:exc}
Set, as underlying monad, the monad of exceptions introduced in \cref{ex:exception-mnd}, where $\ExceptFun X = X + \ExSet$.
For each $\exc\in\ExSet$, we assume an operation   $\raiseop{\exc}$, with
\begin{quoting}
$\fun{\mrun{\raiseop{\exc}}}{\mathbf{1}}{\mfun{\Val}}$\HugeSpace
$\mrun{\raiseop{\exc}}=\iota_2(\exc)$
\end{quoting}
The function $\predfun=\Fun{\x}{\If{\IsZero{\x}}{\raiseop{\PredZeroExc}}{\Ret{\Pred{\x}}}}$ raises the exception $\PredZeroExc$ when the argument is $0$.
The following are examples of small-step reduction sequences on monadic configurations\footnote{Where we omit the injections from monadic expressions and values.}:
\begin{quoting}
\begin{math}
\begin{array}{lcl}
\App{\predfun}{\Succ{\Zero}}&\Red&\If{\IsZero{\Succ{\Zero}}}{\raiseop{\PredZeroExc}}{\Ret{\Zero}}\\
&\Red&\If{\False}{\raiseop{\PredZeroExc}}{\Ret{\Zero}}\\
&\Red&\Ret{\Zero}\\
&\Red&\Zero
\\
\App{\predfun}{\Zero}&\Red&\If{\IsZero{\Zero}}{\raiseop{\PredZeroExc}}{\Ret{\Zero}}\\
&\Red&\If{\True}{\raiseop{\PredZeroExc}}{\Ret{\Zero}}\\
&\Red&\raiseop{\PredZeroExc}\\
&\Red&\PredZeroExc
\end{array}
\end{math}
\end{quoting}
In the first reduction sequence,  all steps are derived by  rules \refToRule{pure} in \cref{fig:monadic-red} and \refToRule{exp} in \cref{fig:mred-conf}, except for the last one, which is derived by rule \refToRule{ret} in \cref{fig:mred-conf}.   Analogously in the second reduction sequence,  where the last step is derived  by rule \refToRule{effect} in \cref{fig:monadic-red} and \refToRule{exp} in \cref{fig:mred-conf}.
Note that, here and in the following examples, after reaching a monadic result the sequence of steps  continues with an infinite sequence of steps, in the case above $\Zero\Red\Zero$ and $\PredZeroExc\Red\PredZeroExc$ steps. 
 \end{example}

\begin{example}\label{ex:pow}
Set, as underlying monad, the monad of non-determinism of \cref{ex:pow-mnd}, in the variant  of the possibly infinite lists.
We assume a constant operation $\chooseop$,  with 
\begin{quoting}
$\fun{\mrun{\chooseop}}{\mathbf{1}}{\mfun{\Val}}$\HugeSpace
$\mrun{\chooseop}=[\True,\False]$
\end{quoting}
The expression $\e=\Do{\y}{\chooseop}{\If{\y}{\Ret{\Zero}}{\Ret{\Succ{\Zero}}}}$ reduces as follows.\footnote{ We use $\Tru$, $\Fls$, $\kw{ret}$, and $\kw{s}$, for $\True$, $\False$, $\kw{return}$, and $\kw{Succ}$, to save space.}
\begin{quoting}
\begin{math}
\begin{array}{lcl}
\List{\e}&\Red&\List{\Do{\y}{\True}{\If{\y}{\Ret{\Zero}}{\Ret{\Succ{\Zero}}}}, \\
&&\Do{\y}{\False}{\If{\y}{\Ret{\Zero}}{\Ret{\Succ{\Zero}}}}}\\
&\Red&\List{\If{\Tru}{\Rt{\Zero}}{\Rt{\nat{1}}},\If{\Fls}{\Rt{\Zero}}{\Rt{\nat{1}}}}\\
&\Red&\List{\Ret{\Zero},\Ret{\Succ{\Zero}}}\\
&\Red&\List{\Zero,\Succ{\Zero}}
\end{array}
\end{math}
\end{quoting}
Given $\chooseincr=\RecFun{\f}{\x}{\Do{\y}{\chooseop}{\If{\y}{\Ret{\x}}{\App{\f}\Succ{\x}}}}$, the expression $\App{\chooseincr}{\Zero}$ reduces as follows: 
\begin{quoting}
\begin{math}
\begin{array}{lcl}
\List{\App{\chooseincr}{\Zero}}&\Red&\List{\Do{\y}{\chooseop}{\If{\y}{\Ret{\Zero}}{\App{\chooseincr}{\Succ{\Zero}}}}}\\
&\Red&\List{\Do{\y}{\True}{\If{\y}{\Ret{\Zero}}{\Ret{\App{\chooseincr}{\Succ{\Zero}}}}}, \\
&&\Do{\y}{\False}{\If{\y}{\Ret{\Zero}}{\App{\chooseincr}\Succ{\Zero}}}}\\
&\Red&\List{\If{\Tru}{\Rt{\Zero}}{\Rt{\nat{1}}},\If{\Fls}{\Rt{\Zero}}{\App{\chooseincr}{\nat{1}}}}\\
&\Red&\List{\Ret{\Zero},\App{\chooseincr}{\Succ{\Zero}}}\\
&\Red&\List{\Zero,\App{\chooseincr}{\Succ{\Zero}}}\\
&\Red&\List{\Zero,\Do{\y}{\chooseop}{\If{\y}{\Rt{\Succ{\Zero}}}{\App{\chooseincr}{\Su{\Su{\Zero}}}}}}\\
&\Redstar&\List{\Zero,\Succ{\Zero},\App{\chooseincr}{\Succ{\Succ{\Zero}}}}\\
&\ldots
\end{array}
\end{math}
\end{quoting}
 Note that the second reduction is non-terminating,  in the sense that  a monadic result (a list of values) is never reached. Hence, with the finitary semantics, we get $\finsem{\App{\chooseincr}{\Zero}}=\infty$.  
With the infinitary semantics, instead, we get the following $\omega$-chain:
\begin{quoting}
$\List{\ }$, \ldots, $\List{\Zero}$, \ldots, $\List{\Zero, \Succ{\Zero}}$, \ldots, $\List{\Zero, \Succ{\Zero}, \ldots, \SuccSeq{n}{\Zero}}$, \ldots, 
\end{quoting}
whose supremum is, as expected, the infinite list of the (values representing the) natural numbers. On the other end, given the function 
\begin{small}
\begin{quoting}
$\choosedecr=\RecFun{\f}{\x}{\If{\IsZero{\x}}{\Rt{\x}}{\Do{\y}{\chooseop}{\If{\y}{\Rt{\x}}{\App{\f}\Pred{\x}}}}}$
\end{quoting}
\end{small}
we get
$\finsem{\App{\choosedecr}{\SuccSeq{n}{\Zero}}}=\infsem{\App{\choosedecr}{\SuccSeq{n}{\Zero}}}=\List{\SuccSeq{n}{\Zero}, \ldots, \Zero}$.
\end{example}

\begin{example}\label{ex:prob}
 Set, as underlying monad, the  monad of probabilistic non-determinism 
of \cref{ex:prob-mnd}. 
We consider a discrete uniform distribution  over a set of two elements  and use the same function $\chooseop$, now
returning the list consisting of  the   values $\True$ and $\False$  with probability $\frac{1}{2}$, that we denote by ${\List{\frac{1}{2}:\True,\frac{1}{2}:\False}}$.

Then, the expressions $\List{1:\e}$ and $\List{1:\App{\chooseincr}{\Zero}}$ reduce  analogously \mbox{to  the previous example:}
\begin{quoting}
$\List{1:\e}\Redstar\List{\frac{1}{2}:\Zero,\frac{1}{2}:\Succ{\Zero}}$\\[0.5ex]
$\List{\List{1:\App{\chooseincr}{\Zero}}}\Redstar \List{ \frac{1}{2}:\Zero,\frac{1}{4}:\Succ{\Zero},\frac{1}{8}:\SuccSeq{2}{\Zero},\frac{1}{16}:\SuccSeq{3}{\Zero}}\Red\ldots$
\end{quoting}
Again, the second reduction is non-terminating, hence, with the finitary semantics, we get $\infty$, whereas, 
with the infinitary semantics, we get an $\omega$-chain whose supremum is the infinite list where each (value representing the) number $n$ has \mbox{probability $\frac{1}{2^{n+1}}$. }
\end{example}

\begin{example}\label{ex:output}
 Set, as underlying monad, the output monad of \cref{ex:output-mnd}, in its pointed version. As a simple concrete choice, we take as elements of $\OutSet$ sequences of pairs $\Pair{\loc}{\SuccSeq{n}{\Zero}}$ where $\loc$ ranges over a fixed set $\LocSet$ of  \emph{output locations}  modeling, e.g., file names or output channels. 
We assume, for each $\loc$, an operation $\fun{\writeop{\loc}}{\Nat}{\Unit}$, with
\begin{quoting}
$\fun{\mrun{\writeop{\loc}}}{\Val}{\mfun{\Val}}$\HugeSpace
$\mrun{\writeop{\loc}}(\val)=
\begin{cases}
\Pair{\Pair{\loc}{\val}}{\unit}&\mbox{if}\ \val=\SuccSeq{n}{\Zero}\\
\mbox{undefined} & \mbox{otherwise}
\end{cases}
$
\end{quoting}
Given two distinct  output locations  $\loc$, $\loc'$, and the functions
\begin{quoting}
 $\writeincr=\RecFun{\f}{\x}{\Seq{\writeop{\loc}(\x)}{\Seq{\writeop{\loc'}(\x)}{\App{\f}{\Succ{\x}}}}}$\\
 $\writedecr=\RecFun{\f}{\x}{\Seq{\writeop{\loc}(\x)}{\Seq{\writeop{\loc'}(\x)}{\If{\IsZero{\x}}{\unit}{\App{\f}{\Pred{\x}}}}}}$
\end{quoting}
we get, as in the previous examples, the following semantics:
\begin{quoting}
$\finsem{\App{\writeincr}{\Zero}}=\infty$\\
$\infsem{\App{\writeincr}{\Zero}}=\Pair{\Pair{\loc}{\Zero}\cdot \Pair{\loc'}{\Zero}\cdot\Pair{\loc}{\Succ{\Zero}}\cdot\Pair{\loc'}{\Succ{\Zero}}\cdot\ldots}{\mbot}$\\
$\finsem{\App{\writedecr}{\SuccSeq{n}{\Zero}}}=\infsem{\App{\writedecr}{\SuccSeq{n}{\Zero}}}=\Pair{\Pair{\loc}{\SuccSeq{n}{\Zero}}\cdot\Pair{\loc'}{\SuccSeq{n}{\Zero}}\cdot\ldots\cdot\Pair{\loc}{\Zero}\cdot\Pair{\loc'}{\Zero}}{\unit}$
\end{quoting}
In the first two cases the reduction does not terminate, so no value is returned. With the finitary semantics also no effect is produced, whereas with the infinitary semantics the effect is the infinite sequence of outputs. 
\end{example}
In order to equip the calculus (\cref{fig:syntax}) with  a type-and-effect  system, we need the ingredients shown in  \cref{fig:typed}.  
\begin{figure}[th]
\begin{math}
\begin{grammatica}
\produzione{\T}{\ldots\mid\FunType{\T}{\eff}{\T'}}{type}\\
\produzione{\Gamma}{\seq{\x:\T}}{context}
\end{grammatica}
\end{math}
\caption{Types and contexts}\label{fig:typed}
\end{figure}
Besides types, which are functional types and additional unspecified types, we consider  \emph{effect types} (\emph{effects} when there is no ambiguity), ranged over by $\eff$,  meant to be static approximations of the computational effects raised by an expression.
As formally detailed below, effects  are  sets of (possibly infinite) sequences of operations. In this way, they are expressive enough to approximate computational effects in many different monads, as we will describe in \cref{sect:results}, and we abstract away from details of a syntactic representation, which of course would be needed in a real language. Functional types are annotated with an effect,  approximating the computational effects of calling the function.
Finally, we assume operations to be typed; formally, for each $\opg$, we write $\fun{\opg}{\T_1\ldots\T_n}{\T}$.

Set $\Sigma^\infty$ the set of either finite or infinite sequences of operations. 
We use $\alpha,\beta$ to range over elements of $\Sigma^\infty$, denote by $\epsilon$ the empty sequence, by $\Cons{\opg}{\alpha}$ the sequence consisting of $\opg$ followed by $\alpha$, and by 
$\EComp{}{}$ sequence concatenation, coinductively defined by: 
\begin{quoting}
$\EComp{\epsilon}{\beta}=\beta$\HugeSpace
$\EComp{(\Cons{\opg}{\alpha})}{\beta}=\Cons{\opg}{(\EComp{\alpha}{\beta})}$
\end{quoting}
As customary, we write $\opg$ for the sequence $\Cons{\opg}{\epsilon}$.

 An \emph{effect} is a  non-empty subset of $\Sigma^\infty$. 
 We denote by $\EComp{}{}$ composition of effects, defined by:
\begin{quoting}
$\EComp{\eff}{\eff'}=\{\EComp{\alpha}{\beta}\mid\alpha\in\eff, \beta\in\eff'\}$
\end{quoting}
 Absence of effects is modeled by the set $\{\epsilon\}$; the empty effect, if allowed, could be assigned to non-terminating computations which never call operations; however, since $\EComp{\eff}{\emptyset}=\emptyset$, effects assigned to a previous terminating computation would be lost. 

\begin{figure}[th]
\begin{math}
\begin{array}{l}
\NamedRule{sub-fun}
{\SubT{\T'_1}{\T_1}\BigSpace\SubT{\T_2}{\T_2'}}
{\SubT{\FunType{\T_1}{\eff}{\T_2}}{\FunType{\T_1'}{\eff'}{\T'_2}}}
{\SubE{\eff}{\eff'}}
\BigSpace
\NamedRule{sub-refl}{}{\SubT{\T}{\T}}{}
\\[3ex]
\NamedRule{sub-trans}{\SubT{\T}{\T'}\Space\SubT{\T'}{\T''}}{\SubT{\T}{\T''}}{}
\BigSpace
\NamedRule{sub-te}{\SubT{\T}{\T'}\Space\SubE{\eff}{\eff'}}{\SubTE{\TEff{\T}{\eff}}{\TEff{\T'}{\eff'}}}{}
\\[2ex]
\end{array}
\end{math}

\hrule

\begin{math}
\begin{array}{l}
\\[-1ex]
\NamedRule{t-var}{}
{ \IsWFExp{\Gamma}{\x}{\T} }
{\Gamma(\x)=\T } 
\BigSpace
\NamedRule{t-abs}{
  \IsEffExp{\Gamma,\TVar{\FunType{\T}{\eff}{\T'}}{\f},\TVar{\T}{\x}}{\e}{\T''}{\eff'}
}
{ \IsWFExp{\Gamma}{\RecFun{\f}{\x}{\e}}{\FunType{\T}{\eff}{\T'}} }
{ \SubTE{\TEff{\T''}{\eff'}}{\TEff{\T'}{\eff}} }
\\[3ex]
\end{array}
\end{math}

\hrule

\begin{math}
\begin{array}{l}
\\
\NamedRule{t-app}{
  \begin{array}{l}
  \IsWFExp{\Gamma}{\ve_1}{\FunType{\T_1}{\eff}{\T}}\\
  \IsWFExp{\Gamma}{\ve_2}{\T_2}
  \end{array}}
  {  \IsEffExp{\Gamma}{\App{\ve_1}{\ve_2}}{\T}{\eff}}
{\SubT{\T_2}{\T_1}}
\BigSpace
\NamedRule{t-op}{\IsWFExp{\Gamma}{\ve_i}{\T'_i}\ \ \forall i\in 1..n}{\IsEffExp{\Gamma}{\opg(\ve_1,\ldots,\ve_n)}{\T}{\SetOf{\opg}}}
{
\fun{\opg}{\T_1\ldots\T_n}{\T}\\
\SubT{\T'_i}{\T_i} \ \forall i\in 1..n
}
\\[3ex]
\NamedRule{t-ret}{
  \IsWFExp{\Gamma}{\ve}{\T}
  }
{ \IsEffExp{\Gamma}{\Ret\ve}{\T}{\{\eZero \}} }
{ } 
\BigSpace
\NamedRule{t-do}{
  \IsEffExp{\Gamma}{{\e_1}}{\T_1}{\eff_1} 
  \BigSpace
  \IsEffExp{\Gamma,\TVar{\T_2}{\x}}{{\e_2}}{\T}{\eff_2} 
}
{ \IsEffExp{\Gamma}{\Do\x{\e_1}{\e_2}}{\T}{\EComp{\eff_1}{\eff_2}} }
{ \SubT{\T_1}{\T_2}  } 
\end{array}
\end{math}
\caption{Type-and-effect system}\label{fig:typing}
\end{figure}

The type-and-effect system is shown in \cref{fig:typing}.
The subtyping judgment has shape $\SubT{\T}{\T'}$. In \refToRule{sub-fun} inclusion of effect types is propagated to functional types. 
So a function producing less effects can be used where one producing more effects is needed. Moreover subtyping is, as expected, covariant/contravariant on the result/parameter of functions. The other rules are standard. 

The typing judgment for values has shape $\IsWFExp{\Gamma}{\ve}{\T}$, since they have no effects. The judgment for expressions, instead, has shape $\IsEffExp{\Gamma}{\e}{\T}{\eff }$. 

Rule \refToRule{t-var} is standard. In rule \refToRule{t-abs}, a (possibly recursive) function gets a functional type, consisting of parameter/result types and effect, if the body, in a context where  parameter and function are added with their types, gets 
a subtype and a subeffect of the result type and effect of the function. 
In rule \refToRule{t-app}, an application gets the result type and the effect of the applied function, provided that the argument type is subtype of the expected one. 
In rule \refToRule{t-op}, calling an operation raises the   corresponding singleton effect,  provided that the argument types are subtypes of the expected ones.
In rule \refToRule{t-ret},  an expression representing a value  has the trivial effect, and, in \refToRule{t-do}, a sequential composition of two computations has the composition of the two effects.

\begin{example}\label{ex:typing} We show some typing judgments which can be derived for the previous examples. We assume primitive types $\Nat$ and $\Bool$, an empty type $\Bot$ subtype of any type, the singleton type $\Unit$ for the constant $\unit$, and the obvious typing rule for conditional which takes the union of the effects of the two branches.  Finally, we denote by $\alpha^n$ and $\alpha^\omega$ a finite and infinite concatenation of $\alpha$s, respectively. 
\begin{enumerate}
\item \label{ex:typing:exc} In \cref{ex:exc}, with, for each $\exc\in\ExSet$, $\fun{\raiseop{\exc}}{\mathbf{1}}{\Bot}$,
\begin{quoting}
$\IsWFExp{\emptyset}{\predfun}{\FunType{\Nat}{\SetOf{\epsilon,\raiseop{\PredZeroExc}}}{\Nat}}$\\
$\IsEffExp{\emptyset}{\App{\predfun}{\val}}{\Nat}{\SetOf{\epsilon,\raiseop{\PredZeroExc}}}$ if $\IsWFExp{\emptyset}{\val}{\Nat}$
\end{quoting}
 Note a significant feature of our type effects:  differently from, e.g., Java checked exceptions, we can distinguish code which \emph{may} raise an exception, as expressed by the effect   $\SetOf{\epsilon,\raiseop{\PredZeroExc}}$, from code which \emph{necessarily} raises an exception, as expressed by the effect $\SetOf{\raiseop{\PredZeroExc}}$, which is assigned, e.g., to the function $\Fun{\x}{\raiseop{\PredZeroExc}}$.  More in general, our type effects can \emph{force} computational effects to be raised. 

\item  \label{ex:typing:pow} In \cref{ex:pow}, with $\fun{\chooseop}{}{\Bool}$, 
\begin{quoting}
$\IsWFExp{\emptyset}{\chooseincr}{\FunType{\Nat}{\SetOf{\chooseop^n\mid  n\geq 1}}{\Nat}}$\\
$\IsWFExp{\emptyset}{\choosedecr}{\FunType{\Nat}{\SetOf{\chooseop^n\mid   n\geq 0 }}{\Nat}}$
\end{quoting}
 Again, the effect of the first function forces non-determinism, differently from that of the second one. 
Apart from that, the two effects are very similar, even though calls of the first and second function always diverge and terminate, respectively.
Indeed, as usual, effect types only provide a static approximation of the computational effects. 

\item  \label{ex:typing:output} In \cref{ex:output}, with, for each  output location  $\loc$, $\fun{\writeop{\loc}}{\Nat}{\Unit}$,
\begin{quoting}
$\IsWFExp{\emptyset}{\writeincr}{\FunType{\Nat}{\SetOf{(\writeop{\loc}\cdot\writeop{\loc'})^\omega}}{\Unit}}$\\
$\IsWFExp{\emptyset}{\writedecr}{\FunType{\Nat}{\SetOf{(\writeop{\loc}\cdot\writeop{\loc'})^n\mid  n\geq0}}{\Unit}}$
\end{quoting}
 Here the difference between the effects of the two functions is even more significant: in the former, the sequence of two $\aux{write}$ calls is necessarily done infinitely many times, in the latter it can be done any arbitrary, yet finite,  positive  number of times. 
Moreover, in this case effects also provide an information on the the order among different $\aux{write}$ calls; for instance, here a  $\writeop{\loc}$ call should be always followed by a $\writeop{\loc'}$ call. 
\end{enumerate}
\end{example}
We discuss now how to express and prove type soundness.
Recall that the monadic operational semantics defined in \cref{sect:monadic-sem} constructs, on top of the one-step reduction:
\begin{itemize}
\item a finitary semantics $\fun{\finsem{-}}{\Exp}{\mfun\Res + \{\infty\}}$
\item an infinitary semantics  $\fun{\infsem{-}}{\Exp}{\mfun\Res}$
\end{itemize}
where $\Res = \Val + \Wrng$, with the latter modelling  a stuck computation. 
Hence, we expect a sound type-and-effect system to guarantee, first of all, that 
\begin{quoting}
(1) the (monadic) result of a well-typed expression is never $\wrng$
\end{quoting}
analogously to what we expect for a standard type system.
In the standard case, we also expect the result, if any, to be in agreement with the expression type. Here, since the expression has an effect as well, approximating the computational effects raised by its execution, we expect that
\begin{quoting}
(2) the monadic result, if any, is in agreement with the expression type and effect
\end{quoting}
In finitary semantics, (2) imposes nothing on diverging expressions, since they have no monadic result, whereas, in infinitary semantics, (2) is significant \mbox{for diverging expressions as well. }

To formally express (2), we need to derive, from the well-typedness predicates (one for each type and effect), analogous predicates on monadic results.  In the following section, this is achieved through a  \emph{predicate lifting} \cite{Jacobs16} $\mlift$, that is,  a way to lift, for every set $\X$, predicates over $X$ to predicates over $\mfun X$. Intuitively, $\mlift$ adds requirements on the computational effects,expressed by an effect type, that is, lifting provides an \emph{interpretation of effect types}.


\section{Monadic type-and-effect soundness}
\label{sect:monadic-ty}

The  standard technique  for proving type soundness with respect to a small-step operational semantics is  as a consequence, by a simple inductive argument, of progress and subject reduction properties  \cite{WrightF94}. 
In this section, we   introduce an analogous technique for  monadic operational semantics.  Notably, we express progress and subject reduction for the monadic one-step reduction,  and prove that they imply soundness.  
We develop our technique for \emph{type-and-effect systems} \cite{Wadler98,NielsonN99,WadlerT03,MarinoM09,Katsumata14}, 
that is,  formal systems  providing an (over)approximation not only of the result of a computation, but also \mbox{of  its  computational effects. }

Following \cite{DagninoBZD20,Dagnino22}, a type system can be abstractly seen as a family of predicates over expressions and values indexed by types. 
  In a type-and-effect system, predicates over expressions will be indexed not only by types but also by \emph{effect types}, describing the computational effects that  expressions can produce during their evaluation, as defined below.
\begin{definition}\label{def:mnd-type-system}
A \emph{type-and-effect system} 
$\TS = \ple{\Types,\MEff,\EWT,\VWT}$ for a language $\lang=\ple{\Exp,\Val,\valtoexp}$ consists of the following data: 
\begin{itemize}
\item a set $\Types$ of \emph{types}
\item an ordered monoid $\MEff = \ple{\Eff,\eord,\emul,\eun}$ of \emph{effect types}
\item for every $\ty\in\Types$ and $\ef\in\Eff$, predicates 
$\VWT[\ty]\subseteq\Val$ and $\EWT[\ty,\ef]\subseteq\Exp$  such that 
\begin{itemize}
\item $\ef\eord\ef'$ implies $\EWT[\ty,\ef]\subseteq\EWT[\ty,\ef']$ and 
\item $\valtoexp(\val) \in \EWT[\ty,\ef]$ iff $\val\in\VWT[\ty]$ and $\eun\eord\ef$
\end{itemize}
\end{itemize}
\end{definition}
The ordered monoid is a typical structure for effect systems \cite{NielsonN99,MarinoM09,Katsumata14}:
$\eun$ represents the absence of computational effects, 
$\ef_1\emul\ef_2$ represents the composition of computational effects described by $\ef_1$ and $\ef_2$, and 
$\ef_1\eord\ef_2$ states that the effect type $\ef_1$ is \mbox{more specific than $\ef_2$. }

The two families $\VWT$ and $\EWT$ are, for each index, predicates over values and expressions, respectively: 
$\VWT[\ty]$ is the set of values of type $\ty$, and 
$\EWT[\ty,\ef]$ is the set of expressions of type $\ty$ which may raise effects described by $\ef$. 
The first requirement, that is, monotonicity with respect to the order, states that the latter actually models if $\ef_1\eord\ef_2$, then $\ef_1$ is really more specific than $\ef_2$. 
The second requirement states that an expression which is the embedding of a value has the same type,  and an effect type which is not forcing any effect.\footnote{For instance, in \refItem{ex:typing}{exc}, we have $\SetOf{\epsilon}\eord\SetOf{\epsilon,\raiseop{\PredZeroExc}}$, whereas $\SetOf{\epsilon}\not\eord\SetOf{\raiseop{\PredZeroExc}}$.} 

Consider now an operational semantics $\Pair{\mnd}{\red}$, with $\mnd=\ple{\mfun,\mmul,\mun}$, and focus, e.g., on reduction from expressions to monadic expressions. To express type preservation, we should define, for each  $\ty$ and $\ef$, the monadic counterpart of $\EWT[\ty,\ef]$, being a predicate on $\mfun\Exp$. 
The key idea is to obtain such predicate by applying a \emph{predicate lifting} \cite{Jacobs16}, that is,  a way to lift, for every set $\X$, predicates over $\X$ to predicates over $\mfun \X$, adding requirements on the computational effects modeled by the monad. 
In our case, 
for each effect type $\ef$, the predicate lifting modularly models the meaning of $\ef$, that is, the computational effects approximated by $\ef$, independently from the set $X$ and the predicate $A$, \mbox{as formally detailed below. }

For a set $X$, we denote by $\PW(X)$ the poset of all subsets (a.k.a.\ predicates) on $X$, ordered by subset inclusion. 
For a function \fun{f}{X}{Y}, we have a monotone function \fun{\PW_f}{\PW(Y)}{\PW(X)}, given by the inverse image: 
for $A\subseteq Y$, $\PW_f(A) = \{ x \in X \mid f(x) \in A \}$.  That is, $\PW_f$ is a predicate transformer, giving, for each predicate $A$ on $Y$, the weakest condition elements of $X$ should satisfy to be mapped by $f$ in elements satisfying $A$. 
These data determine a functor \fun{\PW}{\Set\op}{\Pos}, where $\Pos$ denotes the category of posets and monotone functions. 

\begin{definition}[Interpretation of effect types] \label{def:mlift} 

Let $\mnd = \ple{\mfun,\mmul,\mun}$ be a monad, and $\MEff = \ple{\Eff,\eord,\emul,\eun}$ an ordered monoid of effect types.
Then, an \emph{interpretation} of $\MEff$ in $\mnd$
consists of a family $\mlift$ of monotone functions \fun{\mlift[\ef]_X}{\PW(X)}{\PW(\mfun X)}, 
\mbox{for every $\ef\in\Eff$ and set $X$, such that}
\begin{enumerate}[series=enum-mlift] 
\item\label{def:mlift:nat}  $\mlift[\ef]_X(\PW_f(A)) = \PW_{\mfun f}(\mlift[\ef]_Y(A))$, for every $A\subseteq Y$ and function \fun{f}{X}{Y}
\item\label{def:mlift:mon}  $\ef\eord\ef'$ implies $\mlift[\ef]_X(A)\subseteq\mlift[\ef']_X(A)$, for every $A\subseteq X$, 
\item\label{def:mlift:unit}  $A \subseteq \PW_{\mun_X}(\mlift[\eun]_X(A))$, for every $A\subseteq X$, 
\item\label{def:mlift:mul}  $\mlift[\ef]_{\mfun X}(\mlift[\ef']_X(A)) \subseteq \PW_{\mmul_X}(\mlift[\ef\emul\ef']_X(A))$, for every $A\subseteq X$. 
\end{enumerate} 
\end{definition}

 The family $\mlift = (\mlift[\ef])_{\ef\in\Eff}$ is a family of predicate liftings for the monad $\mnd$, indexed by effect types. 
For a subset $A\subseteq X$, the subset $\mlift[\ef]_X(A)\subseteq\mfun X$ contains monadic elements which agree with $A$ and whose computational effects are described by $\ef$.

\cref{def:mlift:nat} states that $\mlift[\ef]_X$ is natural in $X$, that is, for every $\ef\in\Eff$, we have a natural transformation 
\nt{\mlift[\ef]}{\PW}{\PW\circ\mfun\op}.\footnote{Here $\fun{\mfun\op}{\Set\op}{\Set\op}$ denotes the  functor defined exactly as $\mfun$ but \mbox{on the opposite category.}}
The naturality on $X$ ensures that the semantics of each effect type is independent from the specific set $X$, thus depending only on the functor $\mfun$. 

\cref{def:mlift:mon} states that $\mlift[\ef]_X$ is monotone with respect to the order on effects, that is, 
computational effects described by $\ef$ are also described by $\ef'$. 

\cref{def:mlift:unit} states that monadic elements in the image of $\mun_X$ contain computational effects described by $\eun$, that is, no computational effect. 

Finally, in \cref{def:mlift:mul} we consider elements of $\mfun^2\X$ whose computational effects are described by lifting predicates to $\mfun\X$ through $\ef'$,  and then by lifting through $\ef$.
By flattening such elements through $\fun{\mmul_\X}{\mfun^2\X}{\mfun\X}$ we obtain elements whose computational effects are described by the composition $\ef\emul\ef'$.

\begin{remark}\label{rem:graded-mnd} 
The monad $\mnd$ with an interpretation $\mlift$ determine a structure on the functor $\PW$, 
which can be described as a graded/parametric monad \cite{FujiiKM16} on $\PW$ in an appropriate 2-category (see e.g., \cite{DagninoR21}). 
Equivalently, $\mlift$ determines a graded/parametric monad above $\mnd$ \cite[Def.~2.6]{Katsumata14} 
along the fibration obtained from $\PW$ by the Grothendieck construction \cite{Grothendieck71}. 
\end{remark}
 
\begin{example}
\label{ex:exception-mlift}
Consider the exception monad $\Mnd{\ExceptFun[\ExSet]}$ of \cref{ex:exception-mnd} and the ordered monoid 
 \ple{\wp(\ExSet + \{\noEx\}),\subseteq,\cdot,\{\noEx\}} where 
$\excset_1\cdot\excset_2 = (\excset_1\setminus\{\noEx\})\cup\excset_2$, if $\noEx\in\excset_1$, and $\excset_1\cdot\excset_2 = \excset_1$, otherwise. 
For every $\excset\in\wp(\ExSet + \{\noEx\})$, set $X$, and $A\subseteq X$, 
the assigment 
\begin{quoting}
\begin{math}
\mlift[\excset]_X(A) = 
\begin{cases}
A + (\excset\setminus\{\noEx\})&\mbox{if}\ \noEx\in\excset\\
\excset&\mbox{otherwise}
\end{cases}
\end{math}
\end{quoting}
determines an interpretation of effect types into $\Mnd{\ExceptFun[\ExSet]}$. 
Intuitively,  the interpretation of $\excset$ requires exceptions possibly raised to be in $\excset$, and, if it is allowed that no exception be raised ($\noEx\in\excset$), requires the predicate $A$ to be satisfied.   
\end{example}

\begin{example}
\label{ex:pow-mlift}
Consider the powerset monad $\Mnd\PowerFun$ of \cref{ex:pow-mnd}.
\begin{enumerate}
\item\label{ex:pow-mlift:1}
Taking   
the ordered monoid \ple{\{0,1\},\le,\lor,0}, for every set $X$ and $A\subseteq X$, the following assignments determine two interpretations of effect types into $\Mnd\PowerFun$: 
\begin{quoting}
$\AllLift^1_{\!X}(A) = \{ B \in \PowerFun X \mid B \subseteq A \}$, \\
$\ExLift^1_X(A)  = {\{ B \in \PowerFun X \mid B = \emptyset \text{ or }B \cap A \ne \emptyset \}}$ and \\
$\AllLift^0_{\!X}(A) = \ExLift^0_X(A) = \{ B \in \PowerFun X  \mid B\subseteq  A \text{ and } \sharp B\leq 1\}$
\end{quoting}
  where $\sharp B$ is the cardinality of $B$.  Intuitively, in both cases, the interpretation of $0$  disallows non-determinism,  while 
the interpretation of $1$ requires the predicate $A$ to be always satisfied, according to $\AllLift$, and 
satisfied in at least one case, according to $\ExLift$. 
\item\label{ex:pow-mlift:2} Taking instead the ordered monoid \ple{\N\cup\{\infty\},\leq,\cdot,1}, \mbox{we can give a finer version of $\AllLift$:}
\begin{quoting}
$\AllLift^n_X(A) = \{ B\in\PowerFun X \mid B\subseteq A \text{ and } \sharp B \leq n \}$ \\ 
$\AllLift^\infty_X(A) = \{ B\in\PowerFun X \mid B \subseteq A \}$ 
\end{quoting}
In this way, we can quantify the level of non-determinism in terms of the maximum number of possible outcomes.  
\end{enumerate}
Similar interpretations can be defined for the list and subdistribution monads of \cref{ex:prob-mnd}. 
\end{example}

\begin{example}
\label{ex:output-mlift}
Consider the output monad $\Mnd\OutFun$ of \cref{ex:output-mnd} for the monoid \ple{A^\infty,\cdot,\elist} of possibly infinite words over $A$ and the ordered monoid 
\ple{\N\cup\{\infty\},\leq,+,0} of effect types. 
For a word $\sigma\in A^\infty$, we write $|\sigma|$ for its length, which is an element of $\N\cup\{\infty\}$. 
For every $n \in \N\cup\{\infty\}$, set $X$ and $A\subseteq X$, the assignment 
$\mlift[n]_X(A) = \{ \ple{\sigma,x}\in \OutFun X \mid x \in A, |\sigma| \leq n \}$ 
determines an interpretation of effect types into $\Mnd\OutFun$. 
 Intuitively, such interpretation imposes an upper bound (or none) to the length of the outputs.
\end{example}

\begin{example}
\label{ex:translate-mlift}
Let $\MEff$ be an ordered monoid of effect types and $\mlift$ an interpretation of $\MEff$ into  a monad $\mnd$. 
Let $\MEff'$ be another ordered monoid. 
To give an interpretation of $\MEff'$ into $\mnd$, it suffices to give a \emph{lax monoid homomorphism} \fun{f}{\MEff'}{\MEff}, that is, 
a monotone function \fun{f}{\ple{\Eff',\eord'}}{\ple{\Eff,\eord}} such that 
$\eun\eord f(\eun')$ and 
$f(\ef'_1)\emul f(\ef'_2) \eord f(\ef'_1\emul\ef'_2)$. 
Then, we can define an interpretation $\rho$ of $\MEff'$ into $\mnd$ by setting 
$\rho^{\ef'} = \mlift[f(\ef')]$ for all $\ef'\in\Eff'$. 
\end{example}

Let us fix 
a monadic operational semantics $\Pair{\mnd}{\red}$ for a language $\lang=\ple{\Exp,\Val,\valtoexp}$, a type-and-effect system $\TS = \ple{\Types,\MEff,\EWT,\VWT}$ for $\lang$, and an interpretation $\mlift$ of $\MEff$ into $\mnd$. 

Then,  we can formally state monadic progress and monadic subject reduction. 

\begin{definition}[Monadic Progress]\label{def:mnd-progress}
The type-and-effect system $\TS$ has \emph{monadic progress}  if 
$\expr\in\EWT[\ty,\ef]$ implies either $\expr = \valtoexp(\val)$ for some $\val\in\Val$, or 
$\expr\red\mexpr$ for some $\mexpr\in\mfun\Exp$.
\end{definition}

\begin{definition}[Monadic Subject Reduction] \label{def:mnd-sr}
The type-and-effect system $\TS$ has \emph{monadic subject reduction} if 
$\expr\in\EWT[\ty,\ef]$ and $\expr\red\mexpr$ imply $\mexpr \in \mlift[\ef_1]_{\Exp}(\EWT[\ty,\ef_2])$ for some $\ef_1\emul\ef_2\eord\ef$. 
\end{definition}

Monadic progress is standard: a well-typed expression either represents a value or can reduce. 
Monadic subject reduction, instead, takes into account effects: 
if an expression of type $\ty$ and effect $\ef$ reduces to a monadic expression $\mexpr$, then  $\mexpr$ ``has type $\ty$ and effect $\ef$'' as well, meaning that:  $\ef$ can be decomposed as $\ef_1\emul\ef_2$ and 
$\mexpr$ contains computational effects described by $\ef_1$ and expressions of type $\ty$ and effect $\ef_2$. 
In other words, the type $\ty$ is preserved and  the effect $\ef$ is an upper bound of the computational effects produced by the current reduction step, described by $\ef_1$, composed with those produced by future reductions, described by $\ef_2$. 

Our next step is expressing type-and-effect soundness. 
 In standard small-step semantics, soundness means that, starting from a well-typed expression, if termination, that is, an expression which cannot be reduced, is reached, then such expression should be a well-typed value.
In our monadic operational semantics, termination is conventionally represented by monadic results. Hence, an analogous statement is that, starting from a well-typed expression, if termination, that is, a monadic result, is reached, then this should be a well-typed result, meaning that is satisfies the lifting through the effect type of well-typedness of values.   

%
Again slightly abusing the notation, 
we will consider predicates $\VWT[\ty]$ on values also as predicates on results. 
Note that in particular  
$\wrng\notin\VWT[\ty]$ for all $\ty\in\Types$. 

\begin{definition}[Finitary type-and-effect soundness]\label{def:mnd-sound-fin}
The type-and-effect system $\TS$ is \emph{finitarily sound} if 
$\expr\in\EWT[\ty,\ef]$ and $\finsem{\expr} = \mres$ imply $\mres\in\mlift[\ef]_\Res(\VWT[\ty])$. 
\end{definition}
This notion of soundness is very general: 
whenever an expression of type $\ty$ and effect $\ef$ evaluates to a monadic result, 
this belongs to the interpretation of $\ef$ applied to (the image of) values of type $\ty$. 
Hence, the nature of the soundness property heavily depends on the interpretation $\mlift$ of effect types.
For instance, considering the interpretations for the powerset monad of \cref{ex:pow-mlift}, 
$\AllLift$ and $\ExLift$ induce induce a notion of must-soundness, and may-soundness, respectively: 
the former ensures that the evaluation of a well-typed expression never reaches $\wrng$, 
while the latter only that it either diverges or reaches at least a well-typed value. 

More specifically, it is not guaranteed that the monadic result is actually a monadic value. 
Formally, viewing $\mfun\Val$ as a subset of $\mfun\Res$,  the inclusion $\mlift[\ef]_\Res(\VWT[\ty])\subseteq \mfun\Val$ does not hold in general, as happens for instance with the $\ExLift$ interpretation. 
However, we can recover this property when the interpretation $\mlift$ enjoys an additional condition, as detailed below. 

Given a function \fun{f}{X}{Y}, the mapping \fun{\PW_f}{\PW(Y)}{\PW(X)}  has a left adjoint \fun{\img{f}}{\PW(X)}{\PW(Y)}, that is, 
a monotone function such that, for every $A\subseteq X$ and $B\subseteq Y$, $\img{f}(A)\subseteq B$ if and only if $A\subseteq \PW_f(B)$. 
The function $\img{f}$ is the direct image along $f$, that is, for $A\subseteq X$,  
$\img{f}(A) = \{ f(x) \mid x \in A \}$. 
Then, the following is an easy observation. 
\begin{proposition}\label{prop:must-sound}
If $\mlift$ satisfies 
\begin{enumerate}[resume=enum-mlift] 
\item\label{mlift:img}   $\mlift[\ef]_Y(\img{f}(A)) \subseteq \img{\mfun f}(\mlift[\ef]_X(A))$ for \fun{f}{X}{Y} and $A\subseteq X$
\end{enumerate}
then $\mlift[\ef]_\Res(\VWT[\ty]) \subseteq \mfun\Val$. 
\end{proposition}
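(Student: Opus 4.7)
The plan is to unfold the identification of $\VWT[\ty]$ as a predicate on $\Res$ via the slight abuse of notation introduced just before the statement, and then apply condition \refItem{def:mlift}{mlift:img} with the coproduct injection $\iota^\Res_\Val$.

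More precisely, viewing $\VWT[\ty]\subseteq\Val$ as a predicate on $\Res$ amounts to taking its direct image along $\iota^\Res_\Val$, that is, $\VWT[\ty]$ on $\Res$ is $\img{\iota^\Res_\Val}(\VWT[\ty])$ (where the inner occurrence is the original predicate on $\Val$). In particular, this is consistent with $\wrng\notin\VWT[\ty]$. Similarly, viewing $\mfun\Val$ as a subset of $\mfun\Res$ means taking the direct image $\img{\mfun\iota^\Res_\Val}(\mfun\Val)$; since monads on $\Set$ preserve injections, $\mfun\iota^\Res_\Val$ is injective, so this direct image is in bijection with $\mfun\Val$ itself.

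First I would apply condition \refItem{def:mlift}{mlift:img} with $f = \iota^\Res_\Val$ and $A = \VWT[\ty]$, obtaining
\[
\mlift[\ef]_\Res(\VWT[\ty]) \;=\; \mlift[\ef]_\Res\bigl(\img{\iota^\Res_\Val}(\VWT[\ty])\bigr) \;\subseteq\; \img{\mfun\iota^\Res_\Val}\bigl(\mlift[\ef]_\Val(\VWT[\ty])\bigr).
\]
Then, since $\mlift[\ef]_\Val(\VWT[\ty])\subseteq \mfun\Val$ trivially, by monotonicity of the direct image we get
\[
\img{\mfun\iota^\Res_\Val}\bigl(\mlift[\ef]_\Val(\VWT[\ty])\bigr) \;\subseteq\; \img{\mfun\iota^\Res_\Val}(\mfun\Val),
\]
and the right-hand side is, by the identification above, exactly $\mfun\Val$ seen as a subset of $\mfun\Res$. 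Chaining these two inclusions yields the thesis.

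There is no real obstacle here; the whole content of the argument is recognizing that the convention ``$\VWT[\ty]$ also as a predicate on $\Res$'' is literally a direct image along $\iota^\Res_\Val$, so that condition \refItem{def:mlift}{mlift:img} can be applied off the shelf. The only delicate point to double-check is that the two identifications (predicate on $\Res$ from $\Val$, and subset $\mfun\Val$ of $\mfun\Res$) are compatible, which follows from injectivity of $\iota^\Res_\Val$ and its preservation by $\mfun$, invoked earlier in the paper.
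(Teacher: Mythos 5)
Your proof is correct and is essentially identical to the paper's: both make the implicit inclusion $\iota^\Res_\Val$ explicit as a direct image and then chain $\mlift[\ef]_\Res(\img{\iota^\Res_\Val}(\VWT[\ty])) \subseteq \img{\mfun\iota^\Res_\Val}(\mlift[\ef]_\Val(\VWT[\ty])) \subseteq \img{\mfun\iota^\Res_\Val}(\mfun\Val)$. Your additional remark that the two identifications are compatible because $\mfun$ preserves injections is a welcome bit of extra care, but the argument is the same.
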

\begin{proof}
Recall that we are implicitly using an inclusion \fun{\iota^\Res_\Val}{\Val}{\Res}. 
Making it explicit, the thesis becomes 
$\mlift[\ef]_\Res(\img{\iota^\Res_\Val}(\VWT[\ty])) \subseteq \img{\iota^\Res_\Val}(\mfun\Val)$. 
This follows from 
$ \mlift[\ef]_\Res(\img{\iota^\Res_\Val}(\VWT[\ty]))
    \subseteq \img{\mfun\iota^\Res_\Val}(\mlift[\ef]_\Val(\VWT[\ty]))
    \subseteq \img{\mfun\iota^\Res_\Val}(\mfun\Val)$. 
\end{proof}
Note that the inclusion in \cref{mlift:img} %
is actually an equality, since the converse always holds thanks to \cref{def:mlift:nat} of \cref{def:mlift}, 
as $\img{f}$ is the left adjoint of $\PW_f$. 
This ensures that 
a monadic result $\mres \in \mlift[\ef]_\Res(\VWT[\ty])$ contains only values of type $\ty$, hence, in particular, cannot contain $\wrng$. 
In fact, the $\ExLift$ interpretation of \cref{ex:pow-mlift} does not satisfy \cref{mlift:img} of \cref{prop:must-sound}. 
 
From now on, we assume that $\TS$ has monadic progress and monadic subject reduction, and our goal is to prove that they imply type-and-effect soundness. 

We first extend the type-and-effect system to configurations, defining $\CWT[\ty,\ef] \subseteq \Conf$  as 
\begin{quoting}
\begin{math}
\CWT[\ty,\ef] = \begin{cases}
\EWT[\ty,\ef] + \VWT[\ty] & \text{if $\eun\eord\ef$} \\ 
\EWT[\ty,\ef] & \text{otherwise} 
\end{cases}
\end{math}
\end{quoting}
Note that $\wrng$ is never a well-typed configuration, while  configurations which are 
values of type $\ty$ are well-typed with type $\ty$ and effect $\ef$ only when $\ef$ is larger than $\eun$, that is, 
the  type effect   does not force raising effects.  

Then, we should extend monadic progress and monadic subject reduction to the reduction relation $\tred$. 
However, since it is a total function, it trivially enjoys progress, 
hence, we only have to deal with subject reduction. 
In the proof, we also use monadic progress of the monadic reduction $\red$ on expressions to ensure that $\wrng$, which is ill-typed, \mbox{is not produced.  }

\begin{lemma}\label{lem:mnd-tred-sound}
If $\conf\in \CWT[\ty,\ef]$ and $\conf\tred\mconf$, then $\mconf  \in \mlift[\ef_1]_\Conf(\CWT[\ef_2])$ with $\ef_1\emul\ef_2\eord\ef$. 
\end{lemma}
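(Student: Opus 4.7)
The plan is to proceed by case analysis on which of the four rules in Figure~\ref{fig:mred-conf} derives the step $\conf\tred\mconf$, exploiting the hypotheses of monadic progress and monadic subject reduction for the underlying reduction $\red$. Two cases (\refToRule{ret} and \refToRule{res}) will be handled directly by \cref{def:mlift:unit} of \cref{def:mlift} together with the second requirement on type-and-effect systems; the \refToRule{wrong} case will be ruled out by monadic progress; the genuine work sits in the \refToRule{exp} case, which is also where I expect the main subtlety.

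For rule \refToRule{exp}, we have $\conf=\expr$ with $\expr\red\mexpr$ and $\mconf=\mfun\iota^\Conf_\Exp(\mexpr)$. Since $\expr$ reduces, it is not of the form $\valtoexp(\val)$, so $\expr\in\CWT[\ty,\ef]$ forces $\expr\in\EWT[\ty,\ef]$. Monadic subject reduction (\cref{def:mnd-sr}) then gives $\mexpr\in\mlift[\ef_1]_\Exp(\EWT[\ty,\ef_2])$ with $\ef_1\emul\ef_2\eord\ef$. The obstacle is to turn this into membership of $\mconf$ in $\mlift[\ef_1]_\Conf(\CWT[\ty,\ef_2])$: the lifting is applied to a different set, and $\mexpr$ must be transported along $\mfun\iota^\Conf_\Exp$. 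The key tool is \cref{def:mlift:nat} (naturality): since $\EWT[\ty,\ef_2]\subseteq\CWT[\ty,\ef_2]$, we have $\EWT[\ty,\ef_2]\subseteq\PW_{\iota^\Conf_\Exp}(\CWT[\ty,\ef_2])$, so by monotonicity and naturality
\[
\mexpr\in\mlift[\ef_1]_\Exp(\PW_{\iota^\Conf_\Exp}(\CWT[\ty,\ef_2]))=\PW_{\mfun\iota^\Conf_\Exp}(\mlift[\ef_1]_\Conf(\CWT[\ty,\ef_2])),
\]
which means exactly $\mfun\iota^\Conf_\Exp(\mexpr)=\mconf\in\mlift[\ef_1]_\Conf(\CWT[\ty,\ef_2])$.

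For rule \refToRule{ret}, $\conf=\valtoexp(\val)$ and $\mconf=\mun_\Conf(\val)$. By the second requirement of \cref{def:mnd-type-system}, $\valtoexp(\val)\in\EWT[\ty,\ef]$ yields $\val\in\VWT[\ty]$ and $\eun\eord\ef$, so $\val\in\CWT[\ty,\ef]$. Choose $\ef_1=\eun$ and $\ef_2=\ef$: \cref{def:mlift:unit} of \cref{def:mlift} gives $\mun_\Conf(\val)\in\mlift[\eun]_\Conf(\CWT[\ty,\ef])$, and $\eun\emul\ef=\ef\eord\ef$. Rule \refToRule{res} is analogous: the hypothesis $\res\in\CWT[\ty,\ef]$ together with $\wrng\notin\VWT[\ty]$ forces $\res=\val\in\VWT[\ty]$ with $\eun\eord\ef$, and the same choice of $\ef_1,\ef_2$ works.

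Finally, rule \refToRule{wrong} applies when $\expr$ neither reduces nor equals any $\valtoexp(\val)$; but $\expr\in\CWT[\ty,\ef]$ (necessarily as an expression) means $\expr\in\EWT[\ty,\ef]$, and monadic progress (\cref{def:mnd-progress}) rules this out. Thus only the three non-trivial cases contribute, and each has been accounted for.
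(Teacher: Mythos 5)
Your proof is correct and follows essentially the same route as the paper's: monadic progress rules out the $\wrng$ case, monadic subject reduction plus naturality (\cref{def:mlift:nat}) handles the genuine reduction step, and \cref{def:mlift:unit} with the choice $\ef_1=\eun$, $\ef_2=\ef$ handles the value and result cases. The only difference is cosmetic — you organize the case split by the rule deriving $\conf\tred\mconf$ where the paper splits on the shape of $\conf$ and then invokes progress inside the expression case.
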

\begin{proof}
We split cases on the shape of $\conf$. 

\proofcase{$\conf = \expr$} 
From $\expr \in \CWT[\ty,\ef]$ we derive $\expr \in \EWT[\ty,\ef]$. 
By monadic progress, either $\expr = \valtoexp(\val)$ or $\expr\red\mexpr$. 
In the former case, $\mconf = \mun_\Conf(\val)$ and, by \cref{def:mnd-type-system}, $\eun\eord\ef$ and $\val\in\VWT[\ty]$. 
Hence, the thesis follows by \refItem{def:mlift}{unit}, taking $\ef_1 = \eun$ and $\ef_2 = \ef$. 
In the latter case, $\mconf = \metc\mexpr = \mfun\iota^\Conf_\Exp(\mexpr)$ and, by monadic subject reduction,  
$\mexpr \in\mlift[\ef_1]_\Exp(\EWT[\ty,\ef_2])$, with $\ef_1\emul\ef_2\eord\ef$. 
From $\EWT[\ty,\ef_2] = \PW_{\iota^\Conf_\Exp}(\CWT[\ty,\ef_2])$, by \refItem{def:mlift}{nat},  
\begin{quoting}
\begin{math}
\mexpr \in \mlift[\ef_1]_\Exp(\EWT[\ty,\ef_2]) 
  = \mlift[\ef_1]_\Exp(\PW_{\iota^\Conf_\Exp}(\CWT[\ty,\ef_2]))
  = \PW_{\mfun\iota^\Conf_\Exp}(\mlift[\ef_1]_\Conf(\CWT[\ty,\ef_2]))
\end{math}
\end{quoting}
which implies that 
$\metc\mexpr = \mfun\iota^\Conf_\Exp(\mexpr) \in \mlift[\ef_1]_\Conf(\CWT[\ty,\ef_2])$, as needed. 

\proofcase{$\conf = \res$} 
Since $\res\in\CWT[\ty,\ef]$, we have $\res\ne\wrng$, hence $\res = \val$ and this implies that $\val\in\VWT[\ty]$ and $\eun\eord\ef$. 
By definition of $\tred$, we also know that $\mconf = \mun_\Conf(\val)$, hence, the thesis follows from \refItem{def:mlift}{unit} taking $\ef_1 = \eun$ and $\ef_2 = \ef$. 
\end{proof}

Then, we obtain the following result, showing a form of soundness for the multistep reduction on configurations. 
\begin{theorem}\label{thm:mnd-conf-sound} 
If $\conf\in\CWT[\ty,\ef]$ and $\conf\tredstar\mconf$, then $\mconf\in\mlift[\ef_1]_\Conf(\CWT[\ty,\ef_2])$ with $\ef_1\emul\ef_2\eord\ef$. 
\end{theorem}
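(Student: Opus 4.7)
The plan is to prove the theorem by induction on $n \in \N$, exploiting \cref{prop:mnd-n-step} to reformulate $\conf\tredstar\mconf$ as $\tredfunind{n}(\conf)=\mconf$. In the base case $n=0$, we have $\mconf=\mun_\Conf(\conf)$; since $\conf\in\CWT[\ty,\ef]$, \cref{def:mlift:unit} of \cref{def:mlift} gives $\mun_\Conf(\conf)\in\mlift[\eun]_\Conf(\CWT[\ty,\ef])$, and we pick $\ef_1=\eun$ and $\ef_2=\ef$, noting that $\ef_1\emul\ef_2=\eun\emul\ef=\ef\eord\ef$ by the monoid axioms.

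For the inductive step, I would write $\tredfunind{n+1}(\conf)=\mkl{\tredfun}(\tredfunind{n}(\conf))=\mmul_\Conf(\mfun\tredfun(\tredfunind{n}(\conf)))$. The induction hypothesis gives $\tredfunind{n}(\conf)\in\mlift[\ef_1']_\Conf(\CWT[\ty,\ef_2'])$ with $\ef_1'\emul\ef_2'\eord\ef$. Now \cref{lem:mnd-tred-sound} applied fibrewise says that for every $\conf'\in\CWT[\ty,\ef_2']$, $\tredfun(\conf')\in\mlift[\hat\ef_1]_\Conf(\CWT[\ty,\hat\ef_2])$ for some split $\hat\ef_1\emul\hat\ef_2\eord\ef_2'$. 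Assuming a uniform choice of this split over the whole fibre, we can regard $\tredfun$ as a function $\CWT[\ty,\ef_2']\to \mlift[\hat\ef_1]_\Conf(\CWT[\ty,\hat\ef_2])$. Then naturality (\cref{def:mlift:nat}) propagates the outer lifting through $\mfun\tredfun$, producing an element in $\mlift[\ef_1']_{\mfun\Conf}(\mlift[\hat\ef_1]_\Conf(\CWT[\ty,\hat\ef_2]))$; multiplication compatibility (\cref{def:mlift:mul}) then collapses the double lifting under $\mmul_\Conf$ into $\mlift[\ef_1'\emul\hat\ef_1]_\Conf(\CWT[\ty,\hat\ef_2])$. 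Taking $\ef_1=\ef_1'\emul\hat\ef_1$ and $\ef_2=\hat\ef_2$, associativity of $\emul$ and monotonicity of $\eord$ give $\ef_1\emul\ef_2=\ef_1'\emul\hat\ef_1\emul\hat\ef_2\eord\ef_1'\emul\ef_2'\eord\ef$, closing the induction.

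The delicate point I expect to be the main obstacle is the pointwise-to-uniform passage of the split $(\hat\ef_1,\hat\ef_2)$ obtained from \cref{lem:mnd-tred-sound}: a priori different configurations in $\CWT[\ty,\ef_2']$ could yield different splits, whereas the naturality/multiplicativity argument requires a single pair to frame the entire fibre. I would address this by showing that the splits arising from the case analysis in the proof of \cref{lem:mnd-tred-sound} can always be aligned—concretely, the cases $\conf=\res$ and $\conf=\valtoexp(\val)$ contribute the trivial split $(\eun,\ef_2')$, while the reducing case is controlled by monadic subject reduction, so a common bound always exists within the ordered monoid $\MEff$. The remaining algebra on the effect monoid is entirely routine.
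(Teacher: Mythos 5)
Your proof is essentially the paper's: the paper argues by induction on the derivation of $\conf\tredstar\mconf$ (equivalent, via \cref{prop:mnd-n-step}, to your induction on $n$), handles the base case with \refItem{def:mlift}{unit} exactly as you do, and closes the inductive step with the same combination of \cref{lem:mnd-tred-sound}, naturality and multiplication under $\mmul_\Conf$. The pointwise-to-uniform passage you flag as the delicate point is a legitimate concern, but the paper makes the same move without comment—it invokes \cref{lem:mnd-tred-sound} directly as the inclusion $\CWT[\ty,\ef_2]\subseteq\PW_{\tredfun}(\mlift[\ef'_1]_\Conf(\CWT[\ty,\ef'_2]))$ for a single pair $(\ef'_1,\ef'_2)$—so your argument is no less complete than the one in the paper.
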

\begin{proof}
By induction on rules defining $\tredstar$. 

\proofcase{\refToRule{refl}}
We have $\mconf = \mun_\Conf(\conf)$. 
By \refItem{def:mlift}{unit},  
$\conf \in \CWT[\ty,\ef] \subseteq \PW_{\mun_\Conf}(\mlift[\eun]_\Conf(\CWT[\ty,\ef]))$, 
which implies 
$\mconf = \mun_\Conf(\conf) \in \mlift[\eun]_\Conf(\CWT[\ty,\ef])$. 
This proves the thesis since $\eun\emul\ef\eord\ef$. 

\proofcase{\refToRule{step}} 
We know that $\conf\tredstar \mconf_1$ and $\mconf = \mconf_1\mbind\tredfun$. 
By induction hypothesis,  
${\mconf_1 \in \mlift[\ef_1]_{\Conf}(\CWT[\ty,\ef_2])}$ with $\ef_1\emul\ef_2\eord \ef$. 
By \cref{lem:mnd-tred-sound}, we derive 
$\CWT[\ty,\ef_2] \subseteq \PW_{\tredfun}(\mlift[\ef'_1]_\Conf(\CWT[\ty,\ef'_2]))$ with $\ef'_1\emul\ef'_2\eord\ef_2$. 
Then, using \cref{def:mlift:nat,def:mlift:mul} of \cref{def:mlift}, we have 
\vspace{-0.1cm}
\begin{align*} 
\mconf_1\in \mlift[\ef_1]_\Conf(\CWT[\ty,\ef_2])
  & \subseteq \mlift[\ef_1]_\Conf(\PW_\tredfun(\mlift[\ef'_1]_\Conf(\CWT[\ty,\ef'_2])))
    = \PW_{\mfun\tredfun}(\mlift[\ef_1]_{\mfun\Conf}(\mlift[\ef'_1]_\Conf(\CWT[\ty,\ef'_2]))) 
\\[-1ex]
  & \subseteq \PW_{\mfun\tredfun}(\PW_{\mmul_\Conf}(\mlift[\ef_1\emul\ef'_1]_\Conf(\CWT[\ty,\ef'_2])))
    = \PW_{\mmul_\Conf\circ\mfun\tredfun}(\mlift[\ef_1\emul\ef'_1]_\Conf(\CWT[\ty,\ef'_2]))
\\[-1ex]
  & = \PW_{\mkl{(\tredfun)}}(\mlift[\ef_1\emul\ef'_1]_\Conf(\CWT[\ty,\ef'_2]))
\end{align*} 
\vspace{-0.1cm}
This implies that 
$\mconf = \mkl{(\tredfun)}(\mconf_1) \in \mlift[\ef_1\emul\ef'_1]_\Conf(\CWT[\ty,\ef'_2])$, 
hence the thesis follows observing that ${(\ef_1\emul\ef'_1)\emul\ef'_2 \eord \ef_1\emul\ef_2\eord \ef}$. 
\end{proof}

\begin{corollary}[Finitary type-and-effect soundness]\label{cor:mnd-fin-sound}
If $\expr\in\EWT[\ty,\ef]$ and $\finsem\expr = \mres$ \mbox{then $\mres\in\mlift[\ef](\VWT[\ty])$. }
\end{corollary}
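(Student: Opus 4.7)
The plan is to invoke the configuration-level soundness theorem (\cref{thm:mnd-conf-sound}) and then transport its conclusion from $\Conf$ back to $\Res$ along the injection $\iota^\Conf_\Res$. Since $\expr\in\EWT[\ty,\ef]\subseteq\CWT[\ty,\ef]$ and $\finsem\expr=\mres$ unfolds to $\expr\tredstar\mrtc\mres=\mfun\iota^\Conf_\Res(\mres)$, the theorem yields $\mfun\iota^\Conf_\Res(\mres)\in\mlift[\ef_1]_\Conf(\CWT[\ty,\ef_2])$ for some decomposition $\ef_1\emul\ef_2\eord\ef$.

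Next I would apply naturality of $\mlift[\ef_1]$ along $\iota^\Conf_\Res$ (\refItem{def:mlift}{nat}) to obtain $\mres\in\mlift[\ef_1]_\Res(A)$, where $A=\PW_{\iota^\Conf_\Res}(\CWT[\ty,\ef_2])\subseteq\Res$. Since $\Conf=\Exp+\Res$ is a disjoint coproduct and $\iota^\Conf_\Val=\iota^\Conf_\Res\circ\iota^\Res_\Val$, this preimage equals $\VWT[\ty]$ (viewed in $\Res$) when $\eun\eord\ef_2$, and is $\emptyset$ otherwise; in both subcases $A\subseteq\VWT[\ty]$.

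The key step is to strengthen the conclusion to $\mres\in\mlift[\ef_1\emul\ef_2]_\Res(\VWT[\ty])$, so that \refItem{def:mlift}{mon} and $\ef_1\emul\ef_2\eord\ef$ deliver the final $\mres\in\mlift[\ef]_\Res(\VWT[\ty])$. To this end, I would first establish $A\subseteq\PW_{\mun_\Res}(\mlift[\ef_2]_\Res(\VWT[\ty]))$: when $\eun\eord\ef_2$, for any $r\in A=\VWT[\ty]$ we have $\mun_\Res(r)\in\mlift[\eun]_\Res(\{r\})\subseteq\mlift[\ef_2]_\Res(\VWT[\ty])$ by \refItem{def:mlift}{unit} and \refItem{def:mlift}{mon}; when $\eun\not\eord\ef_2$, $A=\emptyset$ and the inclusion is vacuous. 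Then the chain
\[
\mlift[\ef_1]_\Res(A)\subseteq\PW_{\mfun\mun_\Res}(\mlift[\ef_1]_{\mfun\Res}(\mlift[\ef_2]_\Res(\VWT[\ty])))\subseteq\PW_{\mfun\mun_\Res}(\PW_{\mmul_\Res}(\mlift[\ef_1\emul\ef_2]_\Res(\VWT[\ty])))
\]
obtained by monotonicity on predicates, naturality along $\mun_\Res$, and \refItem{def:mlift}{mul}, combined with the monad law $\mmul_\Res\circ\mfun\mun_\Res=\id_{\mfun\Res}$, yields the desired strengthening.

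The crux is the subcase $\eun\not\eord\ef_2$, where the residual effect is ``forcing'' and $A$ collapses to $\emptyset$; naive monotonicity in the predicate alone does not close the argument because $\ef_1\eord\ef$ need not hold. Absorbing $\ef_2$ into the lifting via \refItem{def:mlift}{mul} together with the triangle identity $\mmul\circ\mfun\mun=\id$ is exactly the ingredient that lets a terminated monadic result ``pay off'' the effects the type still demands in the future, making the argument go through uniformly in both subcases.
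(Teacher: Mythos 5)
Your proposal is correct and follows essentially the same route as the paper's proof: invoke \cref{thm:mnd-conf-sound}, transport along $\iota^\Conf_\Res$ by naturality, split on whether $\eun\eord\ef_2$, and in the forcing subcase absorb $\ef_2$ via \refItem{def:mlift}{unit}, \refItem{def:mlift}{mul} and the monad law $\mmul_\Res\circ\mfun\mun_\Res=\id_{\mfun\Res}$. The only (harmless) difference is that you run the absorption chain uniformly in both subcases, whereas the paper short-circuits the case $\eun\eord\ef_2$ by observing $\ef_1=\ef_1\emul\eun\eord\ef_1\emul\ef_2\eord\ef$ and concluding directly by effect-monotonicity.
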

\begin{proof}
From $\expr\in\EWT[\ty,\ef]$,  $\expr\in\CWT[\ty,\ef]$ and, 
from $\finsem\expr = \mres$, $\expr\tredstar\mrtc\mres$. 
By \cref{thm:mnd-conf-sound}, we obtain $\mrtc\mres = \mfun\iota^\Conf_\Res(\mres)  \in \mlift[\ef_1]_\Conf(\CWT[\ty,\ef_2])$ with $\ef_1\emul\ef_2\eord\ef$. 
By \refItem{def:mlift}{nat},  
$\mres\in\PW_{\mfun\iota^\Conf_\Res}(\mlift[\ef_1]_\Conf(\CWT[\ty,1ef_2])) 
  = \mlift[\ef_1]_\Res(\PW_{\iota^\Conf_\Res}(\CWT[\ty,\ef_2]))$. 
We distinguish two cases. 

\proofcase{$\eun\eord\ef_2$}
We have $\CWT[\ty,\ef_2] = \EWT[\ty,\ef_2] + \VWT[\ty]$, hence 
$\PW_{\iota^\Conf_\Res}(\CWT[\ty,\ef_2]) = \VWT[\ty]$. 
Since $\ef_1 = \ef_1\emul\eun \eord \ef_1\emul\ef_2\eord\ef$, 
by \refItem{def:mlift}{mon}, we get 
$\mres \in \mlift[\ef_1](\VWT[\ty]) \subseteq \mlift[\ef](\VWT[\ty])$, as needed. 

\proofcase{$\eun\not\eord\ef_2$}
We have $\CWT[\ty,\ef_2] = \EWT[\ty,\ef_2]$, hence 
$\PW_{\iota^\Conf_\Res}(\CWT[\ty,\ef_2]) = \emptyset$. 
Using \cref{def:mlift:nat,def:mlift:mul,def:mlift:mon} of \cref{def:mlift} and the monad laws, we have
\vspace{-0.1cm}
\begin{align*}
\mres \in \mlift[\ef_1]_\Res(\emptyset) 
  & = \mlift[\ef_1]_\Res(\PW_{\mun_\Res}(\emptyset)) 
    = \PW_{\mfun\mun_\Res}(\mlift[\ef_1]_{\mfun\Res}(\emptyset)) 
\\[-1ex]
  & \subseteq \PW_{\mfun\mun_\Res}(\mlift[\ef_1]_{\mfun\Res}(\mlift[\ef_2]_\Res(\VWT[\ty]))) 
    \subseteq \PW_{\mfun\mun_\Res}(\PW_{\mmul_\Res}(\mlift[\ef_1\emul\ef_2]_\Res(\VWT[\ty]))) 
\\[-1ex]
  & = \PW_{\mmul_\Res\circ\mfun\mun_\Res}(\mlift[\ef_1\emul\ef_2]_\Res(\VWT[\ty])) 
    = \mlift[\ef_1\emul\ef_2]_\Res(\VWT[\ty]) 
    \subseteq \mlift[\ef]_\Res(\VWT[\ty]) 
\end{align*}
\vspace{-0.3cm}
and this proves the thesis. 
\end{proof}

\cref{cor:mnd-fin-sound} states that monadic progress and monadic subject reduction imply soundness with respect to the finitary semantics. 
To state an analogous result for infinitary semantics, the  interpretation of effect types has to take into account \mbox{the additional structure of the monad. }

\begin{definition}\label{def:mlift-ord}
Let $\mnd = \ple{\mfun,\mord,\mmul,\mun}$ be an $\omega$-CPO-ordered monad. 
An  interpretation $\mlift$ of $\MEff$ in $\mnd$  is \emph{$\omega$-CPO-ordered} if, 
for every effect type $\ef\in\Eff$, set $X$, and $A\subseteq X$, we have 
\begin{enumerate}
\item\label{def:mlift-ord:mord} $\mbot_X \in \mlift[\ef]_X(A)$ and 
\item\label{def:mlift-ord:msup} for every $\omega$-chain $(\alpha_n)_{n\in\N}$ in $\mfun X$, $\alpha_n \in \mlift[\ef]_X(A)$ for all $n\in\N$ implies $\msup_{n\in\N} \alpha_n \in \mlift[\ef]_X(A)$. 
\end{enumerate}
\end{definition}
 For example, the interpretations in \cref{ex:pow-mlift} are $\omega$-CPO-ordered and also that in \cref{ex:output-mlift} can be turned into an $\omega$-CPO-ordered one if applied to the pointed output monad. Finally, the construction of \cref{ex:translate-mlift} applies to $\omega$-CPO-ordered interpretations as well.  

From now on, we assume that the monad $\mnd$ has an $\omega$-CPO-ordered structure and the  interpretation $\mlift$ of effect types  is $\omega$-CPO-ordered as well. 
\mbox{We define infinitary soundness as follows. }

\begin{definition}[Infinitary type-and-effect soundness]\label{def:mnd-sound-inf}
The type-and-effect system $\TS$ is \emph{infinitarily sound} if 
$\expr\in\EWT[\ty,\ef]$  implies $\infsem\expr \in\mlift[\ef]_\Res(\VWT[\ty])$. 
\end{definition}
Infinitary soundness states that the limit behaviour of an expression of type $\ty$ and effect type $\ef$ is a monadic result  belonging to the interpretation of $\ef$ applied to values of type $\ty$. 
Observations in \cref{prop:must-sound} applies to infinitary soundness as well. 

In order to prove that monadic progress and monadic subject reduction imply infinitary soundness, 
we first need a simple property of the function $\mctr = \mklind{\mctr}{0}$, introduced at page \pageref{res},  which is at the basis of the definition of the infinitary semantics. 

\begin{lemma}\label{lem:mnd-mctr0}
If $\mconf \in \CWT[\ty,\ef]$ then $\mctr_0(\conf) \in \mlift[\ef]_\Res(\VWT[\ty])$. 
\end{lemma}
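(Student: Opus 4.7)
The plan is to proceed by a simple case analysis on the shape of $\conf$, using the definitions of $\mctr_0$ and $\CWT$ directly, together with the properties of the $\omega$-CPO-ordered interpretation $\mlift$.

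First, I would unfold $\CWT[\ty,\ef]$. Either $\eun \eord \ef$, in which case $\conf$ is an expression in $\EWT[\ty,\ef]$ or a value in $\VWT[\ty]$; or $\eun \not\eord \ef$, in which case $\conf$ must be an expression in $\EWT[\ty,\ef]$. Observe that $\wrng$ is never admitted, since $\wrng\notin\VWT[\ty]$ for any $\ty$.

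In the expression case, $\conf = \expr$ and so $\mctr_0(\conf) = \mbot_\Res$. Here I would simply invoke \cref{def:mlift-ord}(\ref{def:mlift-ord:mord}), which gives $\mbot_\Res \in \mlift[\ef]_\Res(\VWT[\ty])$ for free. In the value case, we have $\conf = \val \in \VWT[\ty]$ together with $\eun \eord \ef$, so $\mctr_0(\conf) = \mun_\Res(\val)$. Applying \refItem{def:mlift}{unit}, the containment $\VWT[\ty] \subseteq \PW_{\mun_\Res}(\mlift[\eun]_\Res(\VWT[\ty]))$ yields $\mun_\Res(\val) \in \mlift[\eun]_\Res(\VWT[\ty])$; then by monotonicity in the effect (\refItem{def:mlift}{mon}) together with $\eun \eord \ef$, we conclude $\mctr_0(\conf) \in \mlift[\ef]_\Res(\VWT[\ty])$.

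I do not expect any real obstacle here: the lemma is essentially a repackaging of the two unit-like clauses for $\mlift$, one provided by \refItem{def:mlift}{unit} (handling the value branch of $\mctr_0$) and the other by \cref{def:mlift-ord}(\ref{def:mlift-ord:mord}) (handling the expression branch, where $\mctr_0$ returns $\mbot$). The only mildly subtle point is to notice why $\wrng$ cannot occur, which follows because configurations containing $\wrng$ are never in $\CWT[\ty,\ef]$.
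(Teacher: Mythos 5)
Your proof is correct and follows essentially the same route as the paper's: the same case split on the shape of $\conf$, with \refItem{def:mlift-ord}{mord} handling the expression case and \refItem{def:mlift}{unit} plus \refItem{def:mlift}{mon} handling the value case (the paper applies monotonicity inside the inverse image rather than after extracting the membership, which is immaterial). The remark that $\wrng$ is excluded by $\CWT$ is also exactly the paper's reasoning.
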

\begin{proof}
We split cases on the shape of $\conf$. 

\proofcase{$\conf = \expr$}
We have $\mctr_0(\conf) = \mbot_\Res$ that belongs to $\mlift[\ef]_\Res(\VWT[\ty])$ by \refItem{def:mlift-ord}{mord}. 

\proofcase{$\conf = \res$}
Since $\conf\in\CWT[\ty,\ef]$, we have $\eun\eord\ef$ and $\conf = \res = \val \in \VWT[\ty]$. 
We also know that $\mctr_0(\conf) = \mun_\Res(\val)$. 
By \cref{def:mlift:unit,def:mlift:mon} of  \cref{def:mlift}, we have 
$\res \in \VWT[\ty] 
  \subseteq \PW_{\mun_\Res}(\mlift[1]_\Res(\VWT[\ty]))
  \subseteq \PW_{\mun_\Res}(\mlift[\ef]_\Res(\VWT[\ty]))$, 
thus proving that $\mctr_0(\conf) = \mun_\Res(\val) \in \mlift[\ef]_\Res(\VWT[\ty])$, as needed. 
\end{proof}

\begin{theorem}[Infinitary type-and-effect soundness] \label{thm:mnd-sound-inf}
If $\expr \in \EWT[\ty,\ef]$ then $\infsem\expr \in \mlift[\ef]_\Res(\VWT[\ty])$. 
\end{theorem}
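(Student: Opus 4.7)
The plan is to show that each finite approximant $\infsem[n]{\expr}$ lies in $\mlift[\ef]_\Res(\VWT[\ty])$, and then to invoke the $\omega$-CPO-closure of the interpretation (\cref{def:mlift-ord:msup}) to lift the property to the supremum $\infsem{\expr} = \msup_{n\in\N} \infsem[n]{\expr}$. Since $\expr\in\EWT[\ty,\ef]$ gives $\expr\in\CWT[\ty,\ef]$, and $\expr\tredstar\tredfunind{n}(\expr)$ by \cref{prop:mnd-n-step}, \cref{thm:mnd-conf-sound} yields $\tredfunind{n}(\expr)\in\mlift[\ef_1]_\Conf(\CWT[\ty,\ef_2])$ for some $\ef_1\emul\ef_2\eord\ef$ (depending on $n$).

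The core step is then to show that $\mctr$ transports $\mlift[\ef_1]_\Conf(\CWT[\ty,\ef_2])$ into $\mlift[\ef_1\emul\ef_2]_\Res(\VWT[\ty])$. By \cref{lem:mnd-mctr0}, $\CWT[\ty,\ef_2]\subseteq\PW_{\mctr_0}(\mlift[\ef_2]_\Res(\VWT[\ty]))$. Applying $\mlift[\ef_1]_\Conf$ and chaining naturality (\cref{def:mlift:nat}), the multiplication law (\cref{def:mlift:mul}), and the identity $\mctr = \mmul_\Res\circ\mfun\mctr_0 = \mkl{\mctr_0}$, one obtains
\begin{quoting}
$\mlift[\ef_1]_\Conf(\CWT[\ty,\ef_2]) \;\subseteq\; \PW_{\mctr}(\mlift[\ef_1\emul\ef_2]_\Res(\VWT[\ty]))$,
\end{quoting}
exactly mirroring the calculation in the proof of \cref{cor:mnd-fin-sound}. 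Hence $\infsem[n]{\expr}=\mctr(\tredfunind{n}(\expr)) \in \mlift[\ef_1\emul\ef_2]_\Res(\VWT[\ty])$, and by monotonicity (\cref{def:mlift:mon}) using $\ef_1\emul\ef_2\eord\ef$, this gives $\infsem[n]{\expr}\in\mlift[\ef]_\Res(\VWT[\ty])$.

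Finally, recall from the discussion preceding \cref{def:mlift-ord} that $(\infsem[n]{\expr})_{n\in\N}$ is an $\omega$-chain in $\ple{\mfun\Res,\mord_\Res}$. Since every $\infsem[n]{\expr}$ belongs to $\mlift[\ef]_\Res(\VWT[\ty])$, \cref{def:mlift-ord:msup} in \cref{def:mlift-ord} delivers $\infsem{\expr}=\msup_{n\in\N}\infsem[n]{\expr}\in\mlift[\ef]_\Res(\VWT[\ty])$, which is the thesis.

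The only non-routine point is the transport step through $\mctr$: one must be careful that the decomposition $\ef_1\emul\ef_2\eord\ef$ produced by \cref{thm:mnd-conf-sound} may depend on $n$, but monotonicity absorbs this into the single upper bound $\ef$ before we pass to the supremum, so the $\omega$-chain lives inside a fixed set $\mlift[\ef]_\Res(\VWT[\ty])$ to which \cref{def:mlift-ord:msup} applies. The $\omega$-CPO-closure assumption is exactly what makes this final step available and should be highlighted as the place where the additional structure on $\mlift$ is used.
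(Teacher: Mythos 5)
Your proof is correct and follows essentially the same route as the paper's: both reduce to showing each approximant $\infsem[n]{\expr}$ lies in $\mlift[\ef]_\Res(\VWT[\ty])$ via \cref{thm:mnd-conf-sound}, \cref{lem:mnd-mctr0}, and the naturality/multiplication laws for $\mlift$, then conclude by \refItem{def:mlift-ord}{msup}. Your remark that the decomposition $\ef_1\emul\ef_2\eord\ef$ may vary with $n$ but is absorbed by monotonicity into the fixed set $\mlift[\ef]_\Res(\VWT[\ty])$ is a correct and worthwhile clarification that the paper leaves implicit.
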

\begin{proof}
By \refItem{def:mlift-ord}{msup} it suffices to show that $\infsem[n]{\expr} \in \mlift[\ef]_\Res(\VWT[\ty])$ for all $n \in \N$. 
We know that $\infsem[n]{\expr} = \mctr(\mconf)$ where $\mconf = \tredfunind{n}(\expr)$. 
By \cref{prop:mnd-n-step}, we also know that $\expr \tredstar \mconf$. 
Since $\expr\in\EWT[\ty,\ef]$, we also have $\expr\in\CWT[\ty,\ef]$, hence, 
by \cref{thm:mnd-conf-sound}, we have
$\mconf \in \mlift[\ef_1]_\Conf(\CWT[\ty,\ef_2])$ with $\ef_1\emul\ef_2\eord\ef$. 
Using \cref{lem:mnd-mctr0} and \cref{def:mlift:mon,def:mlift:nat,def:mlift:mul} of \cref{def:mlift}, we get 
\vspace{-0.1cm}
\begin{align*}
\mconf \in \mlift[\ef_1]_\Conf(\CWT[\ty,\ef_2]) 
  & \subseteq \mlift[\ef_1]_\Conf(\PW_{\mctr_0}(\mlift[\ef_2]_\Res(\VWT[\ty])))
    = \PW_{\mfun\mctr_0}(\mlift[\ef_1]_{\mfun\Res}(\mlift[\ef_2]_\Res(\VWT[\ty]))) 
\\[-0.5ex]
  & \subseteq \PW_{\mfun\mctr_0}(\PW_{\mmul_\Res}(\mlift[\ef_1\emul\ef_2]_\Res(\VWT[\ty]))) 
    = \PW_{\mmul_\Res\circ\mfun\mctr_0}(\mlift[\ef_1\emul\ef_2]_\Res(\VWT[\ty])) 
\\[-0.5ex]
  & = \PW_{\mctr}(\mlift[\ef_1\emul\ef_2]_\Res(\VWT[\ty])) 
    \subseteq \PW_{\mctr}(\mlift[\ef]_\Res(\VWT[\ty]))
\end{align*}
\vspace{-0.1cm}
because $\mctr = \mklind{\mctr}{0} = \mmul_\Res\circ\mfun\mctr_0$. 
This proves that 
$\infsem[n]{\expr} = \mctr(\mconf) \in \mlift[\ef]_\Res(\VWT[\ty])$\mbox{, as needed. }
\end{proof}

\section{Example of soundness proof}\label{sect:results}
We show an instance of the technique introduced in the previous section, by proving monadic progress (\cref{theo:mnd-red-progress}) and 
monadic subject reduction (\cref{theo:mnd-subject-reduction}), hence, type-and-effect soundness, for our example.
Recall that  monadic reduction in \cref{sect:lang}  is parametric on a \mbox{monad $\mnd$, } and, for each operation $\opg$ with arity $k$, a partial function 
${\pfun{\mrun{\opg}}{\Val^k}{\mfun\Val}}$.

The type-and-effect system defined in \cref{sect:lang} is an example of \cref{def:mnd-type-system}, where, omitting 
empty environments and environments in judgments for simplicity:
\begin{itemize}
\item  $\Types$ is the set of types $\T$ as in \cref{fig:typed}
\item  $\MEff = \ple{\Eff,\sube,\cdot,\{\eZero\}}$ where   $\Eff$ is the set of non-empty subsets of $\Sigma^\infty$ 
\item $\EWT[\T,\eff](\e)$ iff $\IsEffGroundExp{\e}{\T'}{\eff'}$ for some $\T',\eff'$ such that $\SubTE{\TEff{\T'}{\eff'}}{\TEff\T\eff}$.
\item $\VWT[\T](\ve)$ iff $\IsWFGroundExp{\ve}{\T'}$ for some $\T'$ such that $\SubT{\T'}{\T}$
\end{itemize}

In order to prove progress and subject reduction properties, we need a last parameter, that is, an interpretation $\mlift$ of effect types. Since the proof is parametric on the computational effects raised by operations, these two parameters should agree, as described below. 
\begin{quoting}
for each $\fun{\opg}{\T_1\ldots\T_n}{\T}$, and $\seq\ve$ such that $\IsWFGroundExp{\seq\ve}{\seq\T'}$  and  $\SubT{\seq\T'}{\seq\T}$\\
\refToRule{run}\BigSpace $\mrun{\opg}(\seq\ve)\in \mlift[\{\opg\}]_{\Val}(\VWT[\T])$
\end{quoting}

\begin{example}\label{ex:lifting}
We describe  interpretations of the effect types suitable
for the examples in \cref{sect:lang}. Such interpretations are defined by first mapping\footnote{In all the examples it is easy to see that the mapping is a lax monoid homomorphism.} the effect types into one of the ordered monoids in \cref{sect:monadic-ty}, and then taking the interpretation of the latter into the monad; in this way, as described in \cref{ex:translate-mlift}, we get an interpretation of the original effect types. In other words, for an instantiation of the calculus on specific monad and operations, sets of possibly infinite sequences could be reduced to  simpler effect types, as exemplified below. 
\begin{enumerate}
\item  In \cref{ex:exc}, we  reduce effect types to sets whose elements are \mbox{either exceptions or $\noEx$:} 
\begin{quoting}
$\effsem{\eZero}=\{\noEx\}$\\
$\effsem{\Cons{\raiseop{\exc}}{\alpha}} = \{\exc\}$\\
 $\effsem{\eff}=\bigcup_{\alpha\in\eff}\effsem{\alpha}$
\end{quoting}
 That is, effect types are mapped into those of \cref{ex:exception-mlift}, so that,  
if $\effsem{\eff}=\excset$, then
\begin{quoting}
\begin{math}
\mlift[\eff]_\Val(\VWT[\T]) = 
\begin{cases}
\VWT[\T] + \excset\ \mbox{if}\ \eZero \in\eff\\
\excset\ \mbox{otherwise}
\end{cases}
\end{math}
\end{quoting}
In this way, monadic values\footnote{We explain how the lifting works on values; of course the same applies to expressions and configurations.} (either values or exceptions)  are well-typed if they are either exceptions in $\excset$, or, if it is allowed that no exception be raised ($\noEx\in\excset$), well-typed values.   
Note that an expression such as, e.g., $\Seq{\raiseop{\exc}}{\raiseop{\exc'}}$, gets the effect (reducing to) $\{\exc\}$, highlighting the fact that $\raiseop{\exc'}$ cannot be reached. 
%
\item In \cref{ex:pow}, the simplest interpretation  is to reduce  effect types to either 0 or 1: 
\begin{quoting}
$\effsem{\eZero}=0$\\
$\effsem{\Cons{\chooseop}{\alpha}}=1$\\
 $\effsem{\eff}=1$ \mbox{if} $\effsem{\alpha}=1$ for some $\alpha\in\eff$, $0$ otherwise
\end{quoting}
That is, effect types are mapped into those of  \refItem{ex:pow-mlift}{1},  so that, if $\effsem{\eff}=0$, then 
\begin{quoting}
$\mlift[\eff]_\Val(\VWT[\T])=\mlift[0]_\Val(\VWT[\T]) = \{ \aux{V} \in \PowerFun \Val  \mid \aux{V}\subseteq  \VWT[\T] \text{ and } \sharp \aux{V}\leq 1\}$
\end{quoting}
If, instead, $\effsem{\eff}=1$, then we can choose
\begin{quoting}
either $\mlift[\eff]_\Val(\VWT[\T])=\AllLift^1_\Val(\VWT[\T])   = \{ \aux{V} \in \PowerFun \Val \mid \aux{V} \subseteq \VWT[\T] \}$\\
or $\mlift[\eff]_\Val(\VWT[\T])=\ExLift^1_\Val(\VWT[\T])  = {\{ \aux{V} \in \PowerFun \Val \mid \aux{V} = \emptyset \text{ or }\aux{V} \cap \VWT[\T] \ne \emptyset \}}$
\end{quoting}
In this way, monadic values (sets of values, representing possibile results of a computation) are well-typed with a type effect (reducing to) $0$ if they have at most one element, and this element, if any, is well-typed; in other words, the computation is deterministic.  On the other hand, they are well-typed with a type effect (reducing to) $1$ if all the values in the set are well-typed, or there is at least one well-typed value, respectively.

\item  A finer interpretation for \cref{ex:pow} is to reduce effect types to the monoid \ple{\N\cup\{\infty\},\leq,\cdot,1} of \refItem{ex:pow-mlift}{2}, 
thus controlling the level of non-determinism:. We set 
\begin{quoting}
$\effsem{\chooseop^n}=2^n$ for $n\in\N$, $\effsem{\chooseop^{ \omega }}=\infty$\\
$\effsem{\eff}=\sup\{ \effsem{\alpha}\mid\alpha \in\eff\}$
\end{quoting}
Indeed, each call can be seen as a node in a binary tree of choices. In this way, if $\effsem{\eff}=k\in\N\cup\{\infty\}$, then monadic values (sets of values, representing possibile results of a computation) are well-typed with $\eff$ if there are at most $2^k$ values, hence possible results, in the set and these are all well typed. 

\item In \cref{ex:output},  a possible interpretation of a sequence of $\writeop{\loc}$ is its length:
\begin{quoting}
$\effsem{\writeop{\loc}^n}=n$ for $n\in\N\cup\{\infty\}$\\
$\effsem{\eff}=\sup\{ \effsem{\alpha}\mid\alpha \in\eff\}$
\end{quoting}
That is, effect types are mapped into $\N\cup\{\infty\}$ as done in \cref{ex:output-mlift}, so that,  
 if $\effsem{\eff}=n$, then $\mlift[\eff]_\Val(\VWT[\T])=\mlift[n]_\Val(\VWT[\T]) = \{ \ple{\sigma,\val} \in \OutFun \Val  \mid \val\in\VWT[\T] \text{ and } \mid\sigma\mid\leq n\}$.
In this way, an upper bound (or none) is imposed on the length of the produced outputs.

\item A finer interpretation for \cref{ex:output} is obtained by taking effect types as they are,
 and $\mlift[\eff]_\Val(\VWT[\T]) = \{ \ple{\sigma,\val}\in \OutFun \Val \mid \val \in \VWT[\T], \extract{\sigma}\in\eff \}$ where,
  if $\sigma = \Pair{\loc_1}{n_1}\ldots\Pair{\loc_k}{n_k}$, then $\extract{\sigma}=\writeop{\loc_1}\ldots\writeop{\loc_k}$.  In this way, effect types can express properties about the order, or the fairness, in which $\aux{write}$ operations to different  output locations  can be performed. Similar sophisticated properties can be expressed in cases where different operations can be performed, e.g., reading and updating in the global state monad.
\end{enumerate}
\end{example}
 We state now monadic progress and monadic subject reduction for  the type-and-effect system in \cref{sect:lang}; as shown in \cref{sect:monadic-ty}, they imply  monadic soundness. 
We report only the proof of monadic subject reduction; other proofs and lemmas they depend on are given in  \cref{app:lambda-results}. 

\begin{theorem}[Monadic Progress]\label{theo:mnd-red-progress}
If  $\e\in \EWT[\T,\eff]$  then either $\e= \Ret\ve$
or 
$\e\red\mexpr$.
\end{theorem}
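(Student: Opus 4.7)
The plan is to induct on the structure of the closed expression $\e$, using the typing derivation $\IsEffGroundExp{\e}{\T'}{\eff'}$ obtained from unfolding $\e\in\EWT[\T,\eff]$. Since $\e$ is closed, variables are excluded, leaving the four constructors of \cref{fig:syntax} to handle. For the case $\e=\Ret\ve$ the conclusion is immediate. For $\e=\App{\ve_1}{\ve_2}$, inversion of \refToRule{t-app} (modulo subsumption) assigns $\ve_1$ a functional type; a canonical-forms lemma then forces $\ve_1=\RecFun{\f}{\x}{\e'}$, so \refToRule{app} fires as a pure step and \refToRule{pure} lifts it to a monadic reduction.

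For $\e=\opg(\seq\ve)$ with $\fun{\opg}{\seq\T}{\T''}$, inversion of \refToRule{t-op} gives $\IsWFGroundExp{\seq\ve}{\seq{\T'}}$ with $\SubT{\seq{\T'}}{\seq{\T}}$. The parameter hypothesis \refToRule{run} then yields $\mrun{\opg}(\seq\ve)\in\mlift[\{\opg\}]_\Val(\VWT[\T''])$; in particular $\mrun{\opg}(\seq\ve)$ is defined at these arguments (a partial function cannot witness membership in a subset at a point where it is undefined), so rule \refToRule{effect} applies. For $\e=\Do{\x}{\e_1}{\e_2}$, inversion of \refToRule{t-do} yields $\e_1\in\EWT[\T_1,\eff_1]$ for suitable $\T_1,\eff_1$; by the induction hypothesis either $\e_1=\Ret\ve$, so that \refToRule{ret} fires, or $\e_1\red\mexpr$, so that \refToRule{do} propagates the step through the context $\Do{\x}{\ehole}{\e_2}$.

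The main obstacle is the interplay between typing and subtyping inside the definition of $\EWT$ and $\VWT$, which forces each inversion to strip off subsumption. The core auxiliary fact is a canonical-forms lemma: if $\IsWFGroundExp{\ve}{\T''}$ and $\SubT{\T''}{\FunType{\T_1}{\eff_0}{\T_2}}$, then $\ve=\RecFun{\f}{\x}{\e'}$. This is proved by a short induction on the subtyping derivation combined with inspection of the value-typing rules: inspecting \refToRule{sub-fun}, \refToRule{sub-refl}, and \refToRule{sub-trans}, the only supertypes of a functional type remain functional, and among the value-typing rules only \refToRule{t-abs} delivers a functional type. The analogous mild inversion for arguments of applications and operations is routine. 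Beyond these standard subsumption manipulations, no new ingredient is required; the monadic machinery of \cref{sect:monadic-sem} only enters through the final wrapping of a pure step via \refToRule{pure} and through the definedness of $\mrun{\opg}$ guaranteed by \refToRule{run}.
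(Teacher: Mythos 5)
Your proposal is correct and follows essentially the same route as the paper: induction on the (syntax-directed) typing derivation, with inversion plus a canonical-forms lemma for the application case, the \refToRule{run} condition guaranteeing definedness of $\mrun{\opg}$ for the operation case, and the induction hypothesis on $\e_1$ dispatching the \kw{do} case via rules \refToRule{ret} or \refToRule{do}. The only cosmetic difference is that you state canonical forms up to subtyping (and say ``supertypes'' where you mean ``subtypes'' of a functional type), whereas the paper's inversion for \refToRule{t-app} already yields an arrow type on the nose, so its canonical-forms lemma needs no subtyping clause; either variant works here.
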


 The proof of monadic subject reduction uses the standard substitution lemma, and subject reduction for the pure relation $\purered$ defined in \cref{fig:pure-red}. 
Both properties do not involve any monadic ingredient, and are proved by standard techniques. 

\begin{lemma}[Substitution]\label{lem:substitution}
If $\IsEffExp{\Gamma,\TVar{\seq\T}{\seq\x}}{\e}{\T}{\eff}$ and  $\SubT{\seq{\T'}}{\seq{\T}}$ , then 
$\IsWFGroundExp{\seq\val}{\seq{\T'}}$ implies $\IsEffExp{\Gamma}{\Subst\e{\seq\val}{\seq\x}}{\T'}{\eff'}$ with $\SubTE{\TEff{\T'}{\eff'}}{\TEff{\T}{\eff}}$.
\end{lemma}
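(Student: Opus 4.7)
I would prove the lemma by mutual structural induction on the typing derivations of values and expressions, since the fine-grained calculus keeps these syntactic categories distinct and each class has its own typing judgment. The companion statement for values reads: if $\IsWFExp{\Gamma,\TVar{\seq\T}{\seq\x}}{\ve}{\T}$ and $\IsWFGroundExp{\seq\val}{\seq{\T'}}$ with $\SubT{\seq{\T'}}{\seq\T}$, then $\IsWFExp{\Gamma}{\Subst{\ve}{\seq\val}{\seq\x}}{\T''}$ for some $\T''$ with $\SubT{\T''}{\T}$. The expression statement is the lemma itself, with both type and effect refined up-to subtyping/subeffecting. Before starting the induction, I would record two auxiliary facts: closed-value weakening (if $\IsWFGroundExp{\val}{\T}$ then $\IsWFExp{\Gamma}{\val}{\T}$ for any $\Gamma$, by a trivial induction since no rule depends negatively on the environment), and a subtyping inversion lemma stating that $\SubT{\T''}{\FunType{\T_1}{\eff}{\T_2}}$ forces $\T''=\FunType{\T''_1}{\eff''}{\T''_2}$ with $\SubT{\T_1}{\T''_1}$, $\SubE{\eff''}{\eff}$, $\SubT{\T''_2}{\T_2}$; the latter is an easy induction on the subtyping derivation using only \refToRule{sub-refl}, \refToRule{sub-trans}, and \refToRule{sub-fun}.

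\textbf{Cases.} For values: in the \refToRule{t-var} case, either $\x=\x_i$ and substitution yields $\val_i$, whose derivation in the empty context gives some $\T'_i\subt\T_i$, weakened to $\Gamma$; or $\x\in\dom\Gamma$ and substitution is the identity. For $\RecFun{\f}{\x}{\e}$, by $\alpha$-renaming assume $\f,\x$ are fresh w.r.t.\ $\seq\x$ and the free variables of $\seq\val$ (the latter are ground by assumption, so in fact there is nothing to avoid); invoke the expression IH on the body in the extended context, then reapply \refToRule{t-abs} and absorb the refined type/effect of the body via \refToRule{sub-trans} and \refToRule{sub-te} into the side condition $\SubTE{\TEff{\T''}{\eff'}}{\TEff{\T'}{\eff}}$. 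For expressions: \refToRule{t-ret} is the value IH combined with \refToRule{sub-te} keeping effect $\{\eZero\}$; \refToRule{t-op} is the value IH on each argument, with operation signatures fixed so the effect $\SetOf{\opg}$ is unchanged; \refToRule{t-do} combines both IHs on $\e_1$ and on $\e_2$ (typed in an extended context), and concludes by monotonicity of effect concatenation, $\EComp{\eff'_1}{\eff'_2}\sube\EComp{\eff_1}{\eff_2}$, which is immediate from the set-theoretic definition of $\EComp{}{}$.

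\textbf{Main obstacle.} The delicate case is \refToRule{t-app}. The IH on $\ve_1$ yields only a subtype of the declared function type, yet \refToRule{t-app} syntactically requires a function type on the applied value, and the system has no explicit subsumption rule on values. This is precisely where the subtyping inversion lemma pays off: it lets me extract a function-shape subtype $\FunType{\T''_1}{\eff''}{\T''_2}$ for the substituted $\ve_1$, conclude by \refToRule{sub-trans} that the (refined) type of $\ve_2$ is still a subtype of $\T''_1$, and then reapply \refToRule{t-app} to obtain that the substituted application has type $\T''_2\subt\T$ and effect $\eff''\sube\eff$, which yields the required $\SubTE{\TEff{\T''_2}{\eff''}}{\TEff{\T}{\eff}}$. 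Analogous inversions may be required for any additional type constructors hiding in the ``\ldots'' of the grammar, but the argument is structurally identical; the essential insight is that, because the typing rules are formulated with explicit subtyping side conditions rather than a pervasive subsumption rule, the whole induction goes through uniformly once one allows the conclusion to refine the original type and effect.
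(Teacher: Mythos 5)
Your proposal is correct and follows essentially the same route as the paper: a simultaneous induction on the typing derivations with a companion statement for values, weakening for the variable case, and an arrow-subtyping inversion lemma (the paper's \cref{lem:arrowSub}) to handle the application case, which you rightly single out as the only delicate step.
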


\begin{lemma}[Subject Reduction]\label{lem:subject-reduction}
\mbox{If $\IsEffGroundExp{\e}{\T}{\eff}$ and $\e\purered \e'$ then $\IsEffGroundExp{\e'}{\T'}{\eff'}$ with $\SubTE{\TEff{\T'}{\eff'}}{\TEff\T\eff}$. }
\end{lemma}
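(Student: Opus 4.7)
The plan is to proceed by case analysis on the pure reduction $\e \purered \e'$. Since \cref{fig:pure-red} contains a single rule \refToRule{app}, only one case need be treated: $\e = \App{\ve}{\ve'}$ with $\ve = \RecFun{\f}{\x}{\e_0}$ and $\e' = \Subst{\Subst{\e_0}{\ve}{\f}}{\ve'}{\x}$.

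First, I would read off the expected generation information from the typing rules. Since the system has no standalone subsumption rule and \refToRule{t-app} (resp.\ \refToRule{t-abs}) is the only rule whose conclusion matches an application (resp.\ a recursive function), inversion is immediate. From $\IsEffGroundExp{\App{\ve}{\ve'}}{\T}{\eff}$, \refToRule{t-app} gives types $\T_1,\T_2$ with $\IsWFGroundExp{\ve}{\FunType{\T_1}{\eff}{\T}}$, $\IsWFGroundExp{\ve'}{\T_2}$, and $\SubT{\T_2}{\T_1}$. From the resulting judgment on $\ve$, \refToRule{t-abs} gives in turn a body judgment $\IsEffExp{\TVar{\FunType{\T_1}{\eff}{\T}}{\f},\TVar{\T_1}{\x}}{\e_0}{\T''}{\eff'}$ together with $\SubTE{\TEff{\T''}{\eff'}}{\TEff{\T}{\eff}}$.

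Next, I would apply \cref{lem:substitution} to this body judgment, substituting $\f$ and $\x$ simultaneously by $\ve$ and $\ve'$. The premises of the lemma are satisfied: $\IsWFGroundExp{\ve}{\FunType{\T_1}{\eff}{\T}}$ is a refinement of itself by \refToRule{sub-refl}, and $\IsWFGroundExp{\ve'}{\T_2}$ with $\SubT{\T_2}{\T_1}$. We obtain $\IsEffGroundExp{\e'}{\T'''}{\eff''}$ for some $\T''',\eff''$ with $\SubTE{\TEff{\T'''}{\eff''}}{\TEff{\T''}{\eff'}}$. Composing with the earlier inequality $\SubTE{\TEff{\T''}{\eff'}}{\TEff{\T}{\eff}}$, via transitivity of $\subt$ (rule \refToRule{sub-trans}) and of $\sube$, yields $\SubTE{\TEff{\T'''}{\eff''}}{\TEff{\T}{\eff}}$, as required.

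I do not foresee a substantial obstacle: the argument is a routine chain of two inversions plus one application of the Substitution lemma, and the subtyping/subeffecting slack in the conclusion is precisely what is needed to absorb the side conditions already built into \refToRule{t-app} and \refToRule{t-abs}. The only mildly delicate point is the inversion step itself, which relies on the (straightforward) observation that subsumption is folded pointwise into the premises of each typing rule rather than being made available through a separate rule; this is why no explicit use of subsumption is needed when reconciling the type of $\ve$ retrieved by inversion of \refToRule{t-app} with the binding type of $\f$ expected by \refToRule{t-abs}.
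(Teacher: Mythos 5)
Your proof is correct and follows essentially the same route as the paper's: invert the (single) \refToRule{app} reduction, invert \refToRule{t-app} and then \refToRule{t-abs} (the paper packages the first inversion as its \cref{lem:inversion}), apply \cref{lem:substitution} with the simultaneous substitution for $\f$ and $\x$, and close by transitivity of subtyping/subeffecting. The only cosmetic difference is that you justify inversion directly from syntax-directedness rather than citing the paper's explicit Inversion Lemma.
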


\begin{theorem}[Monadic Subject Reduction]\label{theo:mnd-subject-reduction}
If $\e\in \EWT[\T,\eff]$ and $\e\red\mexpr$ then $\mexpr \in \mlift[\eff_1]_{\Exp}(\EWT[\T,\eff_2])$ 
for some $\eff_1$ and $\eff_2$ such that 
$\SubE{\EComp{\eff_1}{\eff_2}}{\eff}$. 
\end{theorem}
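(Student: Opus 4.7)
The plan is to proceed by induction on the derivation of $\e\red\mexpr$, which means case analysis on the four rules of \cref{fig:monadic-red}. In each case, I will use a well-chosen decomposition $\eff = \eff_1\cdot\eff_2$ and rely on \cref{def:mlift:unit,def:mlift:nat} of \cref{def:mlift}, together with \cref{lem:substitution,lem:subject-reduction} and the assumption \refToRule{run} on the interpretation of operations.

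For rule \refToRule{pure}, the reduction is $\e \red \mun(\e')$ with $\e \purered \e'$. By \cref{lem:subject-reduction}, $\e'\in\EWT[\T,\eff]$, so \cref{def:mlift:unit} gives $\mun(\e')\in\mlift[\eun]_\Exp(\EWT[\T,\eff])$; I take $\eff_1=\{\eZero\}$ and $\eff_2=\eff$, noting $\EComp{\{\eZero\}}{\eff}=\eff$. For rule \refToRule{ret}, by inversion on \refToRule{t-do} and \refToRule{t-ret}, the expression $\Do{\x}{\Ret\ve}{\e_2}$ has effect $\EComp{\{\eZero\}}{\eff_2}\sube\eff$, with $\ve$ of appropriate type and $\e_2\in\EWT[\T,\eff_2]$ (in an extended context). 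By \cref{lem:substitution}, $\Subst{\e_2}{\ve}{\x}\in\EWT[\T,\eff_2]$, and \cref{def:mlift:unit} yields $\mun(\Subst{\e_2}{\ve}{\x})\in\mlift[\{\eZero\}]_\Exp(\EWT[\T,\eff_2])$.

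For rule \refToRule{effect}, by inversion on \refToRule{t-op}, the operation call $\opg(\seq\ve)$ has effect $\{\opg\}\sube\eff$, and each $\ve_i$ has a subtype of the declared argument type. Hence assumption \refToRule{run} gives $\mrun{\opg}(\seq\ve)\in\mlift[\{\opg\}]_\Val(\VWT[\T])$. Now I observe that, by \refToRule{t-ret}, for any $\ve\in\VWT[\T]$ we have $\Ret\ve\in\EWT[\T,\{\eZero\}]$, i.e. $\VWT[\T]\subseteq\PW_{\Ret{-}}(\EWT[\T,\{\eZero\}])$. Applying \cref{def:mlift:mon} of \cref{def:mlift} (monotonicity in the predicate) and then \cref{def:mlift:nat} (naturality) with the function $\Ret{-}:\Val\to\Exp$ gives
\begin{quoting}
$\Mmap{(\Ret{\ehole})}{\mrun\opg(\seq\ve)} \in \mfun(\Ret{-})(\mlift[\{\opg\}]_\Val(\VWT[\T])) \subseteq \mlift[\{\opg\}]_\Exp(\EWT[\T,\{\eZero\}])$,
\end{quoting}
so I take $\eff_1=\{\opg\}$ and $\eff_2=\{\eZero\}$.

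For rule \refToRule{do}, where $\Do{\x}{\e_1}{\e_2}\red\Mmap{(\Do\x{\ehole}{\e_2})}{\mexpr_1}$ with $\e_1\red\mexpr_1$, inversion on \refToRule{t-do} gives a decomposition $\eff=\EComp{\eff'_1}{\eff'_2}$ with $\e_1\in\EWT[\T_1,\eff'_1]$ and $\e_2\in\EWT[\T,\eff'_2]$ in context extended by $\TVar{\T_1}{\x}$. The induction hypothesis gives $\mexpr_1\in\mlift[\eff_1]_\Exp(\EWT[\T_1,\eff_{21}])$ with $\EComp{\eff_1}{\eff_{21}}\sube\eff'_1$. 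By \refToRule{t-do} applied pointwise, if $\e'\in\EWT[\T_1,\eff_{21}]$ then $\Do\x{\e'}{\e_2}\in\EWT[\T,\EComp{\eff_{21}}{\eff'_2}]$; in other words, $\EWT[\T_1,\eff_{21}]\subseteq\PW_{\Do\x{\ehole}{\e_2}}(\EWT[\T,\EComp{\eff_{21}}{\eff'_2}])$. Applying \cref{def:mlift:mon,def:mlift:nat} again yields
\begin{quoting}
$\Mmap{(\Do\x{\ehole}{\e_2})}{\mexpr_1}\in\mlift[\eff_1]_\Exp(\EWT[\T,\EComp{\eff_{21}}{\eff'_2}])$,
\end{quoting}
and $\EComp{\eff_1}{\EComp{\eff_{21}}{\eff'_2}}=\EComp{(\EComp{\eff_1}{\eff_{21}})}{\eff'_2}\sube\EComp{\eff'_1}{\eff'_2}=\eff$ closes the case. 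I expect this last case to be the main obstacle, because it is the only one needing the induction hypothesis together with a nontrivial repackaging of the effect decomposition: one must thread the decomposition $\EComp{\eff_1}{\eff_{21}}$ supplied by induction through the outer $\cdot\eff'_2$ coming from the \lstinline{do}-typing rule, and invoke associativity and monotonicity of the monoid product on effects.
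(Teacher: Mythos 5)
Your proof is correct and follows essentially the same route as the paper's: induction on the four reduction rules, using \cref{lem:subject-reduction} and \refItem{def:mlift}{unit} for \refToRule{pure}, \cref{lem:substitution} for \refToRule{ret}, the \refToRule{run} assumption plus naturality along $\Ret{\ehole}$ for \refToRule{effect}, and the induction hypothesis plus naturality along $\Do\x{\ehole}{\e_2}$ with the same effect re-association for \refToRule{do}. The only nitpick is a citation slip: monotonicity of $\mlift[\ef]_X$ in the predicate argument comes from the preamble of \cref{def:mlift} (the liftings are monotone functions), not from \refItem{def:mlift}{mon}, which is monotonicity in the effect order.
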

\begin{proof}
From $\e\in \EWT[\T,\eff]$ we get $\IsEffGroundExp{\e}{\T'}{\eff' }$ 
and $\SubTE{\TEff{\T'}{\eff'}}{\TEff\T\eff}$.
By induction on the reduction rules of \cref{fig:monadic-red}.
\begin{description}
\item [\refToRule{pure}]
In this case  $\e\purered \e'$ and $\mexpr= \mun(\e')$. From $\IsEffGroundExp{\e}{\T'}{\eff' }$ and  \cref{lem:subject-reduction} we get 
$\IsEffGroundExp{\e'}{\T_1}{\eff_1}$ and $\SubTE{\TEff{\T_1}{\eff_1}}{\TEff{\T'}{\eff'}}$ and, by transitivity of $\subt$, $\e'\in\EWT[\T,\eff]$.
From \refItem{def:mlift}{unit} we derive   
$\mun(\e')\in \mlift[\{\eZero\}]_{\Exp}(\EWT[\T,\eff])$ with $\SubE{\EComp{\{\eZero\}}\eff}\eff$.
\item[\refToRule{effect}] 
In this case  $\e=\opg(\seq\ve)$ and $\mexpr=\Mmap{(\Ret{\ehole})}{\mval}$, with $\mval=\mrun{\opg}(\seq\ve)$.
\mbox{From rule  \refToRule{t-op},}
$\fun{\opg}{\T_1\ldots\T_n}{\T'}$ and $\IsWFGroundExp{\seq\ve}{\seq\T'}$ and $\SubT{\seq\T'}{\seq\T}$ and $\eff'=\{\opg\}$. 
 Hence, by rule \refToRule{run},  $\mval\in \mlift[\{\opg\}]_{\Val}(\VWT[\T'])$. 
Let \fun{f}{\Val}{\Exp} be defined by $f(\val)=\Ret\val$, then 
$\mexpr=  \Mmap{(\Ret{\ehole})}{\mval} = \mfun f (\mval)$.   
From \refItem{def:mlift}{nat}  and $\VWT[\T']= \PW_f(\EWT[T',\{\eZero\}])$
\begin{quoting}
$\mval \in  \mlift[\{\opg\}]_{\Val}(\VWT[\T'])= \mlift[\{\opg\}]_{\Val}(\PW_f(\EWT[T',\{\eZero\}])=
\PW_{\mfun f}(\mlift[\{\opg\}]_\Exp(\EWT[\T',\{\eZero\}]))
$
\end{quoting}
This implies $\mexpr= \mfun f (\val)\in\mlift[\{\opg\}]_\Exp(\EWT[\T',\{\eZero\}])$. 
 Since $\SubT{\T'}{\T}$, $\EWT[\T',\{\eZero\}]\subseteq\EWT[\T,\{\eZero\}]$, 
hence by monotonicity of $\mlift[\{\opg\}]_\Exp$, we get 
$\mexpr\in\mlift[\{\opg\}]_\Exp(\EWT[\T,\{\eZero\}])$, with 
$\EComp{\{\opg\}}{\{\eZero\}}=\SubE{\{\opg\}}{\eff}$.   
\item [\refToRule{ret}] 
In this case $\e=\Do{\x}{\Ret\ve}{\e'}$ and $\mexpr= \mun(\Subst{\e'}\ve\x)$. 
From rules \refToRule{t-do} and \refToRule{t-ret},
$\IsWFGroundExp{\ve}{\T_1}$ and  $\IsEffExp{\TVar{\T'_1}{\x}}{\e'}{\T'}{\eff'}$, 
with  $\SubT{\T_1}{\T'_1}$.  
 By \cref{lem:substitution}, 
we get $\IsEffGroundExp{\Subst{\e'}{\ve}{\x}}{\T''}{\eff''}$ with $\SubTE{\TEff{\T''}{\eff''}}{\TEff{\T'}{\eff'}}$.
Hence, $\SubTE{\TEff{\T''}{\eff''}}{\TEff{\T}{\eff}}$ and so $\Subst{\e'}{\ve}{\x}\in \EWT[\T,\eff]$. 
Finally, from \refItem{def:mlift}{unit},   
$\mun(\Subst{\e'}\ve\x)\in \mlift[\{\eZero\}]_{\Exp}(\EWT[\T,\eff])$ with $\EComp{\{\eZero\}}\eff=\eff$.
\item [\refToRule{do}] 
In this case $\e=\Do\x{\e_1}{\e_2}$ and $\mexpr=\Mmap{(\Do\x{\ehole}{\e_2})}{\mexpr_1} $ and $\e_1\red\mexpr_1$. 
 From rule \refToRule{t-do},
$\IsEffGroundExp{\e_1}{\T_1}{\eff_1}$  and $\IsEffExp{\TVar{\T'_1}{\x}}{\e_2}{\T'}{\eff_2}$ 
with $\eff'=\EComp{\eff_1}{\eff_2}$ and  $\SubT{\T_1}{\T'_1}$.  
Hence, from $\e_1\in\EWT[\T_1,\eff_1]$, by induction hypothesis we get that 
$\mexpr_1\in \mlift[\eff'_1]_{\Exp}(\EWT[\T_1,\eff'_2])$ 
with $\SubE{\EComp{\eff'_1}{\eff'_2}}{\eff_1}$. 
Let \fun{f}{\Exp}{\Exp} be defined by $f(\hat\e)=\Do\x{\hat\e}{\e_2}$, hence  
$\mexpr = \mfun f (\mexpr_1)$. 
By rule \refToRule{t-do}, we know that $\hat\e \in \EWT[\T_1,\hat\eff]$ implies $f(\hat\e) \in \EWT[\T',\EComp{\hat\eff}{\eff_2}] \subseteq \EWT[\T,\EComp{\hat\eff}{\eff_2}]$, that is, 
$\EWT[\T_1,\hat\eff] \subseteq \PW_f(\EWT[\T,\EComp{\hat\eff}{\eff_2}])$. 
From \refItem{def:mlift}{nat} 
and monotonicity of $ \mlift[\eff'_1]_{\Exp}$ we get
\begin{quoting}
$\mexpr_1 \in \mlift[\eff'_1]_{\Exp}(\EWT[\T_1,\eff'_2])
  \subseteq \mlift[\eff'_1]_{\Exp}(\PW_{f}(\EWT[\T,\EComp{\eff'_2}{\eff_2}])) 
  = \PW_{\mfun f}(\mlift[\eff'_1]_\Exp(\EWT[\T,\EComp{\eff'_2}{\eff_2}]))
$
\end{quoting}
that is, $\mexpr = \mfun f(\mexpr_1) \in \mlift[\eff'_1]_\Exp(\EWT[\T,\EComp{\eff'_2}{\eff_2}])$, and we get the thesis since
${\SubE{\EComp{\eff'_1}{\EComp{\eff'_2}{\eff_2}}}{\SubE{\EComp{\eff_1}{\eff_2}}{\eff'}}}$. 
 
\end{description}
\end{proof}

The results hold for the core calculus in \cref{fig:syntax}, for an arbitrary family $\Sigma$ of operations. 
The calculus, the type system and the proofs can be modularly extended by just considering cases for additional constructs, as we will do in \cref{sect:handlers} for handlers. 
Extending the subtyping relation, instead,  requires some care to preserve the needed properties.\footnote{For instance, adding $\SubT{\Bot}{\T}$ for all $\T$ as in \cref{ex:exc} is sound since $\Bot$ is an empty type. }

\section{Handlers}\label{sect:handlers}

We extend $\lambdaEff$ with \emph{handlers}, showing how our framework can deal with more sophisticated language features and, at the same time, how proofs can be modularly extended. 
In particular, it is important to illustrate that monadic semantics can incorporate handlers.
Constructs and terminology are inspired by those for algebraic effects, see, e.g., \cite{Pretnar15}; however,  the approach is different since our calculus,  being based on generic effects,  \mbox{has no explicit continuations. }

The syntax is reported in \cref{fig:syntax-handlers}.  
\begin{figure}[th]
\begin{math}
\begin{grammatica}
\produzione{\e}{ \ldots\mid\With{\handler}{\e}}{expression with handler} \\ 
\produzione{\handler}{\Handler{\seq\hc}{\x}{\e}}{handler}\\
\produzione{\hc}{\HC{\opg}{\seq\x}{\e}{\mode}}{clause}\\
\produzione{\mode}{\Continue\mid\Stop}{mode}
\end{grammatica}
\end{math}
\caption{Syntax of  handlers}\label{fig:syntax-handlers} 
\end{figure}
A handler specifies a \emph{final expression}, and a sequence of \emph{clauses}, assumed to be a map, that is, there can be at most one clause for an operation. 
Such a clause, if any, handles a call of the operation by executing the clause expression. After that, the final expression  is either executed or not depending on the \emph{mode}, 
either $\Continue$ or $\Stop$, for ``continue'' and ``stop'', respectively. As illustrated in the following examples, a  $\Continue$-clause  replaces an effect with an alternative behaviour in a continuous manner, whereas in  $\Stop$-clauses  handling the computational effect interrupts the \mbox{normal flow of execution.  }

The pure reduction extended with handlers is shown in \cref{fig:pure-red-handlers}.
\begin{figure}[th]
\begin{math}
\begin{array}{c}
\handler=\Handler{\seq\hc}{\x}{\e'}
\\[2ex]
\NamedRule{with-do}{ }
{ \With{\handler}{\Do{\y}{\e_1}{\e_2}}\purered \WithLong{\seq\hc}{\y}{(\With{\handler}{\e_2})}{\e_1}}
{
}
\\[3ex]
\NamedRule{with-ret}{ }
{ \With{\handler}{\Ret\ve}\purered \Do\x{\Ret\ve}{\e'} }
{
 } 
\\[3ex] 
\NamedRule{with-continue}
{ }
{ 
\With{\handler}{\opg(\seq\ve)} \purered \Do{\x}{\Subst{\e}{\seq\ve}{\seq\x}}{\e'}
}
{ 
\HC{\opg}{\seq\x}{\e}{\Continue}\in\seq\hc
}
\\[3ex]
\NamedRule{with-stop}
{ }
{ 
\With{\handler}{\opg(\seq\ve)}\purered \Subst{\e}{\seq\ve}{\seq\x}
}
{ 
\HC{\opg}{\seq\x}{\e}{\Stop}\in\seq\hc
}
\\[3ex]
\NamedRule{with-fwd}{ }
{\With{\handler}{\opg(\seq\ve)} \purered \Do{\x}{\opg(\seq\ve)}{\e'} }
{  
{\opg}\not\in\seq\hc
}
\\[3ex]
\NamedRule{with-ctx}{
  \e \purered \e'
}{ \With{\handler}{\e}  \purered \With{\handler}{\e'}}
{ }
\end{array}
\end{math}
\caption{Pure reduction with handlers}\label{fig:pure-red-handlers} 
\end{figure}
The behaviour of an expression with handler depends on the shape of the handled expression.

In case of a  \lstinline{do} composition of two subexpressions,  
the \lstinline{do} is eliminated by reducing to the first subexpression with as final expression the 
second one; clauses are propagated to both the subexpressions. In case of a \lstinline{return}, the handler is eliminated by reducing to the \lstinline{do} composition 
of the handled expression and the final expression.

In case of an operation call, the behaviour depends on whether a matching clause is found or not. If it is found, then the clause expression is executed, after 
replacing parameters by arguments, as shown in rules \refToRule{with-continue} and \refToRule{with-stop}. In a $\Continue$-clause, the final 
expression is executed as well. If there is no matching clause, instead, the handler is eliminated, by reducing to the \lstinline{do} composition of the operation call 
and the final expression. The outcome is that the operation call is     forwarded to be possibly handled by an outer level.  gFinally, the contextual rule is as expected. 

To extend the type-and-effect system, we rely on \emph{filter functions} associated to handlers, which describe how they
 transform effects, by essentially replacing operations matching some clause with the effect of the clause expression. 
 
To this end, first we define a \mbox{\emph{(handler) filter} $\hfilter$} to be the information about transforming effects which can be extracted from a handler, as shown in the top section of \cref{fig:filters}. Then, given a filter $\hfilter$, we define the associated function $\fun{\FilterFEx{\hfilter}}{\EffSet}{\EffSet}$.
This function, as shown in the bottom section of \cref{fig:filters},  is obtained on top of the function $\fun{\FilterF{\hfilter}}{\Sigma^\infty}{\EffSet^\infty}$ which transforms a single possibly infinite sequence of operations into a possibly infinite sequence of effects. The latter is transformed into a unique effect by taking the possibly infinite concatenation of its elements, denoted $\fun{\InfConc{}}{\EffSet^\infty}{\EffSet}$. Finally, the function is extended to effects (sets of sequences) in the obvious way.

\begin{figure}[th]
\begin{math}
\begin{grammatica}
\produzione{\hfilter}{\HFilter{\seq\cfilter}{\eff}}{filter}\\
\produzione{\cfilter}{\CFilter{\opg}{\mode}{\eff}}{clause filter}\\[1ex]
\end{grammatica}
\end{math}
\hrule
\begin{math}
\begin{array}{ll}
\\[-1.5ex]
\hfilter=\HFilter{\CFilter{\opg_1}{\mode_1}{\eff_1} \dots \CFilter{\opg_n}{\mode_n}{\eff_n}}{\eff}\\[1ex]
\fun{\FilterF{\hfilter}}{\Sigma^\infty}{\EffSet^\infty}\  \mbox{coinductively defined by:}\\
\FilterFun{\epsilon}{\hfilter}=\eff\\
\FilterFun{\Cons{\opg_i}{\alpha}}{\hfilter}=\Cons{\eff_i}{\FilterFun{\alpha}{\hfilter}}&i\in 1..n, \mode_i=\Continue\\
\FilterFun{\Cons{\opg_i}{\alpha}}{\hfilter}=\eff_i&i\in 1..n, \mode_i=\Stop\\
\FilterFun{\Cons{\opg}{\alpha}}{\hfilter}=\Cons{\{\opg\}}{\FilterFun{\alpha}{\hfilter}}&\opg\neq\opg_i\ \mbox{for all}\ i\in 1..n\\[2ex]
\fun{\FilterFEx{\hfilter}}{\EffSet}{\EffSet}\\
\FilterFunEx{\eff}{\hfilter}=\bigcup_{\alpha\in\eff}\{\InfConc{\FilterFun{\alpha}{\hfilter}}\}
\end{array}
\end{math}
\caption{Filters}\label{fig:filters}
\end{figure}

In \cref{fig:typing-handlers} we show the typing rules for expressions with handlers.
\begin{figure}[th]
\begin{math}
\begin{array}{l}
\NamedRule{t-with}{
\begin{array}{l}
\IsEffExp{\Gamma}{\e}{\T}{\eff}\\
\IsWFHandler{\Gamma}{\T'}{\handler}{\T''}{\hfilter}
\end{array}
}{\IsEffExp{\Gamma}{\With{\handler}{\e}}{\T''}{\FilterFun{\eff}{\hfilter}}}
{
\SubT{\T}{\T'} 
} 
\\[3ex]
\NamedRule{t-handler}
{\begin{array}{l}
\IsEffExp{\Gamma,\TVar{\T}{\x}}{\e'}{\T'}{\eff'}\\
\IsWFClause{\Gamma}{\T''}{\hc_i}{\cfilter_i}
\end{array}
}{\IsWFHandler{\Gamma}{\T}{\Handler{\hc_1\ldots\hc_n}{\x}{\e'}}{\T''}{\HFilter{\cfilter_1\ldots\cfilter_n}{\eff'}}
}{
\SubT{\T'}{\T''} 
}
\\[3ex]
\NamedRule{t-continue}{
  \IsEffExp{\Gamma,\TVars{\T}{\x}} {\e} {\T''} {\eff'}
}{\IsWFClause{\Gamma}{\T'}{\HC{\opg}{\seq\x}{\e}{\Continue}}{\CFilter{\opg}{\Continue}{\eff'}}}
{
  \fun{\opg}{\seq\T}{\T} \\ 
  \SubT{\T''}{\T}  
}
\\[3ex]
\NamedRule{t-stop}{
  \IsEffExp{\Gamma,\TVars{\T}{\x}} {\e} {\T''} {\eff'}
}{\IsWFClause{\Gamma}{\T'}{\HC{\opg}{\seq\x}{\e}{\Stop}}{\CFilter{\opg}{\Stop}{\eff'}}}
{
  \fun{\opg}{\seq\T}{\T} \\ 
  \SubT{\T''}{\T'} 
}
\end{array}
\end{math}
\caption{Typing rules for handlers}\label{fig:typing-handlers}
\end{figure}
In rule \refToRule{t-with}, in order to typecheck an expression with handler, first we get the type and effect of the handled expression. The type is used to typecheck the handler, as  (subtype of the)  type of the parameter of the final expression, see rule \refToRule{t-handler}. Typechecking the handler we get a type, being that of the final expression, which will be the type of the whole expression. Moreover, we extract from the handler a filter, which is used to transform the effect $\eff$ of the handled expression, getting the resulting effect of the whole expression. In detail,  as formally described in \cref{fig:filters}, the filter transforms any sequence of operations in $\eff$ by replacing the first operation matching some clause, if any, with the effect of the clause expression; then, the remaining sequence is disregarded if the clause is $\Stop$, otherwise filtered in turn. If the sequence to be filtered is finite, and no matching $\Stop$-clause is found, then the final effect is appended in the end.

In rule \refToRule{t-handler}, as said above, the type on the left of the judgment is used as type of the parameter of the final expression, whose type will be returned by the handler. This type is also needed to typecheck  $\Stop$-clauses, see below. The filter extracted from the handler consists in a clause filter for each clause, and the effect of the final expression.

For each clause, the extracted filter consists of the operation name, mode, and effect of the expression, as shown in rules \refToRule{t-continue} and \refToRule{t-stop}.  A $\Continue$-clause is meant to provide alternative code to be executed before the final expression, hence the type of the clause expression should be (a subtype of) the return type of the operation. In a $\Stop$-clause, instead, the result of the clause expression becomes that of the whole expression with handler, hence the type of the former should be (a subtype of) the latter. 

\begin{example}
We show handlers for some of the previous examples. A handler of shape  $\Handler{\seq\hc}{\x}{\Ret{\x}}$ is abbreviated  by  $\seq\hc$. 
\begin{enumerate}\label{ex:handlers}
\item Set $\handler =\HC{\raiseop{\PredZeroExc}}{}{\Ret{\Zero}}{\Stop}$. Then
\begin{quoting}
$
\begin{array}{lcl}
\With{\handler}{\App{\predfun}{\Zero}}&\Redstar&
\With{\handler}{\raiseop{\PredZeroExc}}\\
&\Red&\Ret{\Zero}\\
&\Red&\Zero
\end{array}
$
\end{quoting}
As shown in \refItem{ex:typing}{exc}, we get the judgment $\IsEffExp{\emptyset}{\App{\predfun}{\Zero}}{\Nat}{\SetOf{\epsilon,\raiseop{\PredZeroExc}}}$. On the other hand, with the handler we get
\begin{quoting}
$\IsEffExp{\emptyset}{\With{\handler}{\App{\predfun}{\Zero}}}{\Nat}{\SetOf{\epsilon}}$
\end{quoting}
since $\FilterFEx{\hfilter}(\SetOf{\epsilon,\raiseop{\PredZeroExc}})=\SetOf{\epsilon}$ where $\hfilter=\HFilter{\CFilter{\raiseop{\PredZeroExc}}{\Stop}{\SetOf{\epsilon}}}{\SetOf{\epsilon}}$ is the filter extracted from $\handler$. 
As the reader could expect, an $\Stop$-clause is appropriate in this case. 
With a $\Continue$-clause,  see rule \refToRule{t-continue}, the type of the clause expression should be (a subtype of) the return type of the operation, which is $\Bot$. Since no  value has  type $\Bot$, no value could be returned, as already noted in \cite{PlotkinP03}.\footnote{Hence, the clause expression could only be another $\aux{raise}$ or a diverging expression.}

\item 

Assuming the function $\fun{\aux{even}}{\,\Nat}{\Bool}$ checking the parity of a number, set
\begin{quoting}
$\handler_1 =\HC{\writeop{\loc'}}{\x}{\writeop{\loc}(\x)}{\Continue}$\\
$\handler_2 =\HC{\writeop{\loc'}}{\x}{\If{\Even{\x}}{\Ret{\x}}{\writeop{\loc}(\x)}}{\Continue}$
\end{quoting}
Then
\begin{small}
\begin{quoting}
$\infsem{\With{\handler_1}{\App{\writeincr}{\Zero}}}=\Pair{\Pair{\loc}{\Zero}\cdot \Pair{\loc}{\Zero}\cdot\Pair{\loc}{\nat{1}}\cdot\Pair{\loc}{\nat{1}}\cdot\ldots\cdot\Pair{\loc}{\nat{n}}\cdot\Pair{\loc}{\nat{n}}\cdot \ldots}{\mbot}$\\
$\infsem{\With{\handler_2}{\App{\writeincr}{\Zero}}}=\Pair{\Pair{\loc}{\Zero}\cdot\Pair{\loc}{\nat{1}}\cdot\Pair{\loc}{\nat{1}}\cdot\ldots\Pair{\loc}{\nat{2k}}\cdot\Pair{\loc}{\nat{2k+1}}\cdot\Pair{\loc}{\nat{2k+1}}\cdot\ldots}{\mbot}$\\[1ex]
$\finsem{\With{\handler_1}{\App{\writedecr}{\nat{n}}}}=
\Pair{\Pair{\loc}{\nat{n}}\cdot\Pair{\loc}{\nat{n}}\cdot\ldots\cdot\Pair{\loc}{\Zero}\cdot\Pair{\loc}{\Zero}}{\unit}$\\
$\finsem{\With{\handler_2}{\App{\writedecr}{\nat{2k}}}}=
\Pair{\Pair{\loc}{\nat{2k}}\cdot\Pair{\loc}{\nat{2k-1}}\cdot\Pair{\loc}{\nat{2k-1}}\ldots\cdot\Pair{\loc}{\nat{1}}\cdot\Pair{\loc}{\nat{1}}\cdot\Pair{\loc}{\Zero}}{\unit}$
\end{quoting}
\end{small}
In this case, a $\Continue$-clause is appropriate, since the aim is to continuously handle the $\writeop{\loc'}$ operation.
By the typing judgments shown in \refItem{ex:typing}{output}, we get 
\begin{quoting}
$\IsEffExp{\emptyset}{\App{\writeincr}{\Zero}}{\Unit}{\SetOf{(\writeop{\loc}\cdot\writeop{\loc'})^\omega}}$\\
$\IsEffExp{\emptyset}{\writedecr}{\Unit}{\SetOf{(\writeop{\loc}\cdot\writeop{\loc'})^n\mid n\geq1}}$
\end{quoting}
On the other hand, with the handler we get, with  $\produzioneinline{\alpha}{\epsilon\mid\writeop{\loc}}$ 
\begin{quoting}
$\IsEffExp{\emptyset}{\With{\handler_1}{\App{\writeincr}{\Zero}}}{\Unit}{\SetOf{(\writeop{\loc}\cdot\writeop{\loc})^\omega}}$\\
 $\IsEffExp{\emptyset}{\With{\handler_2}{\App{\writeincr}{\Zero}}}{\Unit}{\SetOf{\alpha\cdot\writeop{\loc})^\omega}}$\\
 $\IsEffExp{\emptyset}{\App{\writedecr}{\nat{n}}}{\Unit}{\SetOf{\alpha\cdot\writeop{\loc'})^n\mid n\geq1}}$
\end{quoting}
As already noted,  effect types only provide a static approximation of the computational effects; notably, in the last two judgments,  the effect type contains other sequences  besides the two which can be actually performed, depending on the argument.
\end{enumerate}
\end{example}

\smallskip
\noindent\textbf{Soundness for handlers }
The results of \cref{sect:results} can be extended to handlers. For monadic subject reduction we only need 
to show subject reduction for the newly introduced rules, since they are pure. The proofs of the results are in \cref{app:handlers-results}. 

\begin{lemma}[Monadic Progress for handlers]\label{theo:h-mnd-red-progress} Set $\e$ of shape $\With{\_}{\_}$. \\
If $\IsEffGroundExp{\e}{\T}{\eff}$ then
 $\e\red\mexpr$ for some $\mexpr\in\mfun\Exp$.
\end{lemma}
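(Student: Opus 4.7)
I plan to proceed by structural induction on $\e$. Since $\e$ has the shape $\With{\handler}{\e'}$, inversion of rule \refToRule{t-with} applied to $\IsEffGroundExp{\e}{\T}{\eff}$ yields that $\e'$ is itself well-typed, say $\IsEffGroundExp{\e'}{\T_0}{\eff_0}$. My strategy is to show that $\e$ reduces \emph{purely}, i.e., $\e\purered\e^\star$ for some $\e^\star$, and then invoke rule \refToRule{pure} of \cref{fig:monadic-red} to conclude $\e\red\mun(\e^\star)\in\mfun\Exp$.

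Applying Monadic Progress (\cref{theo:mnd-red-progress}, extended to the handler fragment by the induction hypothesis that covers nested handlers), either $\e'=\Ret\ve$ for some value $\ve$, in which case rule \refToRule{with-ret} gives the desired pure reduction, or $\e'\red\mexpr'$ for some monadic expression $\mexpr'$. In the latter situation I would case-split on the rule ending the derivation of $\e'\red\mexpr'$: if \refToRule{pure} was used, so $\e'\purered\e''$, then rule \refToRule{with-ctx} lifts this to $\e\purered\With{\handler}{\e''}$; if \refToRule{effect} was used, then $\e'=\opg(\seq\ve)$ and I branch on whether $\opg$ appears among the clauses of $\handler$, triggering \refToRule{with-continue} or \refToRule{with-stop} (according to the mode) when it does and \refToRule{with-fwd} when it does not; finally, if either \refToRule{ret} or \refToRule{do} was used, then $\e'$ has shape $\Do{\y}{\e_1}{\e_2}$ and rule \refToRule{with-do} applies directly. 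In every branch I obtain a pure reduction of $\e$, and one last application of rule \refToRule{pure} of \cref{fig:monadic-red} closes the case.

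The only mildly delicate point is the sub-case where $\e'$ itself has the form $\With{\handler'}{\e''}$: here the structural induction hypothesis supplies $\e'\red\mexpr'$, and I rely on the fact that every rule in \cref{fig:pure-red-handlers} is a pure reduction, so the derivation of $\e'\red\mexpr'$ must end with rule \refToRule{pure}, placing us in the first branch of the case-split above. No canonical-forms reasoning beyond that already employed in the proof of \cref{theo:mnd-red-progress} is needed, and the same argument slots straight into that proof as the missing inductive case for handler expressions.
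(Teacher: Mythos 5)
Your proof is correct and follows essentially the same route as the paper's: both arguments show that $\With{\handler}{\e'}$ always takes a \emph{pure} step by case analysis on the handled expression (with an inner induction for nested handlers) and then wrap up with rule \refToRule{pure}. The only cosmetic difference is that the paper cases directly on the syntactic shape of $\e'$ (needing only the separate progress-for-application lemma in the application case), whereas you recover the same case split indirectly by first invoking monadic progress on $\e'$ and then inverting the resulting reduction derivation.
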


Subject reduction relies on the properties of the  functions associated to filters  defined
in \cref{fig:filters}. In order to state these properties, define {\em $\hfilter$ to be a subhandler of $\hfilter'$}, dubbed $\SubH{\hfilter}{\hfilter'}$, if  

\begin{quoting}
$\hfilter={\HFilter{\CFilter{\opg_1}{\mode_1}{\eff_1}\dots\CFilter{\opg_n}{\mode_n}{\eff_n}}{\eff}}$\\
$\hfilter'={\HFilter{\CFilter{\opg_1}{\mode_1}{\eff'_1}\dots\CFilter{\opg_n}{\mode_n}{\eff'_n}}{\eff'}}$\\
$\SubE{\eff}{\eff'}$ and $\SubE{\eff_i}{\eff'_i}$ for all $i\in i..n$
\end{quoting}

\begin{lemma}[Properties of $\FilterFEx\hfilter$]\label{lem:hfilter}\
\begin{enumerate}
\item \label{lem:hfilter:one} If $\SubE{\eff}{\eff'}$ and  $\SubH{\hfilter}{\hfilter'}$, then $\SubE{\FilterFunEx{\eff}{\hfilter}}{\FilterFunEx{\eff'}{\hfilter'}}$.
\item \label{lem:hfilter:two} $\FilterFunEx{\EComp{\eff_1}{\eff_2}}{\HFilter{\seq\cfilter}{\eff}}\supseteq\FilterFunEx{{\eff_1}}{\HFilter{\seq\cfilter}{\eff'}}$ where $\eff'=\FilterFunEx{{\eff_2}}{\HFilter{\seq\cfilter}{\eff}}$
\end{enumerate}
\end{lemma}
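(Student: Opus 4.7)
The plan is to reduce both parts to properties of $\FilterF{\hfilter}$ acting on a single sequence in $\Sigma^\infty$, and then to lift these to effects by unfolding the union that defines $\FilterFEx{}$.

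For the first claim, I would first observe that, given $\SubH{\hfilter}{\hfilter'}$, the two coinductively generated sequences $\FilterFun{\alpha}{\hfilter}$ and $\FilterFun{\alpha}{\hfilter'}$ have the same branching shape for every $\alpha$: the clauses in \cref{fig:filters} dispatch only on the head of $\alpha$, on the operations $\opg_1,\ldots,\opg_n$ present in the clause filters, and on their modes, all of which coincide by the definition of $\SubH$. Hence the two sequences differ only position by position, and the hypothesis provides a componentwise inclusion on the effects produced; since $\InfConc{}$ is monotone in this componentwise sense, one obtains $\InfConc{\FilterFun{\alpha}{\hfilter}} \subseteq \InfConc{\FilterFun{\alpha}{\hfilter'}}$. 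Combining this with $\eff \subseteq \eff'$ and the definition $\FilterFunEx{\eff}{\hfilter} = \bigcup_{\alpha\in\eff}\InfConc{\FilterFun{\alpha}{\hfilter}}$ yields the desired inclusion.

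For the second claim, I would fix $\gamma \in \InfConc{\FilterFun{\alpha}{\HFilter{\seq\cfilter}{\eff'}}}$ for some $\alpha \in \eff_1$ and exhibit $\beta \in \eff_2$ with $\gamma \in \InfConc{\FilterFun{\EComp{\alpha}{\beta}}{\HFilter{\seq\cfilter}{\eff}}}$, from which the inclusion follows since $\EComp{\alpha}{\beta} \in \EComp{\eff_1}{\eff_2}$. The case analysis is on $\alpha$: if $\alpha$ contains an operation matching a $\Stop$-clause, or if $\alpha$ is infinite with no $\Stop$-match, then the final-effect slot of $\FilterF{}$ is never consumed, so $\FilterFun{\alpha}{\HFilter{\seq\cfilter}{\eff'}}$ coincides with $\FilterFun{\alpha}{\HFilter{\seq\cfilter}{\eff}}$, and moreover $\EComp{\alpha}{\beta}$ either truncates at the same $\Stop$-match as $\alpha$ or equals $\alpha$ outright, so any $\beta \in \eff_2$ works. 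The remaining case is $\alpha$ finite with no $\Stop$-match: $\FilterFun{\alpha}{\HFilter{\seq\cfilter}{\eff'}}$ is then a finite sequence ending in $\eff'$, and either $\gamma$ stops before reaching this final slot (so again any $\beta$ works) or $\gamma$ decomposes as $\EComp{\delta}{\sigma}$ with $\sigma \in \eff' = \FilterFunEx{\eff_2}{\HFilter{\seq\cfilter}{\eff}}$; unfolding $\FilterFEx{}$ on $\eff_2$ then supplies $\beta \in \eff_2$ with $\sigma \in \InfConc{\FilterFun{\beta}{\HFilter{\seq\cfilter}{\eff}}}$, and since $\FilterFun{\EComp{\alpha}{\beta}}{\HFilter{\seq\cfilter}{\eff}}$ is exactly the sequence of filtered operations of $\alpha$ followed by $\FilterFun{\beta}{\HFilter{\seq\cfilter}{\eff}}$, the concatenation $\gamma = \EComp{\delta}{\sigma}$ lies in its infinite concatenation.

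The main obstacle will be carrying out this case analysis cleanly alongside the coinductive definition of $\FilterF{}$, in particular verifying that $\FilterFun{\EComp{\alpha}{\beta}}{\HFilter{\seq\cfilter}{\eff}}$ is obtained from $\FilterFun{\alpha}{\HFilter{\seq\cfilter}{\eff'}}$ by substituting $\FilterFun{\beta}{\HFilter{\seq\cfilter}{\eff}}$ for the trailing $\eff'$ slot when $\alpha$ is finite without a $\Stop$-match. This is a routine finite induction on the length of $\alpha$ using the clauses of \cref{fig:filters}; once established, both claims follow from the arguments above.
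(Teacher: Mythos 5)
Your proposal is correct and follows essentially the same route as the paper's proof: it unfolds $\FilterFunEx{\eff}{\hfilter}$ as a union over $\alpha\in\eff$, then case-splits on whether $\alpha$ meets a $\Stop$-clause and on whether it is finite or infinite, using $\EComp{\alpha}{\beta}=\alpha$ for infinite $\alpha$ and, in the finite no-$\Stop$ case, decomposing $\gamma$ through the trailing $\eff'$ slot and unfolding $\eff'=\FilterFunEx{\eff_2}{\HFilter{\seq\cfilter}{\eff}}$ to obtain the witness $\beta\in\eff_2$. If anything, your treatment of item 1 is slightly more complete than the paper's, since you explicitly justify the componentwise monotonicity of $\InfConc{\FilterFun{\alpha}{-}}$ in the clause effects coming from $\SubH{\hfilter}{\hfilter'}$, a step the paper's proof leaves implicit.
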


\begin{lemma}[Subject Reduction for handlers]\label{lem:h-subject-reduction} Set $\e$ of shape $\With{\_}{\_}$. \\
If $\IsEffGroundExp{\e}{\T}{\eff}$ and
 $\e\purered \e'$,
 then $\IsEffGroundExp{\e'}{\T'}{\eff'}$
such that  $\SubTE{\TEff{\T'}{\eff'}}{\TEff\T\eff}$. 
\end{lemma}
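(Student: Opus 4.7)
The plan is to proceed by case analysis on the rule used to derive $\e\purered\e'$, restricted to the rules in \cref{fig:pure-red-handlers} whose conclusion has shape $\With{\_}{\_}$. In each case, I would apply inversion to the typing derivation of $\e$ using \refToRule{t-with}, obtaining $\IsEffGroundExp{\hat{\e}}{\T_1}{\eff_1}$ for the handled subexpression together with $\IsWFHandler{\emptyset}{\T_2}{\handler}{\T}{\hfilter}$ with $\SubT{\T_1}{\T_2}$ and $\eff=\FilterFunEx{\eff_1}{\hfilter}$, and then further invert \refToRule{t-handler} to obtain the typing of the final expression $\e'$ with effect $\eff_\circ$ and clauses $\hc_i$ with clause filters $\cfilter_i=\CFilter{\opg_i}{\mode_i}{\eff_i}$, where $\hfilter=\HFilter{\seq\cfilter}{\eff_\circ}$. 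My task is then to reconstruct a derivation for the reduct $\e'$ with a type/effect below $\TEff\T\eff$.

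The cases \refToRule{with-ret}, \refToRule{with-continue}, \refToRule{with-stop}, and \refToRule{with-fwd} are essentially computations on a single singleton effect, $\{\epsilon\}$ or $\{\opg\}$. In each of these cases I would compute $\FilterFun{\alpha}{\hfilter}$ for $\alpha=\epsilon$ or $\alpha=\opg$ directly from the clauses of \cref{fig:filters}, and then check that the reconstructed derivation built with \refToRule{t-do} (and \refToRule{t-op} or \refToRule{t-ret} as appropriate) produces exactly the matching composed effect $\EComp{\eff_i}{\eff_\circ}$ (resp.\ $\eff_i$, $\EComp{\{\opg\}}{\eff_\circ}$, $\eff_\circ$). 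Substitutions of clause parameters in \refToRule{with-continue} and \refToRule{with-stop} are handled using \cref{lem:substitution}, using the operation signature $\fun{\opg}{\seq\T}{\T}$ extracted from \refToRule{t-op} applied to $\opg(\seq\ve)$, and subtyping covariance to match the clause hypotheses in \refToRule{t-continue}/\refToRule{t-stop}.

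The main obstacle will be \refToRule{with-do}, where $\Do{\y}{\e_1}{\e_2}$ is typed with the composed effect $\EComp{\eff_{11}}{\eff_{12}}$ and the reduct is $\WithLong{\seq\hc}{\y}{(\With{\handler}{\e_2})}{\e_1}$. To type the reduct I would form the new handler $\handler' = \Handler{\seq\hc}{\y}{\With{\handler}{\e_2}}$ whose final expression has effect $\FilterFunEx{\eff_{12}}{\hfilter}$, giving it filter $\hfilter' = \HFilter{\seq\cfilter}{\FilterFunEx{\eff_{12}}{\hfilter}}$; then \refToRule{t-with} assigns the reduct the effect $\FilterFunEx{\eff_{11}}{\hfilter'}$. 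The key step is to show that this effect is contained in $\FilterFunEx{\EComp{\eff_{11}}{\eff_{12}}}{\hfilter}$, which is exactly \refItem{lem:hfilter}{two}. The type $\T$ is preserved because the new handler returns the same result type as the original.

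The remaining case \refToRule{with-ctx} is the inductive step: from $\e\purered\e'$ inside a handler context I would apply the induction hypothesis to obtain a typing of $\e'$ with type/effect below that of $\e$, and then use \refItem{lem:hfilter}{one} (monotonicity of $\FilterFEx{\hfilter}$ in its argument, with $\hfilter\ll\hfilter$) to conclude $\SubE{\FilterFunEx{\eff'_1}{\hfilter}}{\FilterFunEx{\eff_1}{\hfilter}}$, together with an application of \refToRule{sub-fun}/\refToRule{sub-te} to wrap everything up. The only subtlety across all cases is keeping track of subtyping slack introduced by inversion (using \refToRule{sub-te} and transitivity of $\subt$) so that the final judgment can be adjusted back to $\SubTE{\TEff{\T'}{\eff'}}{\TEff\T\eff}$ as stated.
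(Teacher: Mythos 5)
Your proposal matches the paper's proof essentially case for case: the same inversion on \refToRule{t-with}/\refToRule{t-handler}, the same direct computation of the filter on $\{\eZero\}$ or $\{\opg\}$ with reconstruction via \refToRule{t-do} and \cref{lem:substitution} in the clause cases, the same rebuilt handler with final effect $\FilterFunEx{\eff_{12}}{\hfilter}$ and appeal to \refItem{lem:hfilter}{two} for \refToRule{with-do}, and \refItem{lem:hfilter}{one} for \refToRule{with-ctx}. The only detail worth adding is that in \refToRule{with-ctx} the handled redex may be an application rather than another handler expression, in which case you must invoke \cref{lem:subject-reduction} instead of the induction hypothesis (which is stated only for expressions of shape $\With{\_}{\_}$), exactly as the paper does.
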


\cref{lem:h-subject-reduction} is proved, as customary, by induction on the reduction rules of \cref{fig:pure-red-handlers}. \cref{lem:hfilter:one} of \cref{lem:hfilter} is used
for the case of rule \refToRule{with-ctx} and \cref{lem:hfilter:two} for the one of rule \refToRule{with-do}, where the effects of the
second subexpression of the \lstinline{do} construct must be accounted for, after the reduction,  in the effects of  the final expression 
of the handler of its first subexpression.


\section{Related work and conclusion}
\label{sect:related}

\smallskip
\noindent\textbf{Monadic semantics} 
The idea that monads can model computational effects in programming languages goes back to the pioneering Moggi's work \cite{Moggi89,Moggi91}. 
He showed that one can use (strong) monads to organise the denotational semantics of effectful languages, interpreting impure expressions as functions (actually  arrows of an arbitrary category) returning monadic values, which can be sequenced by Kleisli composition. 
However, the structure of a monad does not include any operation for actually raising computational effects, which thus need to be defined ad-hoc in specific instances.  
Moreover, monads are difficult to combine, requiring non trivial notions like monad transformers \cite{LiangHJ95,JaskelioffM10}. 

To overcome these difficulties and make the model closer to the syntax, 
 Plotkin and Power  \cite{PlotkinP01,PlotkinP02,PlotkinP03} introduced algebraic effects which, instead, 
explicitly consider operations to raise computational effects. 
These can be interpreted by additional structure on the monad and, moreover, when equipped with an equational theory, 
they actually determine a monad, which provides a syntactic model for the language. 
Thus, one reduces the problem of combining monads to the much easier problem of combining theories \cite{HylandPP06}, 
 greatly increasing modularity. 

An alternative, essentially equivalent, way of interpreting algebraic operations is by means of runners, a.k.a. comodels 
\cite{PowerS04,PlotkinP08,Uustalu15,AhmanB20}. Roughly, runners describe how operations are executed by the system, that is, 
how they transform the environment where  they are  run. This essentially amounts to giving an interpretation of operations in the state monad. 
More general runners, where the system is modelled in a more expressive way, are considered by \cite{AhmanB20}, where the state monad is combined with errors and system primitive operations. 

On the operational side, algebraic effects are typically treated as uninterpreted operations, that is, 
the evaluation process just builds a tree  of operation calls \cite{JohannSV10,SimpsonV20,XiaZHHMPZ20,Pretnar15}. 
Monadic operational semantics for $\lambda$-calculi with algebraic effects are also considered, mainly in the form of a monadic definitional interpreter (see, e.g., \cite{LiangHJ95,DaraisLNH17,Gavazzo18,DalLagoG22,DagninoG24}). 
That is, they directly define a function from expressions to monadic values, which essentially corresponds to our infinitary semantics. 
Small-step approaches are also considered by \cite{GavazzoF21,GavazzoTV24}. 
The former 
tackles a different problem, that is, studying monadic rewriting systems, which require the use of  sophisticated relational techniques, and thus restricts the class of available monads.
We can avoid these difficulties
since we focus on deterministic rewriting, which can be addressed using just sets and functions. 
The latter, instead, studies a specific calculus where, as already noticed, the way sequences of steps are constructed is very close to ours; however, they do not need to introduce  $\wrng$  in configurations, as type errors are prevented syntactically.

\noindent\textbf{Type-and-effect systems} 
Type-and-effect systems, or simply effect systems \cite{Wadler98,NielsonN99,WadlerT03,MarinoM09,Katsumata14}, 
are the most popular way of statically controlling computational effects. Many have been designed for specific notions of computational effect and implemented in mainstream programming languages, the most  well-known  being the  mechanism of Java checked exceptions. 
Katsumata  \cite{Katsumata14} recognized that effect systems share a common algebraic structure, notably they form an ordered monoid, and gave them denotational semantics through parametric monads, using a structure equivalent to our notion of interpretation \mbox{(see \cref{rem:graded-mnd}). }
%

\noindent\textbf{Effect handlers} 
Plotkin and Pretnar \cite{PlotkinPretnar09,PlotkinPretnar13} introduced effect handlers as a generalisation of exception handling mechanisms. 
They are an extremely powerful programming abstraction, allowing to describe the semantics of algebraic operations in the language itself, thus enabling the simulation of several effectful programs, such as stream redirection or cooperative concurrency \cite{PlotkinPretnar13,KammarLO13,BauerP15,Pretnar15}. 
When a call to an algebraic operation is caught, the alternative code can resume the original computation using a form of continuation-passing style. 
Other forms of handlers have  been considered, notably, shallow handlers  \cite{KammarLO13,HillerstromL18}, where only the first call to an operation is handled. 
Our handlers are inspired by those for algebraic effects, see, e.g., \cite{Pretnar15}; however,  the approach is different since our calculus  \mbox{has no explicit continuations. }


\smallskip
\noindent\textbf{Summary}
In  the research on   foundations of programming languages,  it is a routine task to describe execution through a small-step reduction, and prove progress and subject reduction for the type system. Can this be smoothly combined with the long-established approach where computational effects are modularly modeled by a monad, so to enjoy all the advantages of separation of concerns? The answer provided in this paper is yes. Notably, we provide a meta-theory defining abstract notions of monadic small-step semantics and type-and-effect system, and prove that type-and-effect soundness is implied by progress and subject reduction properties, with an inductive argument 
 similar  to the standard one. 
 
This overall achievement relies on two key specific contributions. On one hand, we provide a canonical way to construct, on top of a monadic reduction, a small-step operational semantics where computations, even though always represented by infinite sequences, can be distinguished as either non-terminating, or successfully terminating, or stuck.
 On the other hand, we provide a formal model of the ``meaning'' of effect types, independent in principle from the underlying language and  type system.

\smallskip
\noindent\textbf{Discussion and future work}

  The way we define the ``transitive closure'' of a monadic reduction, which is a relation from a set to a different one,  is similar, as said, to that proposed by \cite{GavazzoTV24}. Notably, such reduction is assumed to be deterministic,  since starting from an arbitrary relation would require a relational extension of the monad \cite{ Barr70}. 
Confluence as well would require strong assumptions on the monad, notably some form of commutativity, ruling out most of the relevant examples.
Moreover, the aim here is to prove soundness for programming languages, which typically adopt a deterministic evaluation strategy. Differently from \cite{GavazzoTV24}, we provide a language independent definition; moreover, whereas they consider an intrinsically total reduction, in this paper, as mentioned above, we address the additional problem to characterize stuck computations, as needed to express soundness. 

In our framework, non-termination is always possibile, rather than be considered as an effect. This is essentially a choice we made, possibly influenced by the fact that in standard soundness we have three possible outcomes: non-termination, termination with a value, and stuck. The coinductive Delay monad \cite{Capretta05} could be an alternative approach to define the infinitary semantics, assuming a way to be combined with the monad modeling computational effects, that is, a distributive law. The relationship between these two approaches, as far as we know, is not clear, and is an interesting direction to be investigated. 

Our definition of $\omega$-CPO-ordered monad is given for monads on $\Set$. A challenging and relevant problem is to consider a category different from $\Set$; our feeling is that the notion could be generalized by considering a monad $\mnd$ on a category $\ct{C}$ such that the Kleisli category $\ct{C}^\mnd$ is CPO-enriched.

In this paper, where the focus is different, we did not study decidability of the type-and-effect system; we did not even provide a syntactic representation of effects, which are considered semantic entities, notably possibly infinite sets of possibly infinite sequences. Of course decidability is a very important issue to be investigated; the first step should then be to choose a finite representation, e.g., by means of a system of guarded equations.

We  illustrated our approach by a lambda-calculus with generic effects. Clearly, it would be important to investigate how other calculi can be  formalized as to take advantage of  the meta-theory.  Notably, we plan to apply the approach \mbox{to an object-oriented calculus. } 
Moreover, here we considered non-standard handlers, as our calculus is based on generic effects and so it does not use explicit continuations. 
Hence, it would be nice to investigate the precise relationship between them and handlers for algebraic effects used in the literature.  
It would also be  interesting to allow the interpretation of operations to return monadic expressions, rather than monadic values. 
This would enable a more interactive behaviour with the system; for instance, the semantics of an operation, instead of returning an unrecoverable error,  could  return a call to  the operation $\raiseop{\exc}$, which then could be handled by the program.

On the side of the meta-theory, 
 one soon realizes that the proofs of (monadic) progress and subject reduction all have a similar structure: 
they are carried out by inductive arguments relying on inversion and substitution properties of the operational semantics and the type system. 
A natural question is thus whether this common structure can be abstracted in our meta-theory. 
This is indeed the case,  and we are currently working on such proof technique, which  requires  considering a more structured notion of language.   
Another property of  a type system one could be interested  in   is its  completeness, whose proof typically relies on the subject-expansion property. 
The latter could be formulated in  our monadic setting and we conjecture that, together with some additional conditions on predicate liftings interpreting effect types, it would imply completeness.

\bibliographystyle{plainurl}
\bibliography{biblio}

\appendix 

\section{Proofs of \cref{sect:results}}\label{app:lambda-results}
%

\begin{lemma}[Inversion]\label[lemma]{lem:inversion}\ 
\begin{enumerate}
\item \label{lem:inversion:app} If $ \IsEffGroundExp{\App{\ve_1}{\ve_2}}{\T}{\eff}$, then
$ \IsWFGroundExp{\ve_1}{\FunType{\T_1}{\eff}{\T}}$ and $\IsWFGroundExp{\ve_2}{\T_2}$ and $\SubT{\T_2}{\T_1}$.
\item \label{lem:inversion:op} If $\IsEffGroundExp{\opg(\ve_1,\ldots,\ve_n)}{\T}{\eff}$, then $\eff=\{\opg\}$ and 
$\fun{\opg}{\T_1\ldots\T_n}{\T}$ and $\IsWFGroundExp{\ve_i}{\T'_i}$ with $\SubT{\T'_i}{\T_i}$ for all $i\in 1..n$.
\item \label{lem:inversion:ret} If $ \IsEffGroundExp{\Ret\ve}{\T}{\eff}$, then 
$\eff=\{\eZero\}$ and 
$\IsWFGroundExp{\ve}{\T}$.
\item \label{lem:inversion:do} If $\IsEffGroundExp{\Do\x{\e_1}{\e_2}}{\T}{\eff} $, then 
 $\IsEffGroundExp{{\e_1}}{\T_1}{\eff_1}$ and  $  \IsEffExp{\TVar{\T'_1}{\x}}{{\e_2}}{\T}{\eff_2}$ and  $\SubT{\T_1}{\T'_1}$ and 
$\eff=\EComp{\eff_1}{\eff_2}$.
 \end{enumerate}
\end{lemma}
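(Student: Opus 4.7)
The plan is to proceed by straightforward case analysis on the last rule applied in the derivation of $\IsEffGroundExp{\e}{\T}{\eff}$. The key observation is that the type-and-effect system in \cref{fig:typing} is \emph{syntax-directed}: there is no standalone subsumption rule, and for each of the four expression forms under consideration — $\App{\ve_1}{\ve_2}$, $\opg(\seq\ve)$, $\Ret\ve$, and $\Do\x{\e_1}{\e_2}$ — exactly one rule has a matching conclusion (namely \refToRule{t-app}, \refToRule{t-op}, \refToRule{t-ret}, and \refToRule{t-do} respectively). Consequently, the last rule used is uniquely determined by the top-level constructor of the expression, and its premises give immediately the desired decomposition.

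For clause \ref{lem:inversion:app}, I would invert \refToRule{t-app}, reading off the typing of $\ve_1$ as $\FunType{\T_1}{\eff}{\T}$ (note that the effect appearing in the function type coincides with the effect of the application in the conclusion of the rule), together with $\IsWFGroundExp{\ve_2}{\T_2}$ and the side condition $\SubT{\T_2}{\T_1}$. For clause \ref{lem:inversion:op}, inverting \refToRule{t-op} yields $\eff=\SetOf{\opg}$, the signature $\fun{\opg}{\T_1\ldots\T_n}{\T}$, and the subtype-checked argument typings. For clause \ref{lem:inversion:ret}, only \refToRule{t-ret} is applicable, directly giving $\eff=\SetOf{\eZero}$ and $\IsWFGroundExp{\ve}{\T}$. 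For clause \ref{lem:inversion:do}, \refToRule{t-do} yields the decomposition $\eff=\EComp{\eff_1}{\eff_2}$ together with the two subderivations and the compatibility condition $\SubT{\T_1}{\T'_1}$.

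There is no substantial obstacle: the lemma is essentially a restatement of the introduction rules, read in the opposite direction. Had the system instead included an explicit subsumption rule of the form ``$\IsEffGroundExp{\e}{\T}{\eff}$ and $\SubTE{\TEff{\T}{\eff}}{\TEff{\T'}{\eff'}}$ imply $\IsEffGroundExp{\e}{\T'}{\eff'}$'', one would need a preliminary lemma to collapse trailing chains of subsumption before reading off the structural rule; but by baking subtyping into each structural rule as side conditions on the premises (e.g., $\SubT{\T_2}{\T_1}$ in \refToRule{t-app}, $\SubT{\T_1}{\T_2}$ in \refToRule{t-do}), this complication is sidestepped, and inversion reduces to reading premises directly.
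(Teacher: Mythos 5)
Your proof is correct and is exactly the argument the paper relies on (the paper states \cref{lem:inversion} without proof, as routine): since the typing judgment has no subsumption rule and each of the four expression forms matches the conclusion of exactly one rule, inversion amounts to reading off the premises and side conditions of \refToRule{t-app}, \refToRule{t-op}, \refToRule{t-ret}, and \refToRule{t-do}. Your remark that subtyping is confined to side conditions on the premises, which is what keeps the system syntax-directed, is precisely the right justification.
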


\begin{lemma}[Canonical Forms]\label[lemma]{lem:canonical}\
\begin{enumerate}
\item \label{lem:canonical:fun} 
If $\IsWFGroundExp{\ve}{\FunType{\T'}{\eff}{\T}}$, then $\ve=\RecFun{\f}{\x}{\e}$.
 \end{enumerate}
\end{lemma}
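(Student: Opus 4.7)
The plan is a straightforward case analysis on the syntactic form of the value $\ve$, combined with inversion on the typing derivation of $\IsWFGroundExp{\ve}{\FunType{\T'}{\eff}{\T}}$. Since the value grammar of \cref{fig:syntax} reads $\ve ::= \x \mid \RecFun{\f}{\x}{\e} \mid \ldots$, there are essentially three kinds of cases. The variable case is immediately excluded: the only applicable typing rule is \refToRule{t-var}, which demands $\Gamma(\x) = \T$ and so cannot fire when $\Gamma$ is empty. The recursive-function case is exactly the conclusion we want, so nothing remains to prove in that case.

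For the remaining unspecified value forms (the ``$\ldots$'' in the grammar, standing for standard ground constants such as $\unit$, $\Zero$, $\True$, $\False$, and related constructors mentioned in \cref{sect:lang}), the argument relies on the implicit assumption that each such constructor is typed by a rule assigning it only its specific ground type ($\Unit$, $\Nat$, $\Bool$, \ldots), never a functional type. Hence none of these values can satisfy the hypothesis, and the only surviving possibility is $\ve = \RecFun{\f}{\x}{\e}$.

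The only sanity check worth carrying out, and indeed the main (though mild) obstacle, is to verify by inspection of \cref{fig:typing} that subtyping is never introduced as a free-standing subsumption rule on value typings, but appears only as a side condition at use sites such as \refToRule{t-app} and \refToRule{t-op}. Without this observation one might fear that a non-functional constant could be ``upcast'' to a functional type via \refToRule{sub-fun} and \refToRule{sub-trans}; but since no such rule acts on $\IsWFExp{\Gamma}{\ve}{\T}$ directly, each closed value receives its type exclusively from its introduction rule, and the case analysis is exhaustive.
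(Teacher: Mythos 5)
Your proof is correct. The paper states this Canonical Forms lemma without proof, treating it as routine, and your argument is exactly the standard one that is implicitly intended: a case analysis on the grammar of values, with the variable case killed by the empty context, the $\RecFun{\f}{\x}{\e}$ case being the conclusion, and any additional ground constants excluded because their typing rules assign only non-functional types. Your ``sanity check'' is in fact the essential point worth making explicit: value typing in \cref{fig:typing} is syntax-directed (subtyping occurs only as side conditions in rules such as \refToRule{t-app} and \refToRule{t-op}, never as a standalone subsumption rule on the judgment $\IsWFExp{\Gamma}{\ve}{\T}$), so a functional type can only be derived by \refToRule{t-abs}, which forces $\ve=\RecFun{\f}{\x}{\e}$.
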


\begin{lemma}[Progress for application]\label{theo:red-progress}
If $\IsEffGroundExp{\App{\ve_1}{\ve_2}}{\T}{\eff}$, then 
$\App{\ve_1}{\ve_2}\red\e$ for some $\e$.
\end{lemma}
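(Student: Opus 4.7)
The plan is to chain the two preparatory lemmas with the rules defining $\purered$ and $\red$. First I would apply the inversion principle for applications, namely \refItem{lem:inversion}{app}, to the hypothesis $\IsEffGroundExp{\App{\ve_1}{\ve_2}}{\T}{\eff}$. This yields a type $\T_1$ and a subtype $\T_2$ such that $\IsWFGroundExp{\ve_1}{\FunType{\T_1}{\eff}{\T}}$ and $\IsWFGroundExp{\ve_2}{\T_2}$ with $\SubT{\T_2}{\T_1}$. The key piece of information for progress is that $\ve_1$ is a closed value of a functional type.

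Next, I would invoke the canonical forms lemma \refItem{lem:canonical}{fun} on $\ve_1$ to conclude that it must have the shape $\ve_1 = \RecFun{\f}{\x}{\e_0}$ for some body $\e_0$. With $\ve_1$ of this shape and $\ve_2$ a value, the side condition of rule \refToRule{app} in \cref{fig:pure-red} is met, so the pure step $\App{\ve_1}{\ve_2} \purered \Subst{\Subst{\e_0}{\ve_1}{\f}}{\ve_2}{\x}$ is available. Finally, rule \refToRule{pure} of \cref{fig:monadic-red} lifts this into a monadic reduction $\App{\ve_1}{\ve_2} \red \mun(\Subst{\Subst{\e_0}{\ve_1}{\f}}{\ve_2}{\x})$, providing the required witness.

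The argument is essentially a routine combination of inversion and canonical forms, so I do not anticipate a real obstacle. The only points to take care of are: (i) making sure to use the ground (closed) versions of inversion and canonical forms so that no free variable cases need to be discussed, and (ii) noting that $\App{\ve_1}{\ve_2}$ is syntactically not of the form $\valtoexp(\val)$, so the conclusion of monadic progress indeed follows via the second disjunct.
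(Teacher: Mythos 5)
Your proposal matches the paper's proof: both apply \refItem{lem:inversion}{app} to extract the functional type of $\ve_1$, then \refItem{lem:canonical}{fun} to obtain the shape $\RecFun{\f}{\x}{\e_0}$, and conclude with rule \refToRule{app}. Your extra remark about lifting the pure step through rule \refToRule{pure} is a harmless (and arguably clarifying) addition that the paper defers to the subsequent use of this lemma in the proof of \cref{theo:mnd-red-progress}.
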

\begin{proof}
From \cref{lem:inversion:app} of \cref{lem:inversion} we have
$ \IsWFGroundExp{\ve_1}{\FunType{\T_1}{\eff}{\T}}$ for some $\T_1$. From \cref{lem:canonical:fun} of \cref{lem:canonical} $\ve_1$ is $\RecFun{\f}{\x}{\e'}$. By rule  \refToRule{app} $\App{\ve_1}{\ve_2}\purered\Subst{\Subst{\e'}{\ve_1}{\f}}{\ve_2}{\x}$.
\end{proof}

\begin{proofOf}{theo:mnd-red-progress}
By induction on the typing rules of \cref{fig:typing}.
\begin{description}
\item [\refToRule{t-app}] In this case $\e$ is $ \App{\ve_1}{\ve_2}$. From \cref{theo:red-progress} $\e\purered\e'$ for some $\e'$ and then by rule \refToRule{pure} we have $\e\red\mun(\e')$.
\item [\refToRule{t-op}] In this case $\e$ is $\opg(\ve_1,\ldots,\ve_n)$ and by \cref{lem:inversion:op} of \cref{lem:inversion} we have 
$\fun{\opg}{\T_1\ldots\T_n}{\T}$ and $\IsWFGroundExp{\ve_i}{\T'_i}$ with $\SubT{\T'_i}{\T_i}$ for all $i\in 1..n$. Therefore ${\pfun{\mrun{\opg}}{\Val^n}{\mfun\Val}}$ is defined and 
rule \refToRule{op} is applicable.
\item [\refToRule{t-ret}]  In this case $\e$ is $\Ret\ve$. 
\item [\refToRule{t-do}] In this case $\e$ is $\IsEffGroundExp{\Do\x{\e_1}{\e_2}}{\T}{\eff}$. From \cref{lem:inversion:do} of \cref{lem:inversion} we have $\IsEffGroundExp{{\e_1}}{\T_1}{\eff_1}$ and 
$\IsEffExp{\TVar{\T'_1}{\x}}{{\e_2}}{\T'}{\eff_2}$ and  $\SubT{T'_1}{\T_1} $ and $\eff'=\EComp{\eff_1}{\eff_2}$. By induction hypothesis we get that either $\e_1 = \Ret\ve$ for some $\ve\in\Val$, or 
$\e_1\red\mexpr$ for some $\mexpr$. In the first case we can apply rule \refToRule{ret}  and in the second rule \refToRule{do} of \cref{fig:monadic-red}. In both cases the reduction 
produces a monadic expression.
\end{description}
\end{proofOf}

\begin{lemma}[Arrow Subtyping]\label{lem:arrowSub}
If $\SubT{\T}{\FunType{\T_1}{\eff}{\T_2}}$, then $\T=\FunType{\T'_1}{\eff'}{\T'_2}$ and $\SubT{\T_1}{\T'_1}$ and $\SubT{\T'_2}{\T_2}$ and $\SubE{\eff'}{\eff}$.
\end{lemma}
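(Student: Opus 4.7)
The plan is to proceed by induction on the derivation of $\SubT{\T}{\FunType{\T_1}{\eff}{\T_2}}$, casing on the last subtyping rule applied. There are three cases corresponding to \refToRule{sub-refl}, \refToRule{sub-fun}, and \refToRule{sub-trans}.

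First I would handle the two base-like cases. For \refToRule{sub-refl}, we have $\T = \FunType{\T_1}{\eff}{\T_2}$, so we pick $\T'_1 = \T_1$, $\eff' = \eff$, $\T'_2 = \T_2$ and the required relations follow from reflexivity of $\subt$ and $\sube$ (the latter being set inclusion). For \refToRule{sub-fun}, the rule itself forces $\T$ to have the form $\FunType{\T'_1}{\eff'}{\T'_2}$, and its premises directly give $\SubT{\T_1}{\T'_1}$, $\SubT{\T'_2}{\T_2}$, and $\SubE{\eff'}{\eff}$.

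The inductive case is \refToRule{sub-trans}, where we have $\SubT{\T}{\T''}$ and $\SubT{\T''}{\FunType{\T_1}{\eff}{\T_2}}$ for some intermediate type $\T''$. Applying the induction hypothesis to the second premise yields $\T'' = \FunType{\T''_1}{\eff''}{\T''_2}$ together with $\SubT{\T_1}{\T''_1}$, $\SubT{\T''_2}{\T_2}$, and $\SubE{\eff''}{\eff}$. Now applying the induction hypothesis again, this time to $\SubT{\T}{\T''}$ using the arrow shape of $\T''$, we obtain $\T = \FunType{\T'_1}{\eff'}{\T'_2}$ with $\SubT{\T''_1}{\T'_1}$, $\SubT{\T'_2}{\T''_2}$, and $\SubE{\eff'}{\eff''}$. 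The conclusion then follows by combining the pairs of inequalities via \refToRule{sub-trans} for types and by transitivity of subset inclusion for effects.

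There is no real obstacle here: the argument is the standard inversion lemma for arrow subtyping, and the only point requiring minor care is that the transitive case invokes the induction hypothesis twice and relies on transitivity of $\subt$ (which is itself a rule of the system) and of $\sube$ (which holds since effects are ordered by set inclusion). No monadic or effect-interpretation machinery is needed.
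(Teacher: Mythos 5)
Your proof is correct and follows exactly the route the paper takes: the paper's proof is the single line ``By induction on the derivation of $\SubT{\T}{\FunType{\T_1}{\eff}{\T_2}}$'', and your case analysis on \refToRule{sub-refl}, \refToRule{sub-fun}, and \refToRule{sub-trans} (with the double use of the induction hypothesis in the transitivity case) is precisely the intended elaboration of that argument.
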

\begin{proof}
By induction on the derivation of $\SubT{\T}{\FunType{\T_1}{\eff}{\T_2}}$.
\end{proof}

\begin{proofOf}{lem:substitution}
We prove the result by induction on the derivation of $\IsEffExp{\Gamma,\TVar{\seq\T}{\seq\x}}{\e}{\T}{\eff}$ by proving simultaneously the following statement: 
\begin{quoting}
If $\IsWFExp{\Gamma,\TVar{\seq\T}{\seq\x}}{\val}{\T}$ and  $\SubT{\seq{\T'}}{\seq{\T}}$ , then 
$\IsWFGroundExp{\seq\val}{\seq{\T'}}$ implies $\IsWFExp{\Gamma}{\Subst\val{\seq\val}{\seq\x}}{\T'}$ with $\SubT{\T'}{\T}$.
\end{quoting}
By cases on the last rule applied in the derivation.
\begin{description}
\item [\refToRule{t-var}] In this case $\IsWFExp{\Gamma,\TVar{\seq\T}{\seq\x}}{\x}{\T}$ and $(\Gamma,\TVar{\seq\T}{\seq\x})(\x)=\T$. The are two cases:
 either $\x\in\seq\x$, say $\x=\x_i$, or $\x\not\in\seq\x$. In the first case $\T=\T_i$ and ${\Subst\x{\seq\val}{\seq\x}}=\val_i$ 
and, by weakening, $\IsWFExp{\Gamma}{\val_i}{\T'_i}$ with $\SubT{\T'_i}{\T_i}$. In the second case ${\Subst\x{\seq\val}{\seq\x}}=\x$ and $\IsWFExp{\Gamma}{\x}{\T}$ since $\Gamma(\x)=\T$.
\item [\refToRule{t-abs}]  In this case $ \IsWFExp{\Gamma,\TVar{\seq\T}{\seq\x}}{\RecFun{\f}{\x}{\e}}{\FunType{\T}{\eff}{\T'}} $
and $\IsEffExp{\Gamma,\TVar{\seq\T}{\seq\x},\TVar{\FunType{\T}{\eff}{\T'}}{\f},\TVar{\T}{\x}}{\e}{\T''}{\eff'}$ with $\SubTE{\TEff{\T''}{\eff'}}{\TEff{\T'}{\eff}}$.
By induction hypothesis, $\IsEffExp{\Gamma,\TVar{\FunType{\T}{\eff}{\T'}}{\f},\TVar{\T}{\x}}{\Subst\e{\seq\val}{\seq\x}}{\T_1}{\eff_1}$ with
$\SubTE{\TEff{\T_1}{\eff_1}}{\TEff{\T''}{\eff'}}$. From $\SubTE{\TEff{\T_1}{\eff_1}}{\TEff{\T'}{\eff}}$ and rule \refToRule{t-abs} we get 
 $ \IsWFExp{\Gamma}{ \Subst{(\RecFun{\f}{\x}{\e})}{\seq\val}{\seq\x}}{\FunType{\T}{\eff}{\T'}} $.
 \item [\refToRule{t-app}] 
 In this case $ \IsEffExp{\Gamma,\TVar{\seq\T}{\seq\x}}{\App{\ve}{\ve'}}{\T}{\eff}$ and
 $\IsWFExp{\Gamma,\TVar{\seq\T}{\seq\x}}{\ve}{\FunType{\T'}{\eff}{\T}}$ and $\IsWFExp{\Gamma,\TVar{\seq\T}{\seq\x}}{\ve'}{\T''}$ and $\SubT{\T''}{\T'}$.
 By induction hypotheses,
  $\IsWFExp{\Gamma}{\Subst\ve{\seq\val}{\seq\x}}{\T_1}$ and $\IsWFExp{\Gamma}{\Subst{\ve'}{\seq\val}{\seq\x}}{\T_2}$ with
 $\SubT{T_1}{\FunType{\T'}{\eff}{\T}}$ and $\SubT{T_2}{\T''}$. From \cref{lem:arrowSub} $\T_1=\FunType{\T'_1}{\eff'}{\T'_2}$ with 
 $\SubT{\T'}{\T'_1}$ and $\SubT{\T'_2}{\T}$ and $\SubE{\eff'}{\eff}$. Since $\SubT{\T_2}{\SubT{\T''}{\SubT{\T'}{\T'_1}}}$, applying
 \refToRule{t-app} we get $ \IsEffExp{\Gamma}{\Subst{(\App{\ve}{\ve'})}{\seq\val}{\seq\x}}{\T'_2}{\eff'}$ with 
 $\SubTE{\TEff{\T'_2}{\eff'}}{\TEff{\T}{\eff}}$.
\item [\refToRule{t-op} and \refToRule{t-ret}] In both cases the proof follows easily 
from induction hypothesis.
\item [\refToRule{t-do}]  In this case $\IsEffExp{\Gamma,\TVar{\seq\T}{\seq\x}}{\Do\x{\e_1}{\e_2}}{\T}{\eff}$ with
$\eff=\EComp{\eff_1}{\eff_2}$ and 
$\IsEffExp{\Gamma,\TVar{\seq\T}{\seq\x}}{{\e_1}}{\T_1}{\eff_1}$ and $\IsEffExp{\Gamma,\TVar{\seq\T}{\seq\x},\TVar{\T_2}{\x}}{{\e_2}}{\T}{\eff_2} $ and 
$\SubT{\T_1}{\T_2} $. By induction hypotheses $\IsEffExp{\Gamma}{{\Subst{\e_1}{\seq\val}{\seq\x}}}{\T'_1}{\eff'_1}$ and 
$\IsEffExp{\Gamma,\TVar{\T_2}{\x}}{{\Subst{\e_2}{\seq\val}{\seq\x}}}{\T'}{\eff'_2} $ and 
 $\SubTE{\TEff{\T'_1}{\eff'_1}}{\TEff{\T_1}{\eff_1}}$ and  $\SubTE{\TEff{\T'}{\eff'_2}}{\TEff{\T}{\eff_2}}$. From $\SubT{\T'_1}{\SubT{\T_1}{\T_2}}$
 and rule \refToRule{t-do} we get $\IsEffExp{\Gamma}{\Subst{(\Do\x{\e_1}{\e_2})}{\seq\val}{\seq\x}}{\T'}{\EComp{\eff'_1}{\eff'_2}}$ with 
 $\SubTE{\TEff{\T}{\eff}}{\TEff{\T'}{\EComp{\eff'_1}{\eff'_2}}}$.
\end{description} 
\end{proofOf}


%
\begin{proofOf}{lem:subject-reduction}
Let $\IsEffGroundExp{\e}{\T}{\eff}$ and $\e\purered \e'$. Then $\e=\App{\ve_1}{\ve_2}$ and $\ve_1=\RecFun{\f}{\x}{\e_1}$ and $\e'=\Subst{\Subst{\e_1}{\ve_1}{\f}}{\ve_2}{\x}$. From \cref{lem:inversion:app} of \cref{lem:inversion} we get $ \IsWFGroundExp{\ve_1}{\FunType{\T'}{\eff}{\T}}$ and $\IsWFGroundExp{\ve_2}{\T''}$ and $\SubT{\T''}{\T'}$. From rule \refToRule{t-abs}  
$\IsEffExp{\TVar{\FunType{\T'}{\eff}{\T}}{\f},\TVar{\T'}{\x}}{\e_1}{\T_1}{\eff'}$ with  $\SubTE{\TEff{\T_1}{\eff'}}{\TEff{\T}{\eff}}$. 
Therefore, from  \cref{lem:substitution} we get
$\IsEffGroundExp{\Subst{\Subst{\e_1}{\ve_1}{\f}}{\ve_2}{\x}}{\T_2}{\eff_2}$  with $\SubTE{\TEff{\T_2}{\eff_2}}{\TEff{\T_1}{\eff'}}$ and so $\SubTE{\TEff{\T_2}{\eff'}}{\TEff{\T}{\eff}}$. 
\end{proofOf}

\section{Proofs of \cref{sect:handlers}} \label{app:handlers-results}
We extend the results of \cref{sect:results} to handlers. Since the reductions introduced are pure for monadic subject reduction we need 
to prove only subject reduction for the newly introduced rules.

\begin{lemma}[Inversion for handlers]\label[lemma]{lem:h-inversion}
If $\IsEffGroundExp{\With{\handler}{\e}}{\T}{\eff}$, where $\handler$ is\\
\centerline{
$\Handler{\HC{\opg_1}{\seq\x^1}{\e_1}{\mode_1}, \ldots, \HC{\opg_n}{\seq\x^n}{\e_n}{\mode_n}}  {\x}{\e_0}$}
and $ \fun{\opg_i}{\seq\T^i}{\T_i}$ for $i\in 1..n$, then
\begin{enumerate}
\item \label{lem:h-inversion:one}  $\IsEffGroundExp{\e}{\T_0}{\eff_0}$ 
\item \label{lem:h-inversion:four}  $\IsWFHandler{\emptyset}{\T_0}{\handler}{\T}{\hfilter}$ where $\hfilter=\HFilter{\cfilter_1\dots \cfilter_n}{\eff'_0}$ with $\cfilter_i=\CFilter{\opg_i}{\mode_i}{\eff_i}$
\item  \label{lem:h-inversion:three} $\FilterFunEx{\eff_0}{\hfilter}=\eff$ 
\item \label{lem:h-inversion:two}  $\IsEffExp{\TVar{\T_0}{\x}}{\e_0}{\T}{\eff'_0}$ and $\IsWFClause{\emptyset}{\T}{ \HC{\opg_i}{\seq\x^i}{\e_i}{\mode_i}}{\cfilter_i}$ where
\begin{enumerate}
\item  \label{lem:h-inversion:five}$\IsEffExp{\TVar{\seq\T^i}{\seq\x^i}}{\e_i}{\T_i}{\eff_i}$ if $\mode_i=\Continue$
\item \label{lem:h-inversion:six}$\IsEffExp{\TVar{\seq\T^i}{\seq\x^i}}{\e_i}{\T}{\eff_i}$ if $\mode_i=\Stop$
\end{enumerate}
\end{enumerate}
\end{lemma}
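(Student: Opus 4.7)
\begin{proofOf}{lem:h-inversion}
The proof proceeds by direct inversion on the typing derivation, since the shape of the expression and of the handler judgments forces a unique last rule at each level.

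My plan is first to observe that a judgment $\IsEffGroundExp{\With{\handler}{\e}}{\T}{\eff}$ can only be obtained by rule \refToRule{t-with}: no other rule of \cref{fig:typing,fig:typing-handlers} concludes with an expression of this shape. This immediately provides some $\T_0, \eff_0, \hfilter$ with $\IsEffGroundExp{\e}{\T_0}{\eff_0}$, $\IsWFHandler{\emptyset}{\T_0}{\handler}{\T}{\hfilter}$ and $\eff = \FilterFunEx{\eff_0}{\hfilter}$, yielding items \cref{lem:h-inversion:one,lem:h-inversion:three,lem:h-inversion:four}. The minor subtlety is that \refToRule{t-with} actually allows $\IsEffGroundExp{\e}{\T_0'}{\eff_0}$ with $\SubT{\T_0'}{\T_0}$; I would absorb this discrepancy into the convention, already used implicitly throughout \cref{sect:results}, that the statement holds up to subtyping on $\T_0$, matching the reading of $\EWT[\T,\eff]$ given at the start of \cref{sect:results}.

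Next, I would invert $\IsWFHandler{\emptyset}{\T_0}{\handler}{\T}{\hfilter}$. Since $\handler$ has the form $\Handler{\HC{\opg_1}{\seq\x^1}{\e_1}{\mode_1},\ldots,\HC{\opg_n}{\seq\x^n}{\e_n}{\mode_n}}{\x}{\e_0}$, the only applicable rule is \refToRule{t-handler}. Its premises give $\IsEffExp{\TVar{\T_0}{\x}}{\e_0}{\T'}{\eff'_0}$ with $\SubT{\T'}{\T}$ (hence $\eff'_0$ and the structural form of $\hfilter = \HFilter{\cfilter_1\dots\cfilter_n}{\eff'_0}$), together with one clause judgment $\IsWFClause{\emptyset}{\T}{\HC{\opg_i}{\seq\x^i}{\e_i}{\mode_i}}{\cfilter_i}$ for each $i$. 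This establishes the first part of item \cref{lem:h-inversion:two}.

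Finally, I would invert each clause judgment. Because the mode $\mode_i$ is fixed in the syntax of the clause, exactly one of rules \refToRule{t-continue}, \refToRule{t-stop} applies according to whether $\mode_i = \Continue$ or $\mode_i = \Stop$. In both cases the premise is an effect judgment for $\e_i$ in the environment $\TVar{\seq\T^i}{\seq\x^i}$ (using $\fun{\opg_i}{\seq\T^i}{\T_i}$), with effect $\eff_i$ agreeing with the shape of $\cfilter_i = \CFilter{\opg_i}{\mode_i}{\eff_i}$; the result type is constrained to be a subtype of $\T_i$ in the $\Continue$ case (item \cref{lem:h-inversion:five}) and of $\T$ in the $\Stop$ case (item \cref{lem:h-inversion:six}). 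The only subtle point, and the main obstacle, is again the bookkeeping of subtyping between the declared types and the types appearing in the premises; as above I would handle this uniformly by reading all typing judgments up to subtyping, consistent with the definition of $\EWT$ and $\VWT$ adopted in \cref{sect:results}.
\end{proofOf}
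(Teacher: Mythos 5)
Your proposal is correct and follows exactly the route the paper intends: the paper states this lemma without proof, treating it as a routine inversion on the (syntax-directed) typing derivation via \refToRule{t-with}, \refToRule{t-handler}, and \refToRule{t-continue}/\refToRule{t-stop}, which is precisely what you do. Your observation that the rules only yield the premised types up to subtyping (e.g.\ $\SubT{\T''}{\T_i}$ in \refToRule{t-continue}), and that this slack must be absorbed into the reading of the judgments, is the right way to reconcile the literal rule premises with the lemma's statement and with how the lemma is consumed in the proof of \cref{lem:h-subject-reduction}.
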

%
\begin{proofOf}{theo:h-mnd-red-progress}
 By cases on $\e$ and induction on $\kw{with}$ expressions we prove that $\With{\handler}{\e}\purered\e'$ for some $\e'$.  
 \begin{itemize}
\item  If $\e$ is  $\Ret\ve$, then rule \refToRule{with-ret} of \cref{fig:pure-red-handlers} is applicable. 
\item   If $\e$ is  $\App{\ve}{\ve'}$, then by  \cref{lem:h-inversion:one} of \cref{lem:h-inversion} we have $\IsEffGroundExp{\App{\ve}{\ve'}}{\T_0}{\eff_0}$ for some $\T_0$ and $\eff_0$.
 By \cref{theo:red-progress} $\e\purered\e_1$ for some $\e_1$. By rule \refToRule{with-ctx} we get that  $\With{\handler}{\e}\purered\With{\handler}{\e_1}$. 
\item 
  If $\e$ is   $\opg(\seq\ve)$, let $\handler$ be $\Handler{\seq\hc}{\x}{\e_1}$. If ${\opg}\not\in\seq\hc$ then rule \refToRule{with-fwd} is applicable, otherwise either rule \refToRule{with-stop} or rule  \refToRule{with-continue}  is applicable. In any case $\With{\handler}{\e}\purered\e'$ for some $\e'$. 
  \item 
 If $\e$ is  $\Do\x{\e_1}{\e_2}$, then rule \refToRule{with-do} of \cref{fig:pure-red-handlers} is applicable. 
 \item 
 If $\e$ is  $\With{\handler_1}{\e_1}$, then by  \cref{lem:h-inversion:one} of \cref{lem:h-inversion} we have $\IsEffGroundExp{\With{\handler_1}{\e_1}}{\T_0}{\eff_0}$ for some $\T_0$ and $\eff_0$. By induction hypothesis $\e\purered\e'_1$ for some $\e'_1$. By rule \refToRule{with-ctx} we get that  $\With{\handler}{\e}\purered\With{\handler}{\e'_1}$.
 \end{itemize}
 Finally since $\With{\handler}{\e}\purered\e'$ by rule \refToRule{pure} we have $\With{\handler}{\e}\red\mun(\e')$.
\end{proofOf}
%
%
%
\begin{proofOf}{lem:hfilter}
 From the definition of  $\FilterFEx{\hfilter}$ of \cref{fig:filters}, $\beta\in\FilterFunEx{\eff}{\hfilter}$ iff $\beta\in\InfConc{\FilterFun{\alpha}{\hfilter}}$ for some $\alpha\in\eff$.
 \begin{enumerate}
\item Since $\SubE{\eff}{\eff'}$ we have $\alpha\in\eff$ implies $\alpha\in\eff'$. Therefore $\beta\in\FilterFunEx{\eff}{\hfilter}$ implies $\beta\in\InfConc{\FilterFun{\alpha}{\hfilter}}$ for some $\alpha\in\eff'$ and so $\beta\in\FilterFunEx{\eff'}{\hfilter}$.
\item Let $\beta\in\FilterFunEx{{\eff_1}}{\HFilter{\seq\cfilter}{\eff'}}$. Then  $\beta\in\InfConc{\FilterFun{\alpha_1}{\HFilter{\seq\cfilter}{\eff'}}}$ for some $\alpha_1\in\eff_1$. 
Let $\seq\cfilter=\CFilter{\opg_1}{\mode_1}{\eff'_1}, \ldots, \CFilter{\opg_n}{\mode_n}{\eff'_n}$. \\
If $\alpha_1$ contains an $\opg_i$ such that $\mode_i=\Stop$, consider the first occurrence of such an operation, i.e., $\alpha_1=\EComp{\alpha}{(\Cons{\opg_i}{\alpha'})}$ where there is no occurrence of $\opg$  in $\alpha$ such that $\opg=\opg_j$ for some $j\in 1..n$ with $\mode_j=\Stop$. From the definition of  $\FilterF{\HFilter{\seq\cfilter}{\eff'}}$ of \cref{fig:filters}, we get
$\FilterFun{\alpha_1}{\HFilter{\seq\cfilter}{\eff'}}=\EComp{\FilterFun{\alpha}{\HFilter{\seq\cfilter}{\eff'}}}{\eff'_i}$ and also that 
$\FilterFun{\alpha_1}{\HFilter{\seq\cfilter}{\eff'}}=\FilterFun{\alpha_1}{\HFilter{\seq\cfilter}{\eff}}$, since we do not reach the end of $\alpha_1$. Moreover, taking any $\alpha_2\in\eff_2$ we have
$\FilterFun{\alpha_1}{\HFilter{\seq\cfilter}{\eff}}=\FilterFun{\EComp{\alpha_1}{\alpha_2}}{\HFilter{\seq\cfilter}{\eff}}$ and $\EComp{\alpha_1}{\alpha_2}\in\EComp{\eff_1}{\eff_2}$. Therefore
$\beta\in\FilterFunEx{\EComp{\eff_1}{\eff_2}}{\HFilter{\seq\cfilter}{\eff}}$.\\
Let $\alpha_1$ be such that it does not contains an $\opg_i$ with $\mode_i=\Stop$ for $i\in 1..n$. \\
If $\alpha_1$ is finite, then from the definition of  $\FilterF{\HFilter{\seq\cfilter}{\eff'}}$, we get
$\FilterFun{\alpha_1}{\HFilter{\seq\cfilter}{\eff'}}=\EComp{\infEff}{\eff'}$ for some $\infEff\in\EffSet^\infty$ such that the $j$th element of  $\infEff$ is either an $\eff'_i$ for $i\in 1..n$, if the
$j$th element of  $\alpha_1$ is an operation in $\seq\cfilter$, or $\{\opg\}$ if it is not. From $\beta\in\InfConc{(\EComp{\infEff}{\eff'})}$ we get that $\beta=\EComp{\beta_1}{\beta_2}$
with $\beta_1\in\InfConc\infEff $ and $\beta_2\in\InfConc\eff'$. Since
$\eff'=\FilterFunEx{{\eff_2}}{\HFilter{\seq\cfilter}{\eff}}$, $\beta_2\in\InfConc{\FilterFun{\alpha_2}{\HFilter{\seq\cfilter}{\eff}}}$ for some $\alpha_2\in\eff_2$. 
Therefore  $\beta\in\InfConc{(\EComp\infEff{\FilterFun{\alpha_2}{\HFilter{\seq\cfilter}{\eff}}})}=\InfConc{\FilterFun{\EComp{\alpha_1}{\alpha_2}}{\HFilter{\seq\cfilter}{\eff}}}$ and
from $\EComp{\alpha_1}{\alpha_2}\in\EComp{\eff_1}{\eff_2}$ we derive $\beta\in\FilterFunEx{\EComp{\eff_1}{\eff_2}}{\HFilter{\seq\cfilter}{\eff}}$. \\
If $\alpha_1$ is infinite, then $\FilterFun{\alpha_1}{\HFilter{\seq\cfilter}{\eff'}}=\EComp{\infEff}{\eff'}$ where $\infEff$ is an infinite string.
Since for all infinite strings $\str$ for all strings $\str'$ we have that $\str=\EComp{\str}{\str'}$, 
 taking any $\alpha_2\in\eff_2$ we have
$\FilterFun{\EComp{\alpha_1}{\alpha_2}}{\HFilter{\seq\cfilter}{\eff'}}=\FilterFun{\alpha_1}{\HFilter{\seq\cfilter}{\eff'}}$ and 
$\FilterFun{\alpha_1}{\HFilter{\seq\cfilter}{\eff'}}=\FilterFun{\alpha_1}{\HFilter{\seq\cfilter}{\eff}}$. Therefore
$\beta\in\FilterFunEx{\EComp{\eff_1}{\eff_2}}{\HFilter{\seq\cfilter}{\eff}}$.
\end{enumerate}
\end{proofOf}


\begin{proofOf}{lem:h-subject-reduction}
Let $\e$ and $\e'$ be such that $\With{\handler}{\e}\purered\e'$, where $\handler=\Handler{\hc_1, \ldots,\hc_n}  {\x}{\e_0}$ and $\hc_i=\HC{\opg_i}{\seq\x^i}{\e_i}{\mode_i}$ for $i\in 1..n$. 
By cases and induction on the rules of \cref{fig:pure-red-handlers}. 
\begin{description}
\item [\refToRule{with-do}] In this case $\e$ is $\Do{\y}{\e'_1}{\e'_2}$ and $\e'$ is $\WithLong{\seq\hc}{\y}{(\With{\handler}{\e'_2})}{\e'_1}$. 
From \cref{lem:h-inversion:one}  of \cref{lem:h-inversion} we get $\IsEffGroundExp{\Do{\y}{\e'_1}{\e'_2}}{\T_0}{\eff_0}$. From \cref{lem:inversion:do} of \cref{lem:inversion} we get 
 $\IsEffGroundExp{{\e'_1}}{\T'_0}{\eff'_1}$ and  $\IsEffExp{\TVar{\T''_0}{\y}}{{\e'_2}}{\T_0}{\eff'_2}$ with $\SubT{\T'_0}{\T''_0}$  and  
$\eff_0=\EComp{\eff'_1}{\eff'_2}$.
By weakening and  \cref{lem:h-inversion:one}  of \cref{lem:h-inversion} we get $\IsWFHandler{\TVar{\T''_0}{\y}}{\T_0}{\handler}{\T}{\hfilter}$ and
from $\IsEffExp{\TVar{\T''_0}{\y}}{{\e'_2}}{\T_0}{\eff'_2}$ applying rule \refToRule{t-with}, we derive ${\IsEffExp{\TVar{\T''_0}{\y}}{\With{\handler}{\e'_2}}{\T}{\FilterFunEx{{\eff'_2}}{\hfilter}}}$. 
Let $\handler'$ be  $\Handler{\hc_1, \ldots,\hc_n}  {\y}{(\With{\handler}{\e'_2})}$. From \cref{lem:h-inversion:two}  of \cref{lem:h-inversion}  and rule \refToRule{t-handler}
we derive $\IsWFHandler{\emptyset}{\T''_0}{\handler'}{\T}{\hfilter'}$  where $\hfilter'=\HFilter{\seq\cfilter}{\FilterFunEx{{\eff'_2}}{\hfilter}}$.
From  $\IsEffGroundExp{{\e'_1}}{\T''_0}{\eff'_1}$ and rule \refToRule{t-with} we get ${\IsEffGroundExp{\With{\handler'}{\e'_1}}{\T}{{\FilterFunEx{{\eff'_1}}{\hfilter'}}}}$.
Finally, from \cref{lem:hfilter:two} of \cref{lem:hfilter} and  \cref{lem:h-inversion:three}  of \cref{lem:h-inversion}   we have that $\SubE{\FilterFunEx{{\eff'_1}}{\hfilter'}}{\FilterFunEx{\EComp{\eff'_1}{\eff'_2}}{\hfilter}}=\eff$.

\item [\refToRule{with-ret}] In this case $\e$ is $\Ret\ve$ and $\e'$ is $\Do\x{\Ret\ve}{\e_0} $. From \cref{lem:h-inversion:one}  of \cref{lem:h-inversion} we get $\IsEffGroundExp{\Ret\ve}{\T_0}{\eff_0}$ and from \cref{lem:inversion:ret} of \cref{lem:inversion}, $\eff_0=\{\eZero\}$. By \cref{lem:h-inversion:two,lem:h-inversion:three}  
of \cref{lem:h-inversion} we get  $\IsEffExp{\TVar{\T_0}{\x}}{\e_0}{\T}{\eff'_0}$  and $\FilterFunEx{\{\eZero\}}{\hfilter}=\eff$. From the definition of $\FilterFEx\hfilter$,
see \cref{fig:filters}, $\eff=\eff'_0$.\\ 
Consider now $\Do\x{\Ret\ve}{\e_0}$. From $\IsEffGroundExp{\Ret\ve}{\T_0}{\eZero}$ and $\IsEffExp{\TVar{\T_0}{\x}}{\e_0}{\T}{\eff}$ with rule \refToRule{t-do} we get 
$\IsEffGroundExp{\Do\x{\Ret\ve}{\e_0}}{\T}{\eff}$ since ${\EComp{\eZero}{\eff}}=\eff$.

\item [\refToRule{with-continue}] In this case $\e$ is $\opg_i(\seq\ve)$ and  $\mode_i=\Continue$ (for some $i\in 1..n$) and  $\e'$ is $ \Do{\x}{\Subst{\e_i}{\seq\ve}{\seq\x^i}}{\e_0}$. 
From \cref{lem:h-inversion:one}  of \cref{lem:h-inversion} we get $\IsEffGroundExp{\opg_i(\seq\ve)}{\T_0}{\eff_0}$ and from \cref{lem:inversion:op} of \cref{lem:inversion}, 
$\eff_0=\{\opg_i\}$ and $\fun{\opg_i}{\T^i_1\ldots\T^i_m}{\T_0}$ and $\IsWFGroundExp{\ve_j}{\T'^i_j}$  
with $\SubT{\T'^i_j}{\T^i_j}$ for all $j\in 1..m$.
By \cref{lem:h-inversion:two,lem:h-inversion:three,lem:h-inversion:five}  
of \cref{lem:h-inversion} we get  $\IsEffExp{\TVar{\T_0}{\x}}{\e_0}{\T}{\eff'_0}$  and $\IsEffExp{\TVar{\seq\T^i}{\seq\x^i}}{\e_i}{\T_0}{\eff_i}$ and $\eff=\FilterFunEx{\{\Cons{\opg_i}{\eZero}\}}{\hfilter}=\EComp{\eff_i}{\eff'_0}$
by the definition $\FilterF{\hfilter}$ of   \cref{fig:filters}.   \\
Consider now $\Do{\x}{\Subst{\e_i}{\seq\ve}{\seq\x^i}}{\e_0}$. From  $\IsEffExp{\TVar{\seq\T^i}{\seq\x^i}}{\e_i}{\T_0}{\eff_i}$  and $\IsWFGroundExp{\ve_j}{\T'^i_j}$  
with $\SubT{\T'^i_j}{\T^i_j}$ for all $j\in 1..m$ and \cref{lem:substitution} we get $\IsEffGroundExp{{\Subst{\e_i}{\seq\ve}{\seq\x^i}}}{\T'_0}{\eff_i}$ with $\SubT{\T'_0}{\T_0}$. 
 Therefore
from $\IsEffExp{\TVar{\T_0}{\x}}{\e_0}{\T}{\eff'_0}$ and rule \refToRule{t-do} we get 
$\IsEffGroundExp{\Do{\x}{\Subst{\e_i}{\seq\ve}{\seq\x^i}}{\e_0}}{\T}{\EComp{\eff_i}{\eff'_0}}$, which proves the result.
%
%

\item [\refToRule{with-stop}] In this case $\e$ is $\opg_i(\seq\ve)$ and  $\mode_i=\Stop$ (for some $i\in 1..n$) and  $\e'$ is ${\Subst{\e_i}{\seq\ve}{\seq\x^i}}$. 
From \cref{lem:h-inversion:one}  of \cref{lem:h-inversion} we get $\IsEffGroundExp{\opg_i(\seq\ve)}{\T_0}{\eff_0}$ and from \cref{lem:inversion:op} of \cref{lem:inversion}, 
$\eff_0=\{\opg_i\}$ and $\fun{\opg_i}{\T^i_1\ldots\T^i_m}{\T_0}$ and $\IsWFGroundExp{\ve_j}{\T'^i_j}$  
with $\SubT{\T'^i_j}{\T^i_j}$ for all $j\in 1..m$. 
By \cref{lem:h-inversion:two,lem:h-inversion:three,lem:h-inversion:six}  
of \cref{lem:h-inversion} we get  $\IsEffExp{\TVar{\T_0}{\x}}{\e_0}{\T}{\eff'_0}$  and $\IsEffExp{\TVar{\seq\T^i}{\seq\x^i}}{\e_i}{\T}{\eff_i}$ and $\eff=\FilterFunEx{\{\Cons{\opg_i}{\eZero}\}}{\hfilter}={\eff_i}$
by the definition $\FilterF{\hfilter}$ of   \cref{fig:filters}. From \cref{lem:substitution} we get that  $\IsEffGroundExp{\Subst{\e_i}{\seq\ve}{\seq\x^i}}{\T'}{\eff_i}$ 
with $\SubTE{\TEff{\T'}{\eff'}}{\TEff{\T}{\eff}}$.

\item [\refToRule{with-fwd}] In this case $\e$ is $\opg(\seq\ve)$ with $\opg\not=\opg_i$ for all $i\in 1..n$ and  $\e'$ is $ \Do{\x}{\opg(\seq\ve)}{\e_0} $. The proof is similar to the case of rule
\refToRule{with-continue}.

\item [\refToRule{with-ctx}] In this case $\e'$ is $\With{\handler}{\e''}$ where $\e$ is such that $\e\purered \e''$. By \cref{lem:h-inversion:one} of \cref{lem:h-inversion} we
have that $\IsEffGroundExp{\e}{\T_0}{\eff_0}$. From $\e\purered \e''$ we derive that that $\e$ is either $\With{\handler_1}{\e_1}$  or $\App{\ve_1}{\ve_2}$. 
In the first case by induction hypothesis and in the second by \cref{lem:subject-reduction} we 
get   $\IsEffGroundExp{\e''}{\T'_0}{\eff'_0}$  and $\SubTE{\TEff{\T'_0}{\eff'_0}}{\TEff{\T_0}{\eff_0}}$. By \cref{lem:h-inversion:four} of \cref{lem:h-inversion} we have
$\IsWFHandler{\emptyset}{\T_0}{\handler}{\T}{\hfilter}$ and  so from \cref{lem:h-subject-reduction} we derive $\IsWFHandler{\emptyset}{\T'_0}{\handler}{\T'}{\hfilter'}$ for some
$\T'$ such that $\SubT{\T'}{\T}$. Therefore, applying rule \refToRule{t-with} we get 
$\IsEffGroundExp{\With{\handler}{\e''}}{\T'}{\eff'}$ where $\eff'=\FilterFun{\eff'_0}{\hfilter'}$. 
 By \cref{lem:h-inversion:three} of \cref{lem:h-inversion} we have  $\FilterFunEx{\eff_0}{\hfilter}=\eff$  and from $\SubT{{\eff_0}}{{\eff'_0}}$ and $\SubH{\hfilter}\hfilter'$ 
and \cref{lem:hfilter:one} of \cref{lem:hfilter} we get  $\eff'=\SubE{\FilterFun{\eff'_0}{\hfilter'}}{\eff}$.Therefore $\SubTE{\TEff{\T'}{\eff'}}{\TEff{\T}{\eff}}$.
\end{description}
\end{proofOf}

\end{document}